\numberwithin{equation}{section}
\declaretheoremstyle[bodyfont=\it,qed=\qedsymbol]{noproofstyle}
\declaretheorem[numberlike=equation]{observation}
\declaretheorem[name=Observation,numbered=no]{observation*}
\declaretheorem[numberlike=equation]{fact}
\declaretheorem[numberlike=equation]{theorem}
\declaretheorem[name=Theorem,numbered=no]{theorem*}
\declaretheorem[numberlike=equation]{lemma}
\declaretheorem[name=Lemma,numbered=no]{lemma*}
\declaretheorem[numberlike=equation]{corollary}
\declaretheorem[name=Corollary,numbered=no]{corollary*}
\declaretheorem[name=Proposition,numbered=no]{proposition*}
\declaretheorem[numberlike=equation]{claim}
\declaretheorem[name=Claim,numbered=no]{claim*}
\declaretheorem[name=Conjecture,numbered=no]{conjecture*}
\declaretheorem[name=Question,numbered=no]{question*}
\declaretheoremstyle[bodyfont=\it,qed=$\lozenge$]{defstyle} 
\declaretheorem[numberlike=equation,style=defstyle]{definition}
\declaretheorem[unnumbered,name=Definition,style=defstyle]{definition*}
\declaretheorem[unnumbered,name=Example,style=defstyle]{example*}
\declaretheorem[numberlike=equation,style=defstyle]{notation}
\declaretheorem[unnumbered,name=Notation=defstyle]{notation*}
\declaretheorem[unnumbered,name=Construction,style=defstyle]{construction*}
\declaretheorem[unnumbered,name=Remark,style=defstyle]{remark*}
\newcommand{\shortECCC}[2]{\texttt{\href{http://eccc.hpi-web.de/report/\ifnumcomp{#1}{>}{93}{19}{20}#1/#2/}{eccc:TR#1-#2}}}
\newcommand{\parseECCC}[1]{
\StrSubstitute{#1}{TR}{}[\tmpstring]%
\IfSubStr{\tmpstring}{/}{ 
\StrBefore{\tmpstring}{/}[\ecccyear]%
\StrBehind{\tmpstring}{/}[\ecccreport]%
}{
\StrBefore{\tmpstring}{-}[\ecccyear]%
\StrBehind{\tmpstring}{-}[\ecccreport]%
}%
\shortECCC{\ecccyear}{\ecccreport}}
\date{}
\title{Quadratic Lower Bounds for Algebraic Branching Programs and Formulas}
\author{Prerona Chatterjee\thanks{Tata Institute of Fundamental Research, Mumbai, India. Email: \texttt{prerona.chatterjee.tifr@gmail.com}. Research supported by the Department of Atomic Energy, Government of India, under project no. 12-R\&D-TFR-5.01-0500.}
\and Mrinal Kumar\thanks{Dept. of Computer Science \& Engineering, IIT Bombay, India. Email:
\texttt{mrinal@cse.iitb.ac.in}. A part of this work was done during a postdoctoral stay at University of Toronto.}
\and Adrian She\thanks{Dept. of Computer Science, University of Toronto. Email : \texttt{ashe@cs.toronto.edu}.}
\and Ben Lee Volk \thanks{Center for the Mathematics of Information, California Institute of Technology, USA. Email:\texttt{benleevolk@gmail.com}.}
}
\begin{document}
	
\maketitle

\begin{abstract}
We show that any Algebraic Branching Program (ABP) computing the polynomial $\sum_{i = 1}^n x_i^n$ has at least $\Omega(n^2)$ vertices. This improves upon the lower bound of $\Omega(n\log n)$, which follows from the classical result of Baur and Strassen \cite{Str73, BS83}, and extends the results in \cite{K19}, which showed a quadratic lower bound  for \emph{homogeneous} ABPs computing the same polynomial.

Our proof relies on a notion of depth reduction which is reminiscent of similar statements in the context of matrix rigidity, and shows that any small enough ABP computing the polynomial $\sum_{i=1}^n x_i^n$ can be depth reduced to essentially a homogeneous ABP of the same size which computes the polynomial $\sum_{i = 1}^n x_i^n + \varepsilon(\vx)$, for a structured ``error polynomial'' $\varepsilon(\vx)$. To complete the proof, we then observe that the lower bound in \cite{K19} is robust enough and continues to hold for all polynomials $\sum_{i = 1}^n x_i^n + \varepsilon(\vx)$, where  $\varepsilon(\vx)$ has the appropriate structure. 

We also use our ideas to show an $\Omega(n^2)$ lower bound of the size of algebraic formulas computing the elementary symmetric polynomial of degree $0.1n$ on $n$ variables. This is a slight improvement upon the prior best known formula lower bound (proved for a different polynomial) of $\Omega(n^2/\log n)$ \cite{Nec66, K85, SY10}. Interestingly, this lower bound is asymptotically better than $n^2/\log n$, the strongest lower bound that can be proved using previous methods. This lower bound also matches the upper bound, due to Ben-Or, who showed that elementary symmetric polynomials can be computed by algebraic formula (in fact depth-$3$ formula) of size $O(n^2)$. Prior to this work, Ben-Or's construction was known to be optimal only for algebraic formulas of depth-$3$ \cite{SW01}.   
\end{abstract}

\newpage

\section{Introduction}

Proving that there are explicit polynomials which are hard to compute is the template of many open problems in algebraic complexity theory. Various instances of this problem involve different definitions of explicitness, hardness and computation.

In the most general form, this is the well known $\VP$ vs. $\VNP$ question, which asks whether every ``explicit'' polynomial has a polynomial-size algebraic circuit. An algebraic circuit is a very natural (and the most general) algebraic computational model. Informally, it is a computational device which is given a set of indeterminates $\set{x_1, \ldots, x_n}$, and it can use additions and multiplications (as well as field scalars) to compute a polynomial $f \in \F[x_1, \ldots, x_n]$. The complexity of the circuit is then measured by the number of operations the circuit performs.

It is trivial to give an explicit $n$-variate polynomial which requires circuits of size $\Omega(n)$. It is also not hard to show that a degree-$d$ polynomial requires circuits of size $\Omega(\log d)$, since the degree can at most double in each operation. Thus, one trivially obtains a $\max\{n, \log d\} = \Omega(n + \log d)$ lower bound for an $n$-variate degree-$d$ polynomial.

A major result of Baur and Strassen \cite{Str73, BS83} gives an explicit $n$-variate degree-$d$ polynomial which requires circuits of size at least $\Omega(n \cdot \log d)$. On the one hand, this is quite impressive since  when $d=\poly(n)$, this gives lower bound which is super-linear in $n$. Such lower bounds for explicit functions in the analogous model of \emph{boolean} circuits are a long-standing and important open problem in boolean circuit complexity. On the other hand, this lower bound is  barely super-linear, whereas ideally one would hope to prove super-polynomial or even exponential lower bounds (indeed, it can be proved that ``most''  polynomials require circuits of size exponential in $n$).

Despite decades of work, this lower bound has not been improved, even though it has been reproved (using different techniques \cite{Smolensky97a, BenOr83}). Most of the works thus deal with restricted models of algebraic computation. For some, there exist exponential or at least super-polynomial lower bounds. For other, more powerful models, merely improved polynomial lower bound. We refer the reader to \cite{S15} for a comprehensive survey of lower bounds in algebraic complexity.


One such restricted model of computation for which we have better lower bounds is \emph{algebraic formulas}. Formulas are simply circuits whose underlying graph is a tree. Kalorkoti \cite{K85} has shown how to adapt Nechiporuk's method \cite{Nec66}, originally developed for boolean formulas, to prove an $\Omega(n^2/\log n)$ lower bound for an $n$-variate polynomial. \footnote{Since this exact statement appears to be folklore and is not precisely stated in Kalorkoti's paper, we give some more details about this result in \autoref{sec:previous}.}

\subsection{Algebraic Branching Programs}\label{sec:abp}
\emph{Algebraic Branching Programs} (ABPs, for short), defined below, are an intermediate model between algebraic formulas and algebraic circuits. To within polynomial factors, algebraic formulas can be simulated by ABPs, and ABPs can be simulated by circuits. It is believed that each of the reverse transformations requires a super-polynomial blow-up in the size (for some restricted models of computation, this is a known fact \cite{N91, R06, RY08, DMPY12, HY16}).

Polynomial families which can be efficiently computed by algebraic branching programs form the complexity class $\mathsf{VBP}$, and the determinant is a complete polynomial for this class under an appropriate notion of reductions. Thus, the famous Permanent vs.\ Determinant problem, unbeknownst to many, is in fact equivalent to showing super-polynomial lower bound for ABPs. In this paper, we focus on the question of proving lower bounds on the size of algebraic branching programs for explicit polynomial families. We start by formally defining an algebraic branching program. 

\begin{definition}[Algebraic Branching Programs]\label{def:abp}
	An Algebraic Branching Program (ABP) is a layered graph where each edge is labeled by an affine linear form and the first and the last layer have one vertex each, called the ``start'' and the ``end'' vertex respectively. 
		
  The polynomial computed by an ABP is equal to the sum of the weights of all paths from the start vertex to the end vertex in the ABP, where the weight of a path is equal to the product of the labels of all the edges on it. 
  
  The size of an ABP is the number of vertices in it. 
\end{definition}

While \autoref{def:abp} is quite standard, there are some small variants of it in the literature which we now discuss. These distinctions make no difference as far as super-polynomial lower bounds are concerned, since it can be easily seen that each variant can be simulated by the other to within polynomial factors, and thus the issues described here are usually left unaddressed. However, it seems that we are very far from proving super-polynomial lower bounds for general algebraic branching programs, and in this paper we focus on proving polynomial (yet still super-linear) lower bounds. In this setting, those issues do affect the results.

\paragraph*{Layered vs.\ Unlayered.} In \autoref{def:abp}, we have required the graph to be layered. We also consider in this paper ABPs whose underlying graphs are unlayered, which we call \emph{unlayered ABPs}. We are able to prove super-linear (but weaker) lower bounds for this model as well.

One motivation for considering layered graph as the ``standard'' model is given by the following interpretation. From the definition, it can be observed  that any polynomial computable by an ABP with $d$ layers and $\ell_i$ vertices in the $i$-th layer can be written as the (only) entry of the $1 \times 1$ matrix given by the product $M := \prod_{i = 1}^{d-1} M_i$, where $M_i$ is an $\ell_i \times \ell_{i+1}$ matrix with affine forms as entries. One natural complexity measure of such a representation is the total number of non-zero entries in those matrices, which is the number of edges in the ABP. Another natural measure, which can only be smaller, is the sums of dimensions of the matrices involved in the product, which is  the same as the number of vertices in the underlying graph.

Branching programs are also prevalent in boolean complexity theory, and in particular in the context of derandomizing the class $\mathsf{RL}$. In this setting again it only makes sense to talk about layered graphs.

Unlayered ABPs can also be thought of as (a slight generalization of) \emph{skew circuits}. These are circuits in which on every multiplication gate, at least one of the operands is a variable (or more generally, a linear function).

\paragraph*{Edge labels.} In \autoref{def:abp} we have allowed each edge to be labeled by an arbitrary affine linear form in the variables. This is again quite standard, perhaps inspired by Nisan's characterization of the ABP complexity of a non-commutative polynomial as the rank of an associated coefficients matrix \cite{N91}, which requires this freedom. A more restrictive definition would only allow each edge to be labeled by a linear function in 1 variable. On the other hand, an even more general definition, which we sometimes adopt, is to allow every edge to be labeled by an \emph{arbitrary} polynomial of degree at most $\Delta$. In this case we refer to the model as an ABP with edge labels of degree at most $\Delta$. Thus, the common case is $\Delta=1$, but our results are meaningful even when $\Delta = \omega(1)$. Note that this is quite a powerful model, which is allowed to use polynomials with super-polynomial standard circuit complexity ``for free''.

We will recall some of these distinctions in \autoref{sec:previous}, where we discuss previous results, some of which apply to several of the variants discussed here.

\subsection{Lower bounds for algebraic branching programs. }
Our first result is a quadratic lower bound on the size of any algebraic branching program computing some explicit polynomial.

\begin{restatable}{theorem}{mainABP}\label{thm:main ABP}
	Let $\F$ be a field and $n \in \N$ such that $\Char(\F) \nmid n$. Then any algebraic branching program over $\F$ computing the polynomial $\sum_{i = 1}^n x_i^n$ is of size at least $\Omega(n^2)$.
	
	When the ABP's edge labels are allowed to be polynomials of degree at most $\Delta$, our lower bound is $\Omega(n^2/\Delta)$.
\end{restatable}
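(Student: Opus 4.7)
The strategy follows what the abstract describes: take a small ABP $P$ computing $f := \sum_{i=1}^n x_i^n$, ``depth-reduce'' it to essentially a homogeneous ABP computing $f + \varepsilon(\vx)$ for a structured error polynomial $\varepsilon$, and then invoke a sufficiently robust form of the homogeneous-ABP lower bound from \cite{K19}. As a first step, I would reduce to the case $\Delta = 1$ by replacing each edge labeled by a degree-$\Delta$ polynomial with a chain of $\Delta$ edges carrying affine linear labels; this costs a multiplicative factor of $\Delta$ in the size and gives the weaker parameter in the theorem statement for free. From here on, every edge carries an affine linear form, so each source-to-sink path has degree bounded by its length.

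The core step is the depth reduction. For each vertex $v$, write the polynomial $p_v$ computed from the source to $v$ as a sum of its homogeneous components. Because $f$ is homogeneous of degree $n$, all the mixed contributions obtained by choosing, along each path, either the linear part or the constant part of each edge label must cancel out at the sink except on paths of effective degree exactly $n$. Assuming $|P| = o(n^2)$, I would try to exhibit a small vertex cut in $P$ (the analogy with matrix rigidity suggested in the introduction) such that restricting, across this cut, to only the top homogeneous piece of each partial computation produces a genuinely homogeneous ABP of size $O(|P|)$ computing $f + \varepsilon(\vx)$. Here $\varepsilon$ should be forced to have controlled structure: expressible as a short sum of products of polynomials, each depending on few variables or having complexity much smaller than $n^2$.

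Finally, I would revisit the argument in \cite{K19} to certify that the measure it uses (I expect some variant of the dimension of the span of partial derivatives, or an affine-shifted version thereof, which evaluates to roughly $n$ on $\sum x_i^n$) is robust under the type of perturbation produced above: if $\varepsilon$ has low enough complexity, its contribution to the measure is negligible compared to $n$, so the measure of $\sum x_i^n + \varepsilon$ is still $\Omega(n)$, and the homogeneous argument of \cite{K19} then delivers an $\Omega(n^2)$ lower bound on the homogeneous ABP obtained after depth reduction. Combining with the first step yields the $\Omega(n^2/\Delta)$ bound.

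The main obstacle I foresee is matching the ``structure'' of $\varepsilon$ produced by the depth reduction with the ``structure'' that a robust strengthening of \cite{K19} can tolerate: the depth-reduction side naturally yields an $\varepsilon$ whose form is dictated by the cut, while the lower-bound side demands that $\varepsilon$ barely move the relevant algebraic invariants. Finding a notion of structure that simultaneously (a) can be extracted from any size-$o(n^2)$ ABP through a small-cut argument and (b) does not disturb the K19 measure is where I expect the bulk of the technical work to concentrate.
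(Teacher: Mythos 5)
Your high-level plan matches the abstract, but two of the concrete steps you propose do not work, and the third misidentifies the mechanism that makes the \cite{K19} bound robust. First, the opening reduction to $\Delta = 1$ is false: an edge labeled by an arbitrary polynomial of degree $\Delta$ cannot be replaced by a chain of $\Delta$ edges carrying affine forms, since a product of $\Delta$ affine forms is a very special degree-$\Delta$ polynomial (the model even allows edge labels with huge circuit complexity ``for free''). The paper never performs such a reduction; instead, $\Delta$ is absorbed directly into the decomposition step (\autoref{lem:structure}), which groups vertices by formal degree into intervals of width $\Delta$, and this is exactly where the $1/\Delta$ loss in the bound comes from. Second, the depth reduction is not ``restrict to the top homogeneous piece across a small cut,'' and the error it produces is not a sum of products of polynomials on few variables or of low complexity. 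The actual operation (\autoref{lem:shrinkage for a single layered abp}) picks a sparse layer $L$ in the middle third of the ABP, writes $F=\sum_{i}(P_i+\alpha_i)(Q_i+\beta_i)$ with $P_i=[s,u_i]-\alpha_i$, $Q_i=[u_i,t]-\beta_i$ constant-free, and observes that $\sum_i \alpha_i[u_i,t]+\sum_i[s,u_i]\beta_i = F-\sum_i P_iQ_i+\sum_i\alpha_i\beta_i$ is computable by two ABPs of roughly half the depth and no larger total size; the error is thus a short sum of products of \emph{constant-free} polynomials plus a constant. Iterating $O(\log n)$ times with the sparsest middle layer keeps the total number of such products bounded by a geometric series, at most $n/10$, and one lands at formal degree at most $n$ (not at a genuinely homogeneous ABP, which is neither achieved nor needed).

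Third, the robustness of the \cite{K19} argument is not that a partial-derivative-span measure is barely perturbed by a low-complexity $\varepsilon$; there is no rank measure in that proof. The argument is geometric: using \autoref{lem:structure}, a formal-degree-$\le n$ ABP computing $\sum_i x_i^n+\sum_j A_jB_j+R$ yields an identity $\sum_i x_i^n+\sum_j A_jB_j = R'+\sum_{i=1}^m P_iQ_i$ with all $P_i,Q_i,A_j,B_j$ vanishing at the origin, so their common zero locus is nonempty and has dimension at least $n-2m-2r$; but it is contained in the singular locus of $\sum_i x_i^n - R'$, which has dimension zero by \autoref{clm:singular locus power sym}. Hence $m\ge n/2-r$ (\autoref{lem:varietyDimLB}), which stays $\Omega(n)$ precisely because the error consists of at most $n/10$ constant-free product terms — that is the notion of ``structure'' that reconciles your points (a) and (b), and it is dictated by the cut construction, not by variable-support or complexity conditions. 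Summing this bound over the roughly $n/\Delta$ disjoint formal-degree layers gives \autoref{thm:robustHomog} and the final $\Omega(n^2/\Delta)$. As written, your proposal leaves both the depth-reduction operation and the robust lower bound unproved, and the one fully specified step (the $\Delta$-reduction) is incorrect.
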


For the unlayered case, we prove a weaker (but still superlinear) lower bound.

\begin{theorem}\label{thm:unlayered ABP}
	Let $\F$ be a field and $n \in \N$ such that $\Char(\F) \nmid n$. Then any unlayered algebraic branching program over $\F$ with edge labels of degree at most $\Delta$ computing the polynomial $\sum_{i = 1}^n x_i^n$ is of size at least $\Omega(n \log n / (\log \log n + \log \Delta))$.
\end{theorem}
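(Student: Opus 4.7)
The plan is to follow the blueprint of Theorem~\ref{thm:main ABP} and replace the layered lower bound that it invokes by a correspondingly weaker bound for \emph{unlayered} homogeneous ABPs. The layered argument has two ingredients: (a) a depth-reduction turning any small enough ABP computing $\sum_i x_i^n$ into a homogeneous ABP of essentially the same size computing $\sum_i x_i^n + \varepsilon(\vx)$ for a structured error polynomial $\varepsilon$, and (b) a robust form of the $\Omega(n^2)$ homogeneous lower bound of \cite{K19} that survives the presence of any such $\varepsilon$. Step (a) adapts to the unlayered setting essentially verbatim, because the homogenizing operation (split each edge-label into its $\Delta+1$ homogeneous parts and then label each vertex by the degree accumulated along every source-to-vertex path) makes no use of layering. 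This yields, from an unlayered ABP of size $s$ with edge-label degree at most $\Delta$, an unlayered \emph{homogeneous} ABP of size $O(s \cdot \Delta)$ computing $\sum_i x_i^n + \varepsilon(\vx)$ with $\varepsilon$ of the form guaranteed by the depth-reduction.

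The new content, and the reason for the weakened bound, lies in step (b): we need a Nechiporuk-style lower bound for unlayered homogeneous ABPs computing $\sum_i x_i^n + \varepsilon(\vx)$. Concretely, I would partition the variables into $n/m$ blocks of size $m$ (with $m$ optimized at the end), and for each block $B_j$ upper bound the number of distinct sub-polynomials that can arise as a source-to-$v$ or $v$-to-sink restriction to $B_j$. In an unlayered homogeneous ABP of size $s$ with edge-label degree $\Delta$, the number of such sub-polynomials attached to the $s_j$ vertices that interact with $B_j$ is at most $2^{O(s_j(\log \log n + \log \Delta))}$: each such vertex contributes one sub-polynomial, and the sub-polynomial is determined by a bounded amount of structural information (its incoming homogeneous degrees together with the descriptions of the $B_j$-restricted edge-labels). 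On the other hand, specializing $\sum_i x_i^n + \varepsilon(\vx)$ to a block $B_j$ by setting other variables produces enough distinct polynomials (a count of order $2^{\Omega(m \log n)}$, since $\sum_{i\in B_j} x_i^n$ alone already exhibits this) to force, after a standard entropy comparison,
\[
\sum_j s_j \ \geq \ \Omega\!\left(\frac{n \log n}{\log \log n + \log \Delta}\right).
\]
Since $\sum_j s_j \leq O(s)$ by construction, this gives the claimed lower bound.

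Two technical points I expect to be the main obstacles. First, the counting in step (b) must be carried out at the level of \emph{pairs} of partial computations (into and out of each vertex), as is standard for ABP-style analogues of Nechiporuk, rather than the single sub-formula per leaf that the formula argument uses; this is where the $\log\log n + \log \Delta$ term arises, from the logarithm of the number of descriptions of a degree-$\Delta$ polynomial on the block restricted to vertices reachable from the source. Second, the argument must be robust to the error $\varepsilon(\vx)$ produced by step (a): here one uses the explicit structure of $\varepsilon$ guaranteed by the depth-reduction (a bounded-size sum of ``low complexity'' terms, analogous to the rigidity-style error) to argue that $\varepsilon$'s contribution to the block-restrictions is absorbed into a lower-order term of the count, preserving the dominant $2^{\Omega(m\log n)}$ estimate on the number of distinct restrictions of $\sum_i x_i^n + \varepsilon(\vx)$.
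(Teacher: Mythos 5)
There is a genuine gap, on both halves of your plan. First, the depth-reduction step does \emph{not} adapt ``essentially verbatim'' to the unlayered setting, and your proposed fix via homogenization is quantitatively wrong: to homogenize an (unlayered) ABP you must split each vertex according to the degree accumulated on source-to-vertex paths, and since different paths to the same vertex can accumulate any degree up to $n$, this costs a factor of roughly $n$ (the degree of $\sum_{i=1}^n x_i^n$), not $\Delta+1$; a blow-up of that size destroys any hope of a superlinear bound. The layered argument (\autoref{lem:shrinkage for a single layered abp}, \autoref{lem:shrinkage general}) works by cutting at \emph{a single layer} of vertices, and with no layers there is nothing analogous to cut at. The missing idea is the paper's use of (a variant of) Valiant's lemma (\autoref{lem:Valiant}, \autoref{lem:graph-depth-reduction}): in any DAG with $m$ edges one can find about $4m/\log n$ vertices at intermediate depths whose removal shrinks the depth by a constant factor, and the ``cut'' operation (\autoref{def:cut-ABP}, \autoref{claim:cut-properties}) converts each removed vertex into one error summand $P_iQ_i$ with constant terms zero. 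Iterating this $O(\log\log n+\log\Delta)$ times brings the formal degree down to $n$ while keeping at most $n/10$ error terms, provided the ABP has at most about $n\log n/(\log\log n+\log\Delta)$ edges --- which is exactly where the stated bound comes from; one then finishes with the same robust low-degree bound \autoref{thm:robustHomog} as in the layered case, in a win--win argument. The $\log\log n+\log\Delta$ factor is the number of depth-halving iterations, not an entropy/description-length quantity.

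Second, the Nechiporuk-style ingredient you propose for step (b) fails for this polynomial and this model. Restricting $\sum_{i=1}^n x_i^n$ to a block $B_j$ by fixing the other variables changes the polynomial only by an additive constant ($\sum_{i\notin B_j} a_i^n$), so the number of distinct block restrictions is essentially one, nowhere near $2^{\Omega(m\log n)}$; Nechiporuk needs a hard polynomial with many distinct subpolynomials per block, which power sums do not have (and which is why the paper's formula-lower-bound discussion uses a specially built multilinear polynomial for Kalorkoti-type bounds). Moreover, in this model every edge label may be a degree-$\Delta$ polynomial in \emph{all} $n$ variables, so there is no well-defined set of ``$s_j$ vertices that interact with $B_j$'' to charge against --- the over-counting obstacle the paper explicitly flags in \autoref{sec:previous} --- and over an infinite field one cannot count subpolynomials at all (one would need dimension arguments), while the description length of a degree-$\Delta$ polynomial restricted to a block is governed by its number of coefficients, not by $\log\log n+\log\Delta$. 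So the claimed bound $2^{O(s_j(\log\log n+\log\Delta))}$ is unsupported, and the coincidence of the final expression with the theorem's bound does not reflect the actual source of that factor.
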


\subsection{Lower bounds for algebraic formulas. } \label{sec:formulas}

Recall that an algebraic formula is an algebraic circuit where the underlying graph is a tree. For polynomials like $\sum_{i = 1}^n x_i^n$, a quadratic lower bound on the formula size is immediate. This is because the degree of the polynomial in every variable is $n$, and hence there must be at least $n$ leaves which are labeled by $x_i$ for every $i$. Therefore, for a formula lower bound to be non-trivial it should hold for a polynomial with bounded individual degrees (we discuss this further in \autoref{sec:previous}).

As our next result, we show an $\Omega(n^2)$ lower bound for a multilinear polynomial.

\begin{theorem}\label{thm:main formula}
	Let $n \in \N$ and let $\F$ be a field of characteristic greater than $0.1n$. Then any algebraic formula over $\F$ computing the elementary symmetric polynomial \[
	\esym(n, 0.1n) = \sum_{S \subseteq [n], \abs{S} = 0.1n}\prod_{j \in S} x_j \, , 
	\] is of size at least $\Omega(n^2)$.
\end{theorem}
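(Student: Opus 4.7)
The plan is to transport the strategy of Theorem \ref{thm:main ABP} from the ABP setting to the formula setting via two reductions. The first step is the observation that any algebraic formula of size $s$ can be simulated by an algebraic branching program of size $O(s)$, using the standard tree-to-ABP construction: parallel composition at $+$ gates, series composition at $\times$ gates, with each leaf contributing two vertices joined by a labeled edge. Hence it suffices to establish an $\Omega(n^2)$ lower bound on the ABP complexity of $\esym(n, 0.1n)$, a task which is in principle accessible to the machinery developed for Theorem \ref{thm:main ABP}.

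To carry out this ABP lower bound, I would rerun the depth-reduction argument underlying Theorem \ref{thm:main ABP}, but now tracking the homogeneous component of degree $d = 0.1n$ rather than degree $n$. The target statement is that any sufficiently small ABP for $\esym(n, d)$ can be converted, with only a constant-factor blow-up in size, into a near-homogeneous ABP computing $\esym(n, d) + \varepsilon(\vx)$ for some structurally restricted error polynomial $\varepsilon$. One then closes the argument with a rank-based width lower bound on near-homogeneous ABPs: since $\dim \partial^{k}(\esym(n, d)) = \binom{n}{k}$ for $k \leq d$, the layered coefficient matrices of the ABP are forced to have large rank, which in turn forces each of the $d$ layers to have width $\Omega(n)$, yielding $\Omega(nd) = \Omega(n^{2})$ vertices in total.

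The main obstacle will be the last step, namely verifying that the rank lower bound persists under the structured perturbation $\varepsilon(\vx)$ produced by the depth reduction. In the $\sum x_{i}^{n}$ setting, this robustness was precisely what \cite{K19} supplied and required careful bookkeeping of which perturbations can arise; for $\esym(n, d)$ the analogous robustness analysis is plausibly more delicate, because the algebraic ``signature'' of $\esym(n, d)$ is spread across $\binom{n}{d}$ monomials rather than concentrated on the $n$ pure powers $x_{i}^{n}$. Pinning down exactly which error polynomials can be produced by depth-reducing an ABP for $\esym(n, d)$, and showing that none of them can collapse the partial-derivative rank of $\esym$, is where I expect the technical work to concentrate and is also the step that should explain how we save the $\log n$ factor relative to the \cite{Nec66, K85} bound.
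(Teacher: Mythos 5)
Your plan has a fatal gap right at the first reduction. Converting a formula of size $s$ into an ABP by series--parallel composition gives an \emph{unlayered} ABP, and the depth-reduction machinery behind \autoref{thm:main ABP} (finding a cheap middle layer of a multilayered ABP and cutting there, as in \autoref{lem:shrinkage general}) genuinely uses the layered structure. Re-layering the unlayered ABP can blow its size up by a factor of its depth, so an $\Omega(n^2)$ lower bound for layered ABPs computing $\esym(n,0.1n)$ does not transfer back to formulas; the only bound available for the unlayered model is \autoref{thm:unlayered ABP}, which is barely superlinear and weaker than the classical $\Omega(n^2/\log n)$ formula bounds. This is exactly the obstruction the paper points out, and it is why the actual proof does not go through ABPs at all: it runs a degree-reduction argument directly on the formula, repeatedly extracting a subformula of formal degree between $d/3$ and $2d/3$ (\autoref{lem:formula decomposition}) so that the formula is rewritten as a lower-formal-degree formula plus error terms $\sum_i g_i h_i + c$ with constant-free $g_i,h_i$, and the count of error terms is controlled because each extracted $g_i$ is a disjoint subformula of formal degree at least $d/3$.

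The second problem is your proposed closing step. A width lower bound for commutative ABPs via $\dim \partial^{k}(\esym(n,d))$ is not sound: in the commutative setting the coefficient matrix (or partial-derivative space) of a product $P\cdot Q$ is not low rank, so Nisan-style ``each layer must have rank-many vertices'' arguments fail, and indeed $\esym(n,d)$ has a layered ABP of width $O(d)$ despite its partial-derivative space being huge. Neither \cite{K19} nor this paper uses rank; the robustness you correctly identify as the crux is supplied by an algebraic-geometric statement: the common zero locus of the perturbed first-order partials $\partial_{x_i}\esym(n,d) - R_i$ (with $\deg R_i \le d-2$) has dimension at most $d-2$ (\autoref{lem:esym singular locus robust}, via a leading-monomial argument reducing to the homogeneous case). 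Comparing this with the lower bound $n-2k-2r$ on the dimension of the variety cut out by the constant-free factors $g_i,h_i,A_j,B_j$ forces $k=\Omega(n)$ terms, each of formal degree $\Omega(n)$, which is what yields $\Omega(n^2)$ in \autoref{thm:robustHomogForm}. Without replacing your rank step by such a robust singular-locus bound, and without an argument that works at the formula level rather than after a lossy conversion to ABPs, the proposal does not establish the theorem.
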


As we describe in \autoref{sec:proof overview}, even though formulas can be simulated by ABPs with little overhead, and despite the fact that both proofs use similar ideas, this theorem is not an immediate corollary of \autoref{thm:main ABP}, but requires some work.

\subsection{Previous work}\label{sec:previous}

\paragraph*{ABP lower bounds. }
The best lower bound known for ABPs prior to this work is a lower bound of $\Omega(n\log n)$ on the number of edges for the same polynomial $\sum_{i = 1}^n x_i^n$. This follows from the classical lower bound of $\Omega(n\log n)$ by Baur and Strassen~\cite{Str73, BS83} on the number of multiplication gates in any algebraic circuit computing the polynomial $\sum_{i = 1}^n x_i^n$ and the observation that when converting an ABP to an algebraic circuit, the number of product gates in the resulting circuit is at most the number of edges in the ABP. \autoref{thm:main ABP} improves upon this bound quantitatively, and also qualitatively, since the lower bound is on the number of vertices in the ABP.

For the case of homogeneous ABPs,\footnote{An ABP is \emph{homogeneous} if the polynomial computed between the start vertex and any other vertex is a homogeneous polynomial. This condition is essentially equivalent to assuming that the number of layers in the ABP is upper bounded by the degree of the output polynomial.} a quadratic lower bound for the polynomial  $\sum_{i = 1}^n x_i^n$ was shown by Kumar \cite{K19}, and the proofs in this paper build on the ideas in \cite{K19}.  In a nutshell, the result in \cite{K19} is equivalent to a lower bound for ABPs computing the polynomial $\sum_{i = 1}^n x_i^n$ when the number of layers in the ABP is at most $n$. In this work, we generalize this to proving essentially the same lower bound as in \cite{K19}  for ABPs with an unbounded number of layers.

In general, an ABP computing an $n$-variate homogeneous polynomial of degree $\poly(n)$ can be homogenized with a polynomial blow-up in size. This is proved in a similar manner to the standard classical result which shows this statement for algebraic circuits \cite{Strassen1973b}. Thus, much like the discussion following \autoref{def:abp}, homogeneity is not an issue when one considers polynomial vs.\ super-polynomial sizes, but becomes relevant when proving polynomial lower bounds. In other contexts in algebraic complexity this distinction is even more sharp. For example, exponential lower bounds for homogeneous depth-$3$ circuits are well known and easy to prove \cite{NW97}, but strong enough exponential lower bounds for non-homogeneous depth-$3$ circuits would separate $\VP$ from $\VNP$ \cite{GKKS16}. 

For \emph{unlayered} ABPs, the situation is more complex. If the edge labels are only functions of one variable, then (for the same reasons discussed in \autoref{sec:formulas}) we need to only consider multilinear polynomials for the lower bound to be non-trivial. It is possible to adapt Nechiporuk's method \cite{Nec66} in order to obtain a lower bound of $\tilde{\Omega}(n^{3/2})$ for a multilinear polynomial in this model.
This is an argument attributed to Pudl\'{a}k and sketched by Karchmer and Wigderson \cite{KW93} for the boolean model of parity branching programs, but can be applied to the algebraic setting. However, this argument does not extend to the case where the edge labels are arbitrary linear or low-degree polynomials in the $n$ variables.
The crux of Nechiporuk's argument is to partition the variables into $m$ disjoint sets, to argue (using counting or dimension arguments) that the number of edges labeled by variables from each set must be somewhat large,\footnote{This is usually guaranteed by constructing a function or a polynomial with the property that given a fixed set $S$ in the partition, there are many subfunctions or subpolynomials on the variables of $S$ that can be obtained by different restrictions of the variables outside of $S$.} and then to sum the contributions over all $m$ sets.
This is hard to implement in models where a single edge can have a ``global'' access to all variables, since it is not clear how to avoid over-counting in this case.

As mentioned above, the lower bound of Baur-Strassen does hold in the unlayered case, assuming the edge labels are linear functions in the variables.
However, when we allow edge labels of degree at most $\Delta$ for some $\Delta \ge 2$, their technique does not seem to carry over.
Indeed, even if we equip the circuit with the ability to compute such low-degree polynomials ``for free'', a key step in the Baur-Strassen proof is the claim that if a polynomial $f$ has a circuit of size $\tau$, then there is a circuit of size $O(\tau)$ which computes all its first order partial derivatives, and this statement does not seem to hold in this new model.

It is possible to get an $\Omega(n \log n / \log \Delta)$ lower bound for this model, for a different polynomial, by suitably extending the techniques of Ben-Or \cite{BenOr83, BenOr94}.
Our lower bounds are weaker by at most a doubly-logarithmic factor; however, the techniques are completely different.
Ben-Or's proofs rely as a black-box on strong modern results in algebraic geometry, whereas our proofs are much more elementary.

\paragraph*{Formula lower bounds. } Before discussing prior results on formula lower bounds, we mention again that it is easy to see that any algebraic formula computing the polynomial $\sum_{i = 1}^n x_i^n$ has at least $n^2$ leaves, as since the degree of the polynomials in each of the variables $x_i$ is $n$, there must be at least $n$ leaves in the formula which are labeled by $x_i$. Thus, the number of leaves is at least $n\times n = n^2$. Therefore, for lower bounds for algebraic formulas to be interesting and non-trivial, they must be asymptotically larger than the sum of the individual degrees of the variables. One particularly interesting regime of parameters is when the target hard polynomial has constant individual degree (which can be taken to be $1$, essentially without loss of generality). 

The first non-trivial lower bound for algebraic formulas was shown by Kalorkoti \cite{K85} who proved a lower bound of $\Omega(n^{3/2})$ for an $n$-variate multilinear polynomial (in fact, for the $\sqrt{n} \times \sqrt{n}$ determinant). Kalorkoti's proof is essentially an algebraic version of  Nechiporuk's proof \cite{Nec66} of $\Omega(n^2/\log n)$ lower bound on the size of Boolean formula in the full binary basis (gates are allowed to be all Boolean functions on $2$ variables). Using very similar arguments, it is possible to show an $\Omega(n^2/\log n)$ lower bound for a different multilinear polynomial (the construction we describe here is based on a non-multilinear construction given in \cite{SY10}): let $\set{x_{i}^{(j)} : i \in [\log n], j \in [n/\log n]}$ and $\set{y_k : k \in [n]}$ be two sets of $n$ variables each, and fix a bijection $\iota : 2^{[\log n]} \to [n]$. The hard multilinear polynomial will be
\[
\sum_{j=1}^{n/\log n} \left( \sum_{S \subseteq [\log n]} y_{\iota(S)} \cdot \prod_{i \in S} x_i^{(j)} \right).
\]

It turns out that in general, this technique of Nechipurok and Kalorkoti \emph{cannot} prove a lower bound which is asymptotically better than $\left(n^2/\log n\right)$ (see Remark 6.18 in \cite{Jukna12}).  

In \autoref{thm:main formula}, we adapt the ideas used in \autoref{thm:main ABP} to prove an $\Omega(n^2)$ lower bound for an explicit $n$ variate multilinear polynomial. Thus, while the quantitative improvement over the previous lower bound is by a mere $\log n$ factor, it is interesting to note that \autoref{thm:main formula} gives an asymptotically better lower bound than the best bound that can be proved using the classical, well known techniques of Nechiporuk and Kalorkoti.

Our lower bound is for one of the elementary symmetric polynomials, which, due to a well known observation of Ben-Or, are also known to be computable by a formula (in fact, a depth-$3$ formula) of size $O(n^2)$ over all large enough fields. For a large regime of parameters, this construction was shown to be tight for depth-$3$ formulas (even depth-$3$ circuits) by Shpilka and Wigderson~\cite{SW01}. \autoref{thm:main formula} shows that elementary symmetric polynomials do not have formulas of size $o(n^2)$ regardless of their depth.

Prior to this work, the only super linear lower bound on the general formula complexity of elementary symmetric polynomials we are aware of is the $\Omega(n \log n)$ lower bound of Baur and Strassen \cite{BS83}, which as we mentioned earlier is in fact a lower bound on the stronger notion of \emph{circuit} complexity. As far as we understand, Kalorkoti's proof directly does not seem to give any better lower bounds on the formula complexity of these polynomials.

As another mildly interesting point, we note that Baur and Strassen \cite{BS83} also proved an \emph{upper bound} of $O( n \log n)$ on the circuit complexity of the elementary symmetric polynomials. Hence, our lower bound implies a super-linear separation between the formula complexity and the circuit complexity of an explicit polynomial family.

\subsection{Proof Overview}\label{sec:proof overview}
\paragraph*{ABP lower bounds. }
The first part in the proof of \autoref{thm:main ABP} is an extension of the lower bound proved in \cite{K19} for ABPs with at most $n$ layers. This straightforward but conceptually important adaptation shows that a similar lower bound holds for any polynomial of the form
\[
\sum_{i=1}^n x_i^n + \varepsilon(\vx)\,,
\]
where the suggestively named $\varepsilon(\vx)$ should be thought of as an ``error term'' which is ``negligible'' as far as the proof of \cite{K19} is concerned. The exact structure we require is that $\varepsilon(\vx)$ is of the form $\sum_{i=1}^r P_i Q_i + R$, where $P_i, Q_i$ are polynomials with no constant term and $\deg(R) \le n-1$. The parameter $r$ measures the ``size'' of the error, which we want to keep small, and the lower bound holds if, e.g., $r \le n/10$.

To argue about ABPs with $d$ layers, with $d > n$, we show that unless the size $\tau$ of the ABP is too large to begin with (in which case there is nothing to prove), it is possible to find a small set of vertices (of size about $\eta = \tau/d$) whose removal adds a small error term $\varepsilon (\vx)$ as above with at most $\eta$ summands, but also reduces the depth of the ABP by a constant factor. Repeatedly applying this operation $O(\log n)$ times eventually gives an ABP of depth at most $n$ while ensuring that we have not accumulated too much ``error'',\footnote{It takes some care in showing that the total number of error terms accumulated is at most $n/10$ as opposed to the obvious upper bound of $O(n\log n)$. In particular, we observe that the number of error terms can be upper bounded by a geometric progression with first term roughly $\tau/n$ and common ratio being a constant less than $1$.} so that we can apply the lower bound from the previous paragraph. 

In the full proof we have to be a bit more careful when arguing about the ABP along the steps of the proof above. The details are presented in \autoref{sec:layered}.

The proof of \autoref{thm:unlayered ABP} follows the same strategy, although the main impediment is that general undirected graphs can have much more complex structure then layered graphs. One of the main ingredients in our proof is (a small variant of) a famous lemma of Valiant \cite{Valiant77}, which shows that for every graph of depth $2^k$ with $m$ edges, it is possible to find a set of edges, of size at most $m/k$,  whose removal reduces the depth of the graph to $2^{k-1}$. This lemma helps us identify a small set of vertices which can reduce the depth of the graph by a constant factor while again accumulating small error terms.

Interestingly, Valiant originally proved this lemma in a different context, where he showed that linear algebraic circuits of depth $O(\log n)$ and size $O(n)$ can be reduced to a special type of depth-$2$ circuits (and thus strong lower bounds on such circuits imply super-linear lower bounds on circuits of depth $O(\log n)$). This lemma can be also used to show that \emph{boolean} circuits of depth $O(\log n)$ and size $O(n)$ can be converted to depth-$3$ circuits of size $2^{o(n)}$, and thus again strong lower bounds on depth-$3$ circuits will imply super-linear lower bounds on circuits of depth $O(\log n)$. Both of these questions continue to be well known open problems in algebraic and boolean complexity, and to the best of our knowledge, our proof is the first time Valiant's lemma is successfully used in order to prove circuit lower bounds for explicit functions or polynomials.

\paragraph*{Formula lower bounds. } The proof of \autoref{thm:main formula} follows along the lines of the proof of \autoref{thm:main ABP} but needs a few more ideas. A natural attempt at recovering formula lower bounds from ABP lower bounds is to convert the formula to an ABP and then invoke the lower bound in \autoref{thm:main ABP}. However, while it is easy to see that a formula can be transformed into an ABP with essentially no blow up in size, the ABP obtained at the end of this transformation is not necessarily layered. Therefore, we cannot directly invoke \autoref{thm:main ABP}. We can still invoke \autoref{thm:unlayered ABP}, but the lower bound thus obtained is barely super-linear and in particular weaker than the known lower bounds \cite{K85, SY10}. 

For the proof of \autoref{thm:main formula}, we apply the ideas in the proof of \autoref{thm:main ABP} in a non black-box manner. The proof is again in two steps, where in the first step, we show a lower bound of $\Omega(n^2)$ for polynomials of the form $\esym(n, 0.1n) + \varepsilon(\vx)$ for formulas of formal degree at most $0.1n$. Here, $\varepsilon(\vx)$ should again be thought of as an error term as in the outline of the proof of \autoref{thm:main ABP}. 

For formulas of formal degree greater than $0.1n$, we describe a simple structural procedure to reduce the formal degree of a formula, while maintaining its size. The cost of the procedure is that the complexity of error terms increases in the process. We argue that starting from a  formula of size at most $n^2$ (since otherwise, we are already done) computing $\esym(n, 0.1n)$, we can repeatedly apply this degree reduction procedure to obtain a formula of formal degree at most $0.1n$ which computes the polynomial $\esym(n, 0.1n) + \varepsilon(\vx)$, where the error terms is not too complex. Combining this with the robust lower bound for formulas of formal degree at most $0.1n$ completes the proof of \autoref{thm:main formula}.

\paragraph*{Elementary symmetric vs power symmetric polynomials. }
We remark that the lower bound in \autoref{thm:main ABP} also holds for elementary symmetric polynomials of degree $0.1n$ on $n$ variables. However, the proof seems a bit simpler and more instructive for the case of power symmetric polynomials and so we state and prove the theorem for these polynomials. As discussed in \autoref{sec:previous}, for the case of lower bounds for formulas, we are forced to work with multilinear polynomials and hence \autoref{thm:main formula} is stated and proved for elementary symmetric polynomials. In general, these lower bounds hold for any family of polynomials of high enough degree whose zeroes of multiplicity at least two lie in a low dimensional variety, or more formally, an analog of \autoref{clm:singular locus power sym} or \autoref{lem:esym singular locus robust} is true.   
\section{Notations and Preliminaries}
All logarithms in the paper are base 2.

We use some standard graph theory terminology: If $G$ is a directed graph and $(u,v)$ is an edge, $v$ is called the \emph{head} of the edge and $u$ the \emph{tail}. Our directed graphs are always acyclic with designated source vertex $s$ and sink vertex $t$. The \emph{depth} of a vertex $v$, denoted $\depth(v)$, is the length (in edges) of a longest path from $s$ to $v$. The depth of the graph, denoted by $\depth(G)$, is the depth of $t$.

For any two vertices $u$ and $v$ in an ABP, the polynomial computed between $u$ and $v$ is the sum of weights of all paths between $u$ and $v$  in the ABP. We denote this by $[u,v]$. 

The formal degree of a vertex $u$ in an ABP denoted $\fdeg(u)$, is defined inductively as follows: If $s$ is the start vertex of the ABP, $\fdeg(s)=0$. If $u$ is a vertex with incoming edges from $u_1, \ldots, u_k$, labeled by non-zero polynomials $\ell_1, \ldots, \ell_k$, respectively, then
\[
\fdeg(u) = \max_{i \in [k]} \set{ \deg(\ell_i) + \fdeg(u_i) }.
\]
It follows by induction that for every vertex $u$, $\deg ([s,u]) \le \fdeg(u)$ (however, cancellations can allow for arbitrary gaps between the two). The formal degree of the ABP is the maximal formal degree of any vertex in it.

We sometimes denote by $\vx$ the vector of variables $(x_1, \ldots, x_n)$, where $n$ is understood from the context. Similarly we use $\mathbf{0}$ to denote the $n$-dimensional vector $(0,0,\ldots,0)$.

\subsection{A decomposition lemma}
The following lemma gives a decomposition of a (possibly unlayered) ABP in terms of the intermediate polynomials it computes. Its proof closely resembles that of Lemma 3.5 of \cite{K19}. For completeness we prove it here for a slightly more general model.

\begin{lemma}[\cite{K19}]\label{lem:structure}
	Let $\mathcal{B}$ be a (possibly unlayered) algebraic branching program whose edge labels are arbitrary polynomials of degree at most $\Delta \le n/10$, which computes a degree $d$ polynomial $P \in \F[x_1, \ldots x_n]$, and has formal degree $d$. Set $d' = \floor{d/\Delta}$.
		
	 For any $i \in \set{1, 2, \ldots ,d'-1}$, let $S_i = \set{u_{i,1}, u_{i,2}, \ldots, u_{i,m}}$ be the set of all vertices in $\mathcal{B}$ whose formal degree is in the interval $[i \Delta, (i+1)\Delta)$.
	 
	 Then, there exist polynomials $Q_{i,1}, Q_{i,2}, \ldots, Q_{i,m}$ and $R_i$, each of degree at most $d-1$ such that
		\[P = \sum_{j=1}^{m} [s, u_{i,j}] \cdot Q_{i,j} + R_i\, .\]
\end{lemma}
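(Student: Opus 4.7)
The plan is to express $P = [s,t]$ as the sum, over all directed paths $\pi$ from $s$ to $t$ in $\mathcal{B}$, of the weight $w(\pi)$ (the product of edge labels along $\pi$), and to partition these paths according to whether they meet $S_i$. First, I would observe that since $\fdeg(t) = d$ and $d \ge d'\Delta \ge (i+1)\Delta$, we have $\fdeg(t) \ge (i+1)\Delta$, so in particular $s \ne t$ and neither endpoint lies in $S_i$. I write $P = P_1 + R_i$, where $P_1$ collects the weights of paths that meet $S_i$ and $R_i$ collects the weights of paths from $s$ to $t$ that avoid $S_i$ entirely.

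For $P_1$, I would group each path $\pi$ meeting $S_i$ by its \emph{last} vertex in $S_i$, call it $v_\pi$. Splitting $\pi$ at $v_\pi$ into a prefix (an arbitrary $s$-to-$v_\pi$ path) and a suffix (a $v_\pi$-to-$t$ path whose internal vertices all lie outside $S_i$), and summing, yields
\[
P_1 = \sum_{j=1}^{m} [s, u_{i,j}] \cdot Q_{i,j},
\]
where $Q_{i,j}$ is the sum of weights of paths from $u_{i,j}$ to $t$ whose internal vertices lie outside $S_i$. Let $\rho(v)$ denote the maximum, over paths from $v$ to $t$, of the sum of edge-label degrees. By concatenating a max-degree path $s \to u_{i,j}$ with a max-degree path $u_{i,j} \to t$, we get $\fdeg(u_{i,j}) + \rho(u_{i,j}) \le \fdeg(t) = d$, so $\rho(u_{i,j}) \le d - i\Delta \le d-\Delta \le d-1$, which bounds $\deg(Q_{i,j}) \le d-1$.

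The main step is bounding $\deg(R_i)$. Along any path $\pi$ from $s$ to $t$ that avoids $S_i$, the vertex formal degrees are weakly non-decreasing (since across any edge $v \to v'$, $\fdeg(v') \ge \fdeg(v) + \deg(\ell) \ge \fdeg(v)$), and they go from $\fdeg(s) = 0$ to $\fdeg(t) \ge (i+1)\Delta$. Since $\pi$ avoids the interval $S_i = [i\Delta, (i+1)\Delta)$, there is a first edge $(v,w)$ on $\pi$ with $\fdeg(w) \ge (i+1)\Delta$, and its predecessor satisfies $\fdeg(v) < i\Delta$. The running sum of edge-label degrees along $\pi$ at $v$ is at most $\fdeg(v) < i\Delta$, and at $w$ is at most $\fdeg(v) + \Delta < (i+1)\Delta$, hence at most $(i+1)\Delta - 1$ since everything is an integer. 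The remaining suffix from $w$ to $t$ contributes at most $\rho(w) \le d - \fdeg(w) \le d - (i+1)\Delta$ more, so $\deg(w(\pi)) \le d-1$. Summing over all such $\pi$ gives $\deg(R_i) \le d-1$.

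The main subtlety, and the reason for tracking the running edge-label sum along $\pi$ separately from the vertex formal degrees, is that the vertex formal degrees along $\pi$ can in principle jump by much more than $\Delta$ across a single edge, because $\fdeg(w)$ is a maximum over \emph{all} edges into $w$, not just the edge used by $\pi$. The running sum along $\pi$, however, only accumulates the degree of the label actually traversed, so it lags behind $\fdeg$ once a ``gap'' is created. This gap, forced to be at least one because $\pi$ skips the interval $S_i$ in a single edge, is exactly what yields the $d-1$ bound and is the one place the hypothesis $\Delta \le n/10$ (in particular $\Delta \ge 1$) is used.
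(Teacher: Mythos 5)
Your proof is correct and is essentially the paper's own argument: the decomposition is the same (grouping $s$--$t$ paths by their last vertex in $S_i$, which is exactly what the paper's iterative vertex-removal in reverse topological order produces, with $Q_{i,j}$ the sum over $u_{i,j}$-to-$t$ paths internally avoiding $S_i$ and $R_i$ the sum over $S_i$-avoiding paths), and the bound $\deg(R_i)\le d-1$ is obtained by the identical device of locating the first vertex of formal degree at least $(i+1)\Delta$ and comparing the running edge-label degree (at most $i\Delta-1+\Delta$) with the budget $d-(i+1)\Delta$ left for the suffix. One small quibble: your closing parenthetical is off, since $\Delta\le n/10$ does not imply $\Delta\ge 1$; the bound $\deg(Q_{i,j})\le d-1$ really rests on $\fdeg(u_{i,j})\ge i\Delta\ge 1$ (as the paper phrases it), with $\Delta\ge 1$ being implicit in the setup rather than a consequence of the stated hypothesis.
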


\begin{proof}
Fix $i$ as above and set $S_i = \set{u_{i,1}, u_{i,2}, \ldots, u_{i,m}}$ as above (observe that since each edge label is of degree at most $\Delta$, $S_i$ is non empty). Further suppose, without loss of generality, that the elements of $S_i$ are ordered such that there is no directed path from $u_{i,j}$ to $u_{i,j'}$ for $j' > j$.

Consider the unlayered ABP $\mathcal{B}_1$ obtained from $\mathcal{B}$ by erasing all incoming edges to $u_{i,1}$, and multiplying all the labels of the outgoing edges from $u_{i,1}$ by a new variable $y_1$. The ABP $\mathcal{B}_1$ now computes a polynomial of the form
\[
	P'(y_1, x_1, \ldots, x_n) = y_1 \cdot Q_{i,1} + R_{i,1}
\]
where $P = P'([s,u_{i,1}], x_1, \ldots, x_n)$. $R_{i,1}$ is the polynomial obtained from $\mathcal{B}_1$ by setting $y_1$ to zero, or equivalently, removing $u_{i,1}$ and all its outgoing edges. We continue in the same manner with $u_{i,2}, \ldots, u_{i,m}$ to obtain
\[
P = \sum_{j=1}^m [s,u_{i,j}] \cdot Q_{i,j} + R_i.
\]
Indeed, observe that since there is no path from $u_{i,j}$ to $u_{i,j'}$ for $j' > j$, removing $u_{i,j}$ does not change $[s,u_{i,j'}]$. The bound on the degrees of $Q_{i,j}$ is immediate from the fact that the formal degree of the ABP is at most $d$ and $\fdeg(u_{i,j}) \ge 1$. It remains to argue the $\deg(R_i) \le d-1$.

The polynomial $R_i$ is obtained from $\mathcal{B}$ by erasing all the vertices in $S_i$ and the edges touching them. We will show that every path in the corresponding ABP computes a polynomial of degree at most $d-1$. Let $s=v_1, v_2, \ldots, v_r=t$  be such a path, which is also a path in $\mathcal{B}$. Let $v_k$ be the minimal vertex in the path whose degree (in $\mathcal{B}$) is at least $(i+1)\Delta$
(if no such $v_k$ exists, the proposition follows).
As $v_{k-1} \not \in S_i$, the formal degree of $v_{k-1}$ is at most $i \Delta - 1$.
The degree of the polynomial computed by this path is thus at most $i\Delta - 1 + \Delta + D = (i+1)\Delta - 1 + D$, where $D$ is the degree of product of the labels on the path $v_k, v_{k+1}, \ldots, t$. To complete the proof, it remains to be shown that $D \le d - (i+1)\Delta$.

Indeed, if $D \ge d - (i+1)\Delta + 1$ then since the degree of $v_k$ is at least $(i+1)\Delta$, there would be in $\mathcal{B}$ a path of formal degree at least $(i+1)\Delta +D \ge d + 1$, contradicting the assumption on $\mathcal{B}$.
\end{proof}

\section{A lower bound for Algebraic Branching Programs}	\label{sec:layered}
In this section we prove \autoref{thm:main ABP}. We start by restating it.

\mainABP*
	
For technical reasons, we work with a slightly more general model which we call \emph{multilayered} ABPs, which we now define. 
	
\begin{definition}[Multilayered ABP]
	Let $\mathcal{A}_1, \ldots, \mathcal{A}_k$ be $k$ ABPs with $d_1, \ldots, d_k$ layers and $\tau_1, \ldots, \tau_k$ vertices, respectively. A multilayered ABP $\mathcal{A}$, denoted by $\mathcal{A} = \sum_{i = 1}^k \mathcal{A}_i$, is the ABP obtained by placing $\mathcal{A}_1, \mathcal{A}_2, \ldots, \mathcal{A}_k$ in parallel and identifying their start and end vertices respectively. Thus, the polynomial computed by $\mathcal{A}$ is $\sum_{i=1}^k [\mathcal{A}_i]$, where $ [\mathcal{A}_i]$ is the polynomial computed by $\mathcal{A}_i$. 
	
The number of layers of $\mathcal{A}$ is $d := \max \set{d_1, \ldots, d_k}$. The size of $\mathcal{A}$ is the number of vertices in $\mathcal{A}$, and thus equals  \[\abs{ \mathcal{A}} := 2 + \sum_i (\tau_i-2) \, . \qedhere \] 
\end{definition}

This model is an intermediate model between (layered) ABPs and unlayered ABPs: given a multilayered ABP of size $\tau$ it is straightforward to construct an unlayered ABP of size $O(\tau)$ which computes the same polynomial.

\subsection{A robust lower bound for ABPs of formal degree at most $n$}
In this section, we prove a lower bound for the case where the formal degree of every vertex in the ABP is at most $n$. 
In fact, Kumar \cite{K19} has already proved a quadratic lower bound for this case.
	
\begin{theorem}[\cite{K19}]\label{thm:homog}
	Let $n \in \N$ and let $\F$ be a field such that $\Char(\F) \nmid n$. Then any algebraic branching program of formal degree at most $n$ which computes the polynomial $\sum_{i = 1}^n x_i^n$ has at least $\Omega(n^2)$ vertices. 
\end{theorem}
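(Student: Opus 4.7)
The plan is to combine the decomposition of Lemma \ref{lem:structure} with a dimension count on the critical locus of a perturbation of $P = \sum_{i=1}^n x_i^n$. Suppose $\mathcal{B}$ is an ABP of formal degree at most $n$ with $\tau$ vertices computing $P$. Applying Lemma \ref{lem:structure} with $\Delta = 1$, for each slice $i \in \{1, 2, \ldots, n-1\}$ we obtain a decomposition
\[
P \;=\; \sum_{j=1}^{|S_i|} [s, u_{i,j}] \cdot Q_{i,j} \;+\; R_i,
\qquad \deg R_i \le n-1, \quad \deg Q_{i,j} \le n-1.
\]
Since every vertex other than $s$ and $t$ lies in exactly one slice $S_i$, pigeonhole produces an $i$ with $m := |S_i| \le \tau/(n-1)$, and I would focus on this slice henceforth.

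For that $i$ I would first normalize the decomposition by removing constant terms. Writing $[s, u_{i,j}] = c_j + A_j$ and $Q_{i,j} = d_j + B_j$ with $A_j(0) = B_j(0) = 0$, and noting that $\fdeg(u_{i,j}) \le n-1$ so that $\deg A_j, \deg B_j \le n-1$, the cross terms $c_j B_j + d_j A_j + c_j d_j$ can all be absorbed into a new error term of degree at most $n-1$. This yields
\[
P \;=\; \sum_{j=1}^m A_j B_j + R, \qquad A_j(0)=B_j(0)=0, \qquad \deg R \le n-1.
\]
The theorem will then follow from the key claim that any such decomposition of $P$ forces $m \ge n/2$: combined with the pigeonhole bound this gives $\tau \ge n(n-1)/2 = \Omega(n^2)$.

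To prove the key claim, I would work over $\overline{\F}$ and consider the ideal $I := \langle A_1, \ldots, A_m, B_1, \ldots, B_m\rangle$. Its zero set $V(I)$ is non-empty (it contains the origin) and, by Krull's height theorem, has dimension at least $n - 2m$. For every $\alpha \in V(I)$ the product rule gives
\[
\nabla(P - R)(\alpha) \;=\; \sum_{j=1}^m \bigl[(\nabla A_j)(\alpha)\, B_j(\alpha) + A_j(\alpha)\, (\nabla B_j)(\alpha)\bigr] \;=\; 0,
\]
so $V(I)$ is contained in the critical locus $\mathrm{Crit}(P-R) = \{\alpha : n \alpha_i^{n-1} = \partial_i R(\alpha) \text{ for all } i\}$. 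It therefore suffices to show that $\mathrm{Crit}(P-R)$ is zero-dimensional, for then Krull forces $n - 2m \le 0$, i.e., $m \ge n/2$.

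The central geometric step is this finiteness, and it is here that the hypothesis $\Char(\F) \nmid n$ is essential: it ensures that each of the $n$ defining equations $n x_i^{n-1} - \partial_i R(x) = 0$ has top-degree form $n x_i^{n-1}$, which is nonzero. Homogenizing each equation with a new variable $x_0$ and passing to $\mathbb{P}^n$, the restriction to the hyperplane at infinity $\{x_0 = 0\}$ becomes $x_i^{n-1} = 0$ for every $i$, which has no projective solution. Since any positive-dimensional projective variety meets every hyperplane, the projective common zero set is zero-dimensional, and hence so is the affine critical locus. I expect this intersection-theoretic step, together with the careful bookkeeping of constants when producing the normalized decomposition, to be the main points requiring care; the rest of the argument is then a routine combination of pigeonhole and Krull's dimension bound.
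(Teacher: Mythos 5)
Your proposal is correct and follows essentially the same route as the paper: it is the $r=0$, $\Delta=1$ case of the paper's proof of \autoref{thm:robustHomog}, i.e.\ the decomposition of \autoref{lem:structure}, removal of constant terms, and the dimension bound of \autoref{lem:varietyDimLB} (your single-slice pigeonhole is just a repackaging of the paper's summation over all $n-1$ slices, giving the same $n(n-1)/2$ bound). The only substantive difference is that you prove the zero-dimensionality of the critical locus $\mathbb{V}\inparen{\setdef{n x_i^{n-1} - \partial_{x_i}R}{i \in [n]}}$ yourself, via homogenization and the absence of common zeros at infinity, where the paper simply invokes \autoref{clm:singular locus power sym} from \cite{K19}; that projective argument is valid.
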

	
However, to prove \autoref{thm:main ABP}, we need the following more ``robust'' version of \autoref{thm:homog}, which gives a lower bound for a larger class of polynomials. For completeness, we also sketch an argument for the proof which is a minor variation of the proof of \autoref{thm:homog}.

\begin{theorem}\label{thm:robustHomog}
	Let $n \in \N$ and let $\F$ be field such that $\Char(\F) \nmid n$.
	Let $A_1(\vx), \ldots, A_r(\vx), B_1(\vx),  \ldots, B_r(\vx)$ and $R(\vx)$ be polynomials such that for every $i$, $A_i(\mathbf{0}) = B_i(\mathbf{0}) = 0$ and $R$ is a polynomial of degree at most $n-1$. Then, any algebraic branching program over $\F$, of formal degree at most $n$ and edge labels of degree at most $\Delta \le n/10$, which computes the polynomial 
	\[\sum_{i = 1}^n x_i^n + \sum_{j = 1}^r A_j\cdot B_j + R\]
	has at least $\frac{(n/2-r)n}{2\Delta}$ vertices. 
\end{theorem}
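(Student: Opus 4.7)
The plan is to extend the proof of \autoref{thm:homog} by accommodating the additive error $\sum_j A_j B_j + R$ at every step. Intuitively, the original argument derives the lower bound from the fact that at every formal-degree level the decomposition lemma expresses $\sum_\ell x_\ell^n$ as a sum of products of polynomials vanishing at $\mathbf{0}$, and the error term contributes $r$ additional products plus a ``negligible'' low-degree polynomial; this should yield a per-level bound of $m_i \ge n/2 - r$ instead of $n/2$.

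More precisely, I fix a level $i \in \{1, \ldots, \lfloor n/\Delta \rfloor - 1\}$ and apply \autoref{lem:structure} with $d = n$ to obtain $P = \sum_{k=1}^{m_i} [s, u_{i,k}]\, Q_{i,k} + R_i'$ with $m_i = |S_i|$, $\deg Q_{i,k}, \deg R_i' \le n-1$, and $\deg [s, u_{i,k}] \le (i+1)\Delta - 1 \le n-1$ (this is where I use $i \le n/\Delta - 1$). Splitting $[s, u_{i,k}] = c_{i,k} + p_{i,k}$ and $Q_{i,k} = d_{i,k} + q_{i,k}$ into constant and origin-vanishing parts, each cross term $c_{i,k} q_{i,k}$, $d_{i,k} p_{i,k}$, $c_{i,k} d_{i,k}$ has degree at most $n-1$, so all of them together with $R$ can be absorbed into a single polynomial $\tilde R_i$ of degree at most $n-1$. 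Rearranging the defining identity of $P$ gives
\[
  \sum_{\ell=1}^n x_\ell^n \;=\; \sum_{k=1}^{m_i} p_{i,k}\, q_{i,k} \;-\; \sum_{j=1}^r A_j B_j \;+\; \tilde R_i,
\]
where each of $p_{i,k}, q_{i,k}, A_j, B_j$ vanishes at $\mathbf{0}$.

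The lower bound on $m_i$ now comes from a dimension argument. Let $Z_i \subseteq \overline{\F}^n$ be the common zero locus of the $2(m_i + r)$ polynomials $p_{i,k}, q_{i,k}, A_j, B_j$; it is non-empty because $\mathbf{0} \in Z_i$. Differentiating both sides of the identity above with respect to $x_\ell$ and evaluating on $Z_i$ kills every product term on the right, leaving the polynomial identity $n\, x_\ell^{n-1} = \partial_\ell \tilde R_i(\vx)$ on $Z_i$ for each $\ell \in [n]$. Thus $Z_i$ is contained in the affine variety $V$ defined by the $n$ polynomials $f_\ell := n\, x_\ell^{n-1} - \partial_\ell \tilde R_i$, each of degree exactly $n-1$ (using $\Char(\F) \nmid n$) with leading homogeneous form $n\, x_\ell^{n-1}$. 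Since the common zeros of these leading forms on the hyperplane at infinity of $\overline{\F}^n$ are empty, a standard B\'ezout / projective-closure argument shows that $V$ is $0$-dimensional in $\overline{\F}^n$, and hence so is $Z_i$. Its codimension is therefore exactly $n$, so Krull's height theorem gives $n \le 2(m_i + r)$, i.e., $m_i \ge n/2 - r$. Since the sets $S_i$ are pairwise disjoint subsets of the ABP's vertex set, summing over the $\ge n/(2\Delta)$ valid choices of $i$ delivers the claimed bound.

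The main delicate point is the B\'ezout step: that the system $\{n x_\ell^{n-1} - \partial_\ell \tilde R_i = 0\}_{\ell=1}^n$ continues to cut out a $0$-dimensional variety despite the degree-$(n-2)$ perturbation of the right-hand side. This step leverages the very specific structure of $\sum_\ell x_\ell^n$---the common vanishing of the leading forms of its first partials forces $\vx = \mathbf{0}$---and would need to be replaced by an analogous singular-locus statement for a general target polynomial. All other steps are reasonably direct adaptations of the proof of \autoref{thm:homog}: the only genuinely new phenomenon in the non-homogeneous setting is that $[s, u_{i,k}]$ and $Q_{i,k}$ may have nonzero constant terms, but the resulting contributions are all of sufficiently low degree to be harmlessly swept into $\tilde R_i$ without disturbing the gradient argument on $Z_i$.
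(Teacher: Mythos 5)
Your proposal is correct and follows essentially the same route as the paper: apply the decomposition lemma at each formal-degree level, strip constant terms into a degree-$(n-1)$ remainder, and use the singular-locus dimension argument (Lemma~\ref{lem:varietyDimLB}) on the common zero set of the factor polynomials, then sum over the disjoint levels. The only cosmetic difference is that you re-derive the zero-dimensionality of $\mathbb{V}(n x_\ell^{n-1} - \partial_\ell \tilde R_i)$ via the projective-closure/leading-form argument, whereas the paper imports this as Claim~\ref{clm:singular locus power sym} from \cite{K19}.
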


The proof of the theorem follows from \autoref{lem:structure} and the following lemma which is a slight generalization of  Lemma 3.1 in \cite{K19}. We include a proof for completeness.

\sloppy
\begin{lemma}\label{lem:varietyDimLB}
	Let $n \in \N$, and let $\F$ be an algebraically closed field such that $\Char(\F) \nmid n$. Let $\set{P_1, \ldots, P_m, Q_1, \ldots, Q_m, A_1, \ldots, A_r, B_1, \ldots, B_r}$ be a set of polynomials in $\F[x_1, \ldots, x_n]$ such that the set of their common zeros 
	\[
		V = \mathbb{V}(P_1, \ldots, P_m, Q_1, \ldots, Q_m, A_1, \ldots, A_r, B_1, \ldots, B_r) \subseteq \F^n
	\]
	is non-empty. Finally, suppose $R$ is a polynomial in $\F[\vx]$ of degree at most $n-1$, such that
	\[
		\sum_{i=1}^{n} x_i^n + \sum_{j=1}^{r} A_j \cdot B_j = R + \sum_{i=1}^{m} P_i \cdot Q_i.
	\]
	Then, $m \geq \frac{n}{2} - r$.
\end{lemma}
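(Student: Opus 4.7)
The plan is to combine a gradient argument at a point of $V$ with a standard dimension count. Since $V$ is non-empty by hypothesis, pick $\va \in V$. Differentiating the identity
\[
\sum_{i=1}^n x_i^n + \sum_{j=1}^r A_j B_j = R + \sum_{i=1}^m P_i Q_i
\]
with respect to $x_k$ and evaluating at $\va$ kills every product term $A_j B_j$ and $P_i Q_i$, since in each case both factors vanish at $\va$. What remains is $n a_k^{n-1} = (\partial_k R)(\va)$, for every $k \in [n]$. Consequently every point of $V$ satisfies the $n$ polynomial equations $g_k(\vx) := n x_k^{n-1} - (\partial_k R)(\vx) = 0$, i.e., $V \subseteq W := \mathbb{V}(g_1, \ldots, g_n)$.

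The main step is to show that $W$ is zero-dimensional. Since $\Char(\F) \nmid n$ and $\deg R \le n-1$, each $g_k$ has degree exactly $n-1$ with top-degree homogeneous part $n x_k^{n-1}$. Passing to the projective closure in $\mathbb{P}^n$ with homogenizing variable $x_0$, each $g_k$ homogenizes to a form $\tilde g_k$ of degree $n-1$ whose restriction to the hyperplane at infinity $\{x_0 = 0\} \cong \mathbb{P}^{n-1}$ is simply $n x_k^{n-1}$. These $n$ forms have no common zero in $\mathbb{P}^{n-1}$. If $W$ had an irreducible component of positive dimension, its projective closure would have positive dimension in $\mathbb{P}^n$ and hence meet the hyperplane at infinity in a common zero of the $\tilde g_k$, a contradiction. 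Therefore $\dim W = 0$, and in particular $\dim V = 0$.

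On the other hand, $V$ is cut out by the $2(m+r)$ polynomials $P_1, Q_1, \ldots, P_m, Q_m, A_1, B_1, \ldots, A_r, B_r$, so by Krull's height theorem $\dim V \ge n - 2(m+r)$. Combining with $\dim V \le 0$ yields $m \ge n/2 - r$, as required.

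The only substantive step is the zero-dimensionality of $W$; the differentiation identity and the Krull bound are routine. The hypothesis $\Char(\F) \nmid n$ is used precisely to ensure that the dominant coefficient $n$ appearing in $n x_k^{n-1}$ is nonzero in $\F$, so that the leading forms genuinely have empty common zero set on the hyperplane at infinity.
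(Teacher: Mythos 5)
Your proof is correct and follows essentially the same route as the paper: the gradient of the identity vanishes on $V$ (every product term has both factors vanishing there), placing $V$ inside the zero set of $\set{n x_k^{n-1} - \partial_{x_k} R}_{k \in [n]}$, and the standard Krull-type bound gives $\dim V \geq n - 2(m+r)$, forcing $m \geq n/2 - r$. The only difference is that where the paper invokes Lemma 3.2 of \cite{K19} for the zero-dimensionality of $\mathbb{V}(x_1^{n-1} - g_1, \ldots, x_n^{n-1} - g_n)$, you prove it directly via the observation that the leading forms $n x_k^{n-1}$ have no common zero on the hyperplane at infinity, which is a valid, self-contained substitute.
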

		
\begin{proof}
	Since $V \neq \emptyset$, $\dim(V) \geq n - 2m - 2r$ (see, e.g., Section 2.8 of \cite{Smi14}).  Thus, the set of zeros with multiplicity two of 
	\[
		\sum_{i=1}^{n} x_i^n - R = \sum_{i=1}^{m} P_i \cdot Q_i - \sum_{j=1}^{r} A_j \cdot B_j, 
	\]
	has dimension at least $n - 2m - 2r$. In other words, if $S$ is the set of common zeros of the set of all first order partial derivatives of $\sum_{i=1}^{n} x_i^n - R$, then $\dim(S) \geq n - 2m - 2r$. Up to scaling by $n$ (which is non-zero in $\F$, by assumption), the set of all first order partial derivatives of $\sum_{i=1}^{n} x_i^n - R$ is given by 
	\[
		\set{x_i^{n-1} - \frac{1}{n}\partial_{x_i}R}_{i \in [n]}.
	\]
	Thus, the statement of this lemma immediately follows from the following claim.
			
	\begin{claim}[Lemma 3.2 in \cite{K19}]\label{clm:singular locus power sym}
		Let $\F$ be an algebraically closed field, and $D$ a positive natural number. For every choice of polynomials $g_1, g_2, \ldots g_n \in \F[\vx]$ of degree at most $D-1$, the dimension of the variety 
		\[\mathbb{V}(x_1^D - g_1, x_2^D - g_2, \ldots, x_n^D - g_n)\] is zero.
	\end{claim}
	Indeed, the above claim shows that $0 = \dim(S) \geq n-2m-2r$, and so $m \geq \frac{n}{2} - r$. This completes the proof of \autoref{lem:varietyDimLB}.
\end{proof}

We now use \autoref{lem:structure} and \autoref{lem:varietyDimLB} to complete the proof of \autoref{thm:robustHomog}.

\begin{proof}[Proof of \autoref{thm:robustHomog}]
	Let $\mathcal{B}$ be an algebraic branching program of formal degree at most $n$, edge labels of degree at most $\Delta \le n/10$, and with start vertex $s$ and end vertex $t$, which computes
	\[\sum_{i = 1}^n x_i^n + \sum_{j = 1}^r A_j\cdot B_j + R.\]
	We may assume without loss of generality that $\F$ is algebraically closed, by interpreting $\mathcal{B}$ as an ABP over the algebraic closure of $\F$, if necessary.
		
	Let $n' = \lfloor n/\Delta \rfloor$, fix $k \in \set{1, 2, \ldots ,n'-1}$, and let $V_k = \set{v_{k,1}, v_{k,2}, \ldots, v_{k,m}}$ be the set of all vertices in $\mathcal{B}$ whose formal degree lies in the interval $[k\Delta, (k+1)\Delta)$.  Letting $P'_j = [s,v_{k,j}]$, by \autoref{lem:structure}, there exist polynomials $Q'_1, Q'_2, \ldots, Q'_m$ and $R'$, each of degree at most $n-1$ such that
	\[
		\sum_{i = 1}^n x_i^n + \sum_{j = 1}^r A_j\cdot B_j + R = \sum_{j=1}^{m} P'_j \cdot Q'_j + R'\, .
	\]
	Let $\alpha_j$, $\beta_j$ be the constant terms in $P'_j$, $Q'_j$ respectively. Then by defining
	\[P_j = P'_j - \alpha_j \qquad \text{and} \qquad Q_j = Q'_j - \beta_j,\]
	we have that
	\[
		\sum_{i=1}^{n} x_i^n + \sum_{j=1}^{r} A_j \cdot B_j = R'' + \sum_{j=1}^{m} P_j \cdot Q_j.
	\]
	Here, $R'' = -R + R' + \sum_{j=1}^{m} (\alpha_j \cdot Q'_j + \beta_j \cdot P'_j + \alpha_j \beta_j)$. We now have that for every $i$, the constant terms of $P_i$, $Q_i$ are zero and $\deg(R'') \leq n-1$.
	Let 
	\[
		\mathcal{V} = \mathbb{V}(P_1, \ldots, P_m, Q_1, \ldots, Q_m, A_1, \ldots, A_r, B_1, \ldots, B_r).
	\]
	Then $\mathbf{0} \in \mathcal{V}$, and so $\mathcal{V} \neq \emptyset$. Thus by \autoref{lem:varietyDimLB}, we know that $m \geq \frac{n}{2} - r$.
		
	Finally, for $ k\neq k' \in \set{1, 2, \ldots, n'-1}$, $V_k \cap V_{k'} = \emptyset$ if $d \neq d'$. Thus, the number of vertices in $\mathcal{B}$ must be at least
	\[
		\inparen{\dfrac{n}{2} - r} \cdot (n'-1) \geq \inparen{\frac{n}{2} - r}\cdot  \frac{n}{2\Delta}\,.
		\qedhere
	\]
\end{proof}

\subsection{A lower bound for the general case}

The following lemma shows how we can obtain, given an ABP with $d$ layers which computes a polynomial $F$, a multilayered ABP, whose number of layers is significantly smaller, which computes $F$ plus a small ``error term''.
	
\begin{lemma}\label{lem:shrinkage for a single layered abp}
	Let $\mathcal{A}$ be an ABP over a field $\F$ with $d$ layers, which computes the polynomial $F$ and has $m$ vertices. Let $s$ and $t$ be the start and end vertices of $\mathcal{A}$ respectively, and let $L = \set{u_1, u_2, \ldots, u_{|L|}}$ be the set of vertices in the $\ell$-th layer of $\mathcal{A}$. For every $i \in \set{1, 2, \ldots, |L|}$, let $\alpha_i$ and $\beta_i$ be the constant terms of $[s,u_i]$ and $[u_i,t]$ respectively. Furthermore, let $P_i$ and $Q_i$ be polynomials such that $[s,u_i] = P_i + \alpha_i$ and $[u_i,t] = Q_i + \beta_i$.
		
	Then, there is a multilayered ABP ${\mathcal{{A}}}'$,  with at most $\max\{\ell, d-\ell + 1\}$ layers and size at most $\abs{\mathcal{A}}$ that computes the polynomial 
	\[
		F - \sum_{i=1}^{\abs{L}} P_i \cdot Q_i + \sum_{i=1}^{\abs{L}} \alpha_i \cdot \beta_i .
	\]
\end{lemma}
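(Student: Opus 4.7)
The plan is to exploit the fact that $\mathcal{A}$ is layered, so that every $s$–$t$ path crosses layer $\ell$ at exactly one vertex of $L$. This gives
\[
F \;=\; \sum_{i=1}^{\abs{L}} [s,u_i]\cdot [u_i,t] \;=\; \sum_{i=1}^{\abs{L}} (P_i+\alpha_i)(Q_i+\beta_i)\,.
\]
Expanding the product and rearranging yields the key identity
\[
F \;-\; \sum_{i=1}^{\abs{L}} P_i Q_i \;+\; \sum_{i=1}^{\abs{L}} \alpha_i\beta_i \;=\; \sum_{i=1}^{\abs{L}} \alpha_i\cdot [u_i,t] \;+\; \sum_{i=1}^{\abs{L}} \beta_i\cdot [s,u_i]\,,
\]
so it suffices to realise the right-hand side as a multilayered ABP meeting the prescribed depth and size bounds.

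I would then build two sub-ABPs, one for each summand on the right. For $\mathcal{A}'_1$, take the induced subgraph of $\mathcal{A}$ on layers $\ell+1,\ldots,d$, introduce a fresh start vertex $s'$, and for every edge $(u_i,v)$ of $\mathcal{A}$ with label $w$ include an edge $(s',v)$ of label $\alpha_i\cdot w$ (summing labels of any parallel edges so produced). Grouping $s'$–$t$ paths by their first edge shows that $\mathcal{A}'_1$ computes $\sum_i \alpha_i\cdot [u_i,t]$; by absorbing the $u_i$'s into $s'$ it has exactly $d-\ell+1$ layers rather than $d-\ell+2$. Symmetrically, $\mathcal{A}'_2$ is built from layers $1,\ldots,\ell-1$ of $\mathcal{A}$ by adding a fresh end vertex $t'$ and, for every edge $(w',u_i)$ of label $w$, an edge $(w',t')$ of label $\beta_i\cdot w$; it has $\ell$ layers and computes $\sum_i \beta_i\cdot [s,u_i]$.

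Finally, I would set $\mathcal{A}' := \mathcal{A}'_1 + \mathcal{A}'_2$ as a multilayered ABP, identifying the two fresh start vertices with $s$ and the two fresh end vertices with $t$. The number of layers is $\max\{\ell,d-\ell+1\}$ by construction, and a direct vertex count (the layer $L$ of $\mathcal{A}$ contributes to neither sub-ABP) shows $\abs{\mathcal{A}'}=m-\abs{L}\le\abs{\mathcal{A}}$. The subtleties I foresee are mild: the algebraic identity is a short expansion of $(P_i+\alpha_i)(Q_i+\beta_i)$; the absorption step is what saves the extra layer and keeps the edge labels of the original degree (since each $\alpha_i,\beta_i$ is a field scalar); and degenerate cases $\ell\in\{1,d\}$, in which one sub-ABP would be empty, are handled by replacing it with a single $s$-to-$t$ edge carrying the appropriate constant weight. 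I do not anticipate a genuine obstacle.
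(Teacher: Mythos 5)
Your proposal is correct and follows essentially the same route as the paper: the identical algebraic rearrangement $F-\sum_i P_iQ_i+\sum_i\alpha_i\beta_i=\sum_i\alpha_i[u_i,t]+\sum_i\beta_i[s,u_i]$, realised by two sub-ABPs built from the layers before and after $L$. Your ``absorption'' of the layer $L$ into the fresh start/end vertex, with labels $\alpha_i w$ (resp.\ $\beta_i w$) and parallel edges summed, is exactly the effect of the paper's last-layer-removal claim applied to its intermediate ABPs $\mathcal{B}$ and $\mathcal{C}$, just carried out in a single step, and it yields the same layer bound $\max\{\ell,d-\ell+1\}$ and size bound $m-\abs{L}\le\abs{\mathcal{A}}$.
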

	
\begin{proof}
	Let $u_1, u_2, \ldots, u_{|L|}$ be the vertices in $L$ as described, so that
	\[
		F = [s,t] = \sum_{i = 1}^{|L|} [s, u_{i}]\cdot [u_{i}, t] \, . 
	\]
	Further, for every $i \in \set{1, 2, \ldots, |L|}$, $[s,u_i] = P_i + \alpha_i$ and $[u_i,t] = Q_i + \beta_i$, where the constant terms of $P_i$ and $Q_i$ are zero (by definition). Having set up this notation, we can thus express the polynomial $F$ computed by $\mathcal{A}$ as 
	\[
		F = [s,t] = \sum_{i = 1}^{|L|} (P_i + \alpha_i)\cdot (Q_i + \beta_i) \, .
	\]
	On further rearrangement, this gives 
	\[
		F - \inparen{\sum_{i = 1}^{|L|} P_i \cdot Q_i} + \inparen{\sum_{i = 1}^{|L|} \alpha_i\cdot \beta_i} = \inparen{\sum_{i = 1}^{|L|} \alpha_i \cdot (Q_i + \beta_i)} + \inparen{\sum_{i = 1}^{|L|} (P_i + \alpha_i)\cdot \beta_i} \, .
	\] 
	This is equivalent to the following expression.
	\[
		F - \inparen{\sum_{i = 1}^{|L|} P_i \cdot Q_i} + \inparen{\sum_{i = 1}^{|L|} \alpha_i\cdot \beta_i} =  \inparen{\sum_{i = 1}^{|L|} \alpha_i\cdot [u_i, t]} + \inparen{\sum_{i = 1}^{|L|} [s, u_i]\cdot \beta_i} \, .
	\] 
	Now, observe that the polynomial $\sum_{i=1}^{|L|} [s, u_i] \cdot \beta_i$ is computable by an ABP $\mathcal{B}$ with $\ell+1$ layers, obtained by just keeping the vertices and edges within first $\ell$ layers of $\mathcal{A}$ and the end vertex $t$, deleting all other vertices and edges, and connecting the vertex $u_i$ in the $\ell$-th layer to $t$ by an edge of weight $\beta_i$.  
	Similarly, the polynomial $\sum_{i=1}^{|L|} \alpha_i\cdot [u_i, t]$ is computable by an ABP $\mathcal{C}$ with at most $(d-\ell + 1)+1$ layers,  whose set of vertices is $s$ along the vertices in the layers $\ell, \ell + 1, \ell+2, \ldots, d$ of $\mathcal {A}$. From the definition of $\mathcal{B}$ and $\mathcal{C}$, it follows that the multilayered ABP $\tilde{\mathcal{A}}$ obtained by taking the sum of $\mathcal{B}$ and $\mathcal{C}$ has at most $\max\set{\ell+1, d-\ell+2}$ layers. 
		
	We are almost done with the proof of the lemma, except for the upper bound on the number of vertices of the resulting multilayered ABP $\tilde{\mathcal{A}}$, and the fact that the upper bound on the depth is slightly weaker than claimed. Both these issues can be solved simultaneously.
	
	 The vertices in $L$ appear in both the ABP $\mathcal{B}$ and the ABP $\mathcal{C}$ and are counted twice in the size of $\tilde{\mathcal{A}}$. However, every other vertex is counted exactly once.  Hence, 
	\begin{equation}\label{eqn:overlap}
		\abs{\mathcal{B}} + \abs{\mathcal{C}} =  \abs{\mathcal{A}} + \abs{L} \, .
	\end{equation}
 	In order to fix this issue, we first observe that the edges between the vertices in the $\ell$-th layer of $\mathcal{B}$ and the end vertex $t$ are labeled by $\beta_{1}, \beta_2, \ldots, \beta_{|L|}$, all of which are field constants. In the following claim, we argue that for ABPs with this additional structure, the last layer is redundant and can be removed.

	\begin{claim}\label{clm:last layer removal}
		Let ${\mathcal{M}}$ be an ABP over $\F$ with $k+1$ layers and edge labels of degree at most $\Delta$ such that the labels of all the edges between the $k$-th layer of ${\cal M}$ and its end vertex are scalars in $\F$. Then, there is an ABP $\mathcal{M}'$ with  $k$ layers computing the same polynomial as $\mathcal{M}$, with edge labels of degree at most $\Delta$, such that
		\[\abs{\mathcal{M}'} \leq \abs{\mathcal{M}} - \abs{V} \, , \]
		where $V$ is the set of vertices in the $k$-th layer of $\mathcal{M}$.
	\end{claim}
	
	An analogous statement, with an identical proof, is true if we assume that all edge labels between the first and second layer are scalars in $\F$.

	We first use \autoref{clm:last layer removal} to complete the proof of the lemma. As observed above, the edge labels between the last layer $L$ of $\mathcal{B}$ and its end vertex are all constants. Hence, by \autoref{clm:last layer removal}, there is an ABP $\mathcal{B}'$ which computes the same polynomial as $\mathcal{B}$ such that $\abs{\mathcal{B}'} \leq \abs{\mathcal{B}} - \abs{L}$, and $\mathcal{B}'$ has only $\ell$ layers. Similarly, we can obtain an ABP $\mathcal{C}'$ with at most $d-\ell+1$ layers.
	
	We consider the multilayered ABP $\mathcal{A}'$ by taking the sum of $\mathcal{B}'$ and $\mathcal{C}'$. Clearly, the number of layers in $\mathcal{A}'$ is at most $\max\{\ell, d-\ell+1\}$ and the size is at most 
	\[\abs{\mathcal{A}'} \leq  \abs{\mathcal{B}'} + \abs{\mathcal{C'}} \leq  \inparen{\abs{\mathcal{B}} - \abs{L}} + \inparen{\abs{\mathcal{C} }- \abs{L}} \leq \abs{\mathcal{A}} \, .\]
	Here, the second inequality follows by \autoref{clm:last layer removal} and the last one follows by \autoref{eqn:overlap}. To complete the proof of the lemma, we now prove \autoref{clm:last layer removal}. 
\end{proof}
	
\begin{proof}[Proof of \autoref{clm:last layer removal}]
	For the proof of the claim, we  focus on the $k$-th and $(k-1)$-st layer of $\mathcal{M}$. To this end, we first set up some notation. Let $\set{v_1, v_2, \ldots, v_r}$ be the set of vertices in the $k$-th layer of $\mathcal{M}$, $\set{u_1, u_2, \ldots, u_{r'}}$ be the set of vertices in $(k-1)$-st layer of $\mathcal{M}$, and $a$, $b$ denote the start and the end vertices of $\mathcal{M}$ respectively. Then, the polynomial computed by $\mathcal{M}$, can be decomposed as 
	\[
		[a,b] = \sum_{i = 1}^r [a, v_i] \cdot  [v_i, b]\, .
	\]
	Note that $(v_i,b)$ is an edge in the ABP. Similarly, the polynomial $[a,v_i]$ can be written as 
	\[
		[a,v_i] = \sum_{j = 1}^{r'} [a, u_j] \cdot [u_j, v_i] \, .
	\]
	Combining the two expressions together, we get  
	\[
		[a,b] = \sum_{i = 1}^r [v_i, b] \cdot  \inparen{\sum_{j = 1}^{r'} [u_j,v_i] \cdot [a, u_j]}\, ,
	\]
	which on further rearrangement,  gives us 
	\begin{equation} \label{eqn:decomposition using second last layer}
		[a,b] = \sum_{j = 1}^{r'} \inparen{\sum_{i = 1}^{r} [v_i,b] [u_j, v_i]} \cdot [a, u_j] \, .
	\end{equation}
 	From the hypothesis of the claim, we know that for every $i \in [r]$, the edge label $[v_i,b]$ is a field constant, and the edge label $[u_j, v_i]$ is a polynomial of degree at most $\Delta$. Thus, for every $j \in [r']$, the expression $\inparen{\sum_{i = 1}^{r} [u_i,b] [u_j,v_i]}$ is a polynomial of degree at most $\Delta$.
 	
 	This gives us the following natural construction for the ABP $\mathcal{M}'$ from $\mathcal{M}$. We delete the vertices $v_1, v_2, \ldots, v_r$ in $\mathcal{M}$ (and hence, all edges incident to them), and for every $j \in \set{1, 2, \ldots, r'}$, we connect the vertex $u_j$ with the end vertex $b$ using an edge with label $\inparen{\sum_{i = 1}^{r} [v_i,b] [u_j, v_i]}$. The upper bound on the size and the number of layers of $\mathcal{M}'$ is immediate from the construction, and that it computes the same polynomial as $\mathcal{M}$ follows from \autoref{eqn:decomposition using second last layer}.
 \end{proof}	

We now state and prove a simple generalization of \autoref{lem:shrinkage for a single layered abp} for a multilayered ABP. 
	
\begin{lemma}\label{lem:shrinkage general}
	Let $\mathcal{A} = \sum_{i=1}^{m} \mathcal{A}_i$ be a multilayered ABP with $d$ layers over a field $\F$ computing the polynomial $F$, such that each $\mathcal{A}_i$ is an ABP with $d_i$ layers.
	Also, let $\ell_{i,j}$ be the number of vertices in the $j$-th layer of $\mathcal{A}_i$ ($\ell_{i,j} = 0$ if $\mathcal{A}_i$ has fewer than $j$ layers), and $\ell = \min_{j \in (d/3, 2d/3)} \set{\sum_{i=1}^m \ell_{i,j}}$.
	
	Then, there is a multilayered ABP  with at most $2d/3$ layers and size at most $\abs{\mathcal{A}}$ that computes a polynomial of the form
	\[F - \sum_{i=1}^{\ell} P_i \cdot Q_i + \delta \, , \]
	where $\set{P_1, \ldots, P_{\ell}, Q_1, \ldots, Q_{\ell}}$ is a set of non-constant polynomials with constant term zero and $\delta \in \F$.
\end{lemma}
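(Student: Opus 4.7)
\textbf{Proof plan for \autoref{lem:shrinkage general}.} The plan is to reduce to \autoref{lem:shrinkage for a single layered abp} by applying it to each sub-ABP $\mathcal{A}_i$ at a common, carefully chosen layer index. First, let $j^\star$ be an integer in the interval $(d/3, 2d/3)$ that minimizes $\sum_{i=1}^{m} \ell_{i,j^\star}$; by definition this quantity equals $\ell$. For every $i$ with $d_i \ge j^\star$, invoke \autoref{lem:shrinkage for a single layered abp} on $\mathcal{A}_i$ with $L$ equal to its $j^\star$-th layer. This produces a multilayered ABP $\mathcal{A}'_i$ of depth at most $\max\{j^\star, d_i - j^\star + 1\}$ and size at most $\abs{\mathcal{A}_i}$, computing
\[
[\mathcal{A}_i] - \sum_{k=1}^{\ell_{i,j^\star}} P_{i,k} \cdot Q_{i,k} + \gamma_i,
\]
where the $P_{i,k}, Q_{i,k}$ have zero constant term and $\gamma_i \in \F$. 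For $i$ with $d_i < j^\star$, leave $\mathcal{A}_i$ untouched; its depth is already at most $j^\star - 1 < 2d/3$ and it contributes no new error term. Since $j^\star < 2d/3$ and $d_i \le d$ with $j^\star > d/3$, the depth of every piece is at most $2d/3$ (treating the interval as open so the integer bounds line up; otherwise absorb a harmless $+1$).

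Next, assemble all the $\mathcal{A}'_i$'s (and any untouched $\mathcal{A}_i$'s) by identifying their common start and end vertices to form a single multilayered ABP $\mathcal{A}'$. Its depth is at most $2d/3$. For the size bound, the size of a multilayered ABP is additive in $(\tau_i - 2)$ over its summands, so replacing each $\mathcal{A}_i$ by $\mathcal{A}'_i$ with $\abs{\mathcal{A}'_i} \le \abs{\mathcal{A}_i}$ gives $\abs{\mathcal{A}'} \le \abs{\mathcal{A}}$. The polynomial computed by $\mathcal{A}'$ is
\[
\sum_i [\mathcal{A}'_i] = F - \sum_{i,k} P_{i,k} \cdot Q_{i,k} + \delta, \qquad \delta := \sum_i \gamma_i \in \F.
\]
The total number of product terms is $\sum_i \ell_{i,j^\star} = \ell$ (summed only over $i$ with $d_i \ge j^\star$, but $\ell_{i,j^\star} = 0$ when $d_i < j^\star$ by convention).

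Finally, to enforce the nonconstancy of each $P_i, Q_i$, simply discard any product $P_{i,k} \cdot Q_{i,k}$ in which one of the factors is the zero polynomial; such terms contribute nothing and only decrease the count below $\ell$. Re-indexing the surviving pairs gives the desired form.

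The step most needing care is the arithmetic: verifying the depth bound $\max\{j^\star, d_i-j^\star+1\} \le 2d/3$ from $j^\star \in (d/3, 2d/3)$, and checking that the multilayered-size accounting ($\abs{\mathcal{A}} = 2 + \sum_i (\tau_i - 2)$) really telescopes so that replacing each sub-ABP individually preserves the global bound. Both are bookkeeping, with the genuine work already done inside \autoref{lem:shrinkage for a single layered abp}.
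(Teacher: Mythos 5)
Your proposal is correct and follows essentially the same route as the paper's proof: pick the minimizing layer index $j_0 \in (d/3,2d/3)$, apply \autoref{lem:shrinkage for a single layered abp} at that layer to each summand ABP deep enough to have such a layer, leave the shallow summands untouched, and recombine, with the size bound following from the additivity of multilayered-ABP size over summands. The only differences are cosmetic (the paper groups the summands into the set $S$ of deep ABPs and its complement, and it does not spell out discarding vanishing product terms, which you handle explicitly).
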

	
\begin{proof}
	Let $j_0 \in (d/3, 2d/3)$ be the natural number which minimizes the quantity $\sum_{i=1}^m \ell_{i,j}$, and let $S \subseteq [m]$ be the set of all indices $i$ such that $\mathcal{A}_i$ has at least $j_0$ layers. Let $\mathcal{A}' = \sum_{i \in S}\mathcal{A}_i$ and $\mathcal{A}'' = \sum_{i \notin S}\mathcal{A}_i$. Thus, 
	\[
		\mathcal{A} = \mathcal{A}' + \mathcal{A}''.
	\]		
	Here, $\mathcal{A}'' = \sum_{i \notin S}\mathcal{A}_i$ is a multilayered ABP with at most $2d/3$ layers. Moreover, $\abs{\mathcal{A}} = \abs{\mathcal{A}'} + \abs{\mathcal{A}''}$.
		
	To complete the proof of this lemma, we will now apply~\autoref{lem:shrinkage for a single layered abp} to every ABP in $\mathcal{A}'$. For every $i \in S$, we know that there exist some polynomials $P_{i,1}, \ldots, P_{i, \ell_{i, {j_0}}}, Q_{i,1}, \ldots, Q_{i, \ell_{i, {j_0}}}$ with constant terms zero and a constant $\delta_i$, such that  
	\[
		F_i - \sum_{r = 0}^{\ell_{i, {j_0}}} P_{i, r}Q_{i,r} + \delta_i
	\]
	can be computed by a multilayered ABP. Let us denote this multilayered ABP by $\mathcal{B}_i$. From \autoref{lem:shrinkage for a single layered abp}, we know that $\mathcal{B}_i$ has at most $\max\{j_0, d_i-j_0+1\} \leq 2d/3$ layers and size at most $\abs{\mathcal{A}_i}$. Taking a sum over all $i \in S$ and re-indexing the summands, we get that there exist polynomials $P_{1}, \ldots, P_{\ell}, Q_{1}, \ldots, Q_{\ell}$ with constant terms zero and a constant $\delta$ such that the polynomial 
	\[
		\sum_{i\in S} F_i - \sum_{r = 0}^{\ell} P_{r}Q_{r} + \delta
	\]
	is computable by a multilayered ABP $\mathcal{B} = \sum_{i \in S} \mathcal{B}_i$ with at most $2d/3$ layers and size at most $\sum_{i \in S} \abs{\mathcal{A}_i} \leq \abs{\mathcal{A}'}$. Now, by combining the multilayered ABPs $\mathcal{B}$ and $\mathcal{A}''$, we get that the polynomial 
	\[
		F  - \sum_{r = 0}^{\ell} P_{r}Q_{r} + \delta
	\]
	is computable by a multilayered ABP with at most $2d/3$ layers and size at most $\abs{\mathcal{A}}$.
\end{proof}
	
We now use \autoref{lem:shrinkage general} to prove \autoref{thm:main ABP}. 
	
\begin{proof}[Proof of \autoref{thm:main ABP}]
	Let $\mathcal{A}$ be a multilayered ABP with $d_0$ layers which computes the polynomial $\sum_{i = 1}^n x_i^n$. As before we may assume without loss of generality that the underlying field $\F$ is algebraically closed. Note that if $d_0$ is at most $n/\Delta$, then by \autoref{thm:robustHomog}, we know that $\abs{\mathcal{A}}$ is at least $\Omega(n^2)$ and we are done. Also, if $d_0 > n^2/\Delta$, then again we have our lower bound since each layer of $\mathcal{A}$ must have at least one vertex. Thus, we can assume that $n/\Delta \le d_0 \le n^2/\Delta$. 
		
	The proof idea is to iteratively make changes to $\mathcal{A}$ till we get a multilayered ABP $\mathcal{A}'$ of formal degree at most $n$ that computes a polynomial of the type 
	\[\sum_{i = 1}^n x_i^n + \sum_{j = 1}^r A_j\cdot B_j + R\]
	where $r \leq n/10$ and $A_1(\vx), \ldots, A_r(\vx), B_1(\vx), \ldots, B_r(\vx), R(\vx)$ are polynomials such that for every $i$, $A_i(\mathbf{0}) = B_i(\mathbf{0}) = 0$ and $R$ has degree at most $n-1$. 
	Once we have this, we can invoke \autoref{thm:robustHomog} and get the required lower bound.
		
	We now explain how to iteratively obtain $\mathcal{A}'$ from $\mathcal{A}$. In one step, we ensure the following.
		
	\begin{claim}\label{clm:oneStep}
		Let $\mathcal{A}_k$ be a multilayered ABP with edge labels of degree at most $\Delta$, $d_k \ge n/\Delta$ layers and size at most $\tau$ that computes a polynomial of the form $\sum_{i = 1}^n x_i^n + \sum_{j = 1}^r A_j\cdot B_j + R$ where $A_1(\vx), \ldots, A_r(\vx), B_1(\vx), \ldots, B_r(\vx), R(\vx)$ are polynomials such that for every $j$, $A_j(\mathbf{0}) = B_j(\mathbf{0}) = 0$ and $R$ has degree at most $n-1$.
			
		If $\tau \leq 0.001 n^2/\Delta$, then there exists a multilayered ABP $\mathcal{A}_{k+1}$ with at most $2 d_k/3$ layers and size at most $\tau$ which computes a polynomial of the form 
		\[\sum_{i = 1}^n x_i^n + \sum_{j = 1}^{r'} A'_j\cdot B'_j + R'\, ,\]
		such that $r' \leq  r + 0.005\frac{n^2}{\Delta \cdot d_k}$ and $A'_1(\vx), \ldots, A'_{r'}(\vx), B'_1(\vx), \ldots, B'_{r'}(\vx), R'(\vx)$ are polynomials such that for every $i$, $A'_i(\mathbf{0}) = B'_i(\mathbf{0}) = 0$ and $R'$ has degree at most $n-1$.
	\end{claim}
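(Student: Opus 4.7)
The plan is a direct invocation of \autoref{lem:shrinkage general} followed by a simple averaging argument, and a final rebookkeeping step. Viewing $\mathcal{A}_k$ as a multilayered ABP that computes
\[
F_k \;:=\; \sum_{i=1}^n x_i^n + \sum_{j=1}^r A_j B_j + R,
\]
\autoref{lem:shrinkage general} yields a multilayered ABP $\mathcal{A}_{k+1}$ of size at most $\abs{\mathcal{A}_k} \le \tau$ and with at most $2d_k/3$ layers, computing a polynomial of the form
\[
F_k \;-\; \sum_{i=1}^{\ell} P_i Q_i \;+\; \delta,
\]
where each $P_i,Q_i$ has zero constant term, $\delta\in\F$, and $\ell = \min_{j \in (d_k/3, 2d_k/3)} \sum_i \ell_{i,j}$.

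Next I would bound $\ell$ by averaging. Summed over all interior layers and over the constituent ABPs, the total vertex count $\sum_{j=2}^{d_k-1}\sum_i \ell_{i,j}$ is at most $\abs{\mathcal{A}_k} \le \tau$. Since the hypothesis $d_k \ge n/\Delta$ combined with the standing assumption $\Delta \le n/10$ ensures that $d_k$ is sufficiently large, the open interval $(d_k/3, 2d_k/3)$ contains at least $d_k/4$ integers. Thus the minimum is at most the average, yielding
\[
\ell \;\le\; \frac{4\tau}{d_k} \;\le\; \frac{4 \cdot 0.001\, n^2}{\Delta\, d_k} \;=\; \frac{0.004\, n^2}{\Delta\, d_k}.
\]

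Finally, I would repackage the output polynomial of $\mathcal{A}_{k+1}$ into the desired form. Set $R' := R + \delta$ (still of degree at most $n-1$), keep $A'_j := A_j$ and $B'_j := B_j$ for $j \in [r]$, and append $A'_{r+i} := P_i$, $B'_{r+i} := -Q_i$ for $i \in [\ell]$. Then
\[
F_k - \sum_{i=1}^\ell P_i Q_i + \delta \;=\; \sum_{i=1}^n x_i^n + \sum_{j=1}^{r'} A'_j B'_j + R'
\]
with $r' = r + \ell \le r + 0.005\, n^2/(\Delta d_k)$. The zero-constant-term conditions on the new $A'_{r+i}, B'_{r+i}$ are inherited directly from the conclusion of \autoref{lem:shrinkage general}.

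There is no substantive obstacle here: the claim is essentially a re-bookkeeping of \autoref{lem:shrinkage general}. The only thing to verify carefully is that the numerical constants align --- specifically that the factor of $4$ introduced by the averaging step converts the input bound $\tau \le 0.001 n^2/\Delta$ into an error contribution of $\le 0.004 n^2/(\Delta d_k) \le 0.005 n^2/(\Delta d_k)$, which leaves a comfortable slack.
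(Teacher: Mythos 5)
Your proposal is correct and follows essentially the same route as the paper: invoke \autoref{lem:shrinkage general}, bound the number of middle-layer vertices by averaging over the layers in $(d_k/3, 2d_k/3)$ (the paper uses the factor $3/d_k$ where you use $4/d_k$, both comfortably under the $0.005$ budget), and then re-index the new $P_iQ_i$ terms, with a sign flip and $\delta$ absorbed into $R'$, as additional error summands.
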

	
	Before moving on to the proof of  \autoref{clm:oneStep}, we first use it to complete the proof of  \autoref{thm:main ABP}. Let us set $\mathcal{A}_0 = \mathcal{A}$. Then, $\mathcal{A}_0$ is a multilayered ABP with $d_0$ layers and size at most $\tau $ that computes the polynomial $\sum_{i = 1}^n x_i^n$.
		
	If $\tau \geq 0.001 n^2/\Delta$, the statement of the theorem follows. Otherwise, we apply \autoref{clm:oneStep} iteratively $K$ times, as long as the number of layers is more than $n/\Delta$, to  eventually get a multilayered ABP $\mathcal{A}' = \mathcal{A}_{K}$ with $d' \le n/\Delta$ layers. Let $d_0, \ldots, d_{K-1},d_K$ denote the number of layers in each ABP in this sequence, so that $d_{K-1} > n/\Delta$, and $d_k \le 2 d_{k-1}/3$ for $k \in [K]$.  $\mathcal{A}'$ is an ABP with at most $n/\Delta$ layers and size at most $\tau$, which by induction, computes a polynomial of the form
	\[\sum_{i = 1}^n x_i^n + \sum_{j = 1}^r A_j\cdot B_j + R \, ,\]
	where $A_1(\vx), \ldots, A_r(\vx), B_1(\vx), \ldots, B_r(\vx), R(\vx)$ are polynomials such that for every $i$, $A_i(\mathbf{0}) = B_i(\mathbf{0}) = 0$ and $R$ has degree at most $n-1$. Further, the number of error terms, $r$, is at most 
	\[
 		\frac{0.005 n^2} {\Delta} \left( \frac{1}{d_{K-1}} + \frac{1}{d_{K-2}} + \cdots + \frac{1}{d_0} \right).
	\]
	Since $d_k \le \frac{2}{3} \cdot d_{k-1}$, we have that $\frac{1}{d_{k-1}} \le \frac{2}{3} \cdot \frac{1}{d_k}$ for all $k \in [K]$, so that
\[
r \le \frac{0.005n^2}{\Delta} \cdot \frac{1}{1-2/3} \cdot \frac{1}{d_{K-1}} \le \frac{n}{10}
\]
as $d_{K-1} \ge n/\Delta$.	

	At this point, since the formal degree is at most $n$, using \autoref{thm:robustHomog} we get 
	\[
		\tau \geq \abs{\mathcal{A}'} \geq \dfrac{(n/2-r)n}{2\Delta} = \Omega \inparen{\dfrac{n^2}{\Delta}} \, . \qedhere 
	\]
\end{proof}
		
To complete the proof of  \autoref{thm:main ABP}, we now prove \autoref{clm:oneStep}.

\begin{proof}[Proof of \autoref{clm:oneStep}]
	Let $\mathcal{A}_k = \sum_{i=1}^{m} \mathcal{A}_{k,i}$, and for $j \in [d_k]$, let $\ell_{i,j}$ be the number of vertices in layer $j$ of $\mathcal{A}_{k,i}$. Recall that if the number of layers in $\mathcal{A}_{k,i}$ is strictly less than $j$, then we set $\ell_{i,j} = 0$. Let $\ell$ be the total number of vertices in the middle layers of $\mathcal{A}_k$, defined as 
	\[ \ell = \sum_{i=1}^{m} \inparen{\sum_{j \in (d_k/3, 2d_k/3)} \ell_{i,j}} \, .\] 
	Since $\ell \leq \tau \leq \frac{0.001 n^2}{\Delta }$, by averaging, we know that there is a $j_0 \in (d_k/3, 2d_k/3)$, such that 
	\[
		\ell_{j_0} = \sum_{i=1}^{m}\ell_{i,{j_0}} \leq \frac{\ell}{d_k/3} \leq \frac{0.001 n^2}{\Delta } \cdot \frac{1}{d_k/3} \leq 0.005\frac{n^2}{\Delta \cdot d_k} \, .
	\]
	This condition, together with \autoref{lem:shrinkage general}, tells us that there is a multilayered ABP  $\mathcal{A}_{k+1}'$ with at most $2d_k/3$ layers and size at most $\tau$ that computes a polynomial of the form
	\[
		\sum_{i = 1}^n x_i^n + \sum_{j = 1}^r A_j\cdot B_j + R - \sum_{i=1}^{\ell_{j_0}} P_i \cdot Q_i + \delta \, , 
	\]
	where  $P_1, \ldots, P_{\ell}, Q_1, \ldots, Q_{\ell}$ are a set of non-constant polynomials with constant term zero and $\delta \in \F$. Since $\ell_{j_0} \leq 0.005\frac{n^2}{\Delta \cdot d_k}$, the claim follows.	
\end{proof}

\section{Unlayered Algebraic Branching Programs}\label{sec:unlayered}

In this section, we prove \autoref{thm:unlayered ABP}. We begin with the following definition.

\begin{definition}\label{def:cut-ABP}
Let $\mathcal{A}$ be an unlayered ABP over $\F$. Let $s$ and $t$ denote the start and end vertices of $\mathcal{A}$, respectively, and let $v \neq s,t$ be a vertex in $\mathcal{A}$. Denote by $\alpha \in \F$ the constant term of $[s,v]$ and by $\beta \in \F$ the constant term of $[v,t]$.

The \emph{cut} of $\mathcal{A}$ with respect to $v$, denoted $\cut(\mathcal{A},v)$, is the unlayered ABP obtained from $\mathcal{A}$ using the following sequence of operations:
\begin{enumerate}
\item Duplicate the vertex $v$ (along with its incoming and outgoing edges). Let $v_1,v_2$ denote the two copies of $v$.
\item Erase all outgoing edges of $v_1$, and connect $v_1$ to $t$ by a new edge labeled $\beta$.
\item Erase all incoming edges of $v_2$, and connect $s$ to $v_2$ by a new edge labeled $\alpha$. $\hfill\qedhere$
\end{enumerate}
\end{definition}

We now prove some basic properties of the construction in \autoref{def:cut-ABP}.

\begin{claim}\label{claim:cut-properties}
Let $\mathcal{A}$ be an unlayered ABP over $\F$ computing a polynomial $F$, and let $v$ be a vertex in $\mathcal{A}$. Denote $\mathcal{A}' = \cut(\mathcal{A},v)$. Denote by $d$ the depth of $\mathcal{A}$ and by $d_v$ the depth of $v$ in $\mathcal{A}$. Then the following properties hold:
\begin{enumerate}
\item $\mathcal{A}'$ has 1 more vertex and 2 more edges than $\mathcal{A}$.

\item The depth of $\mathcal{A}'$ is at most
\[
\max \set{\depth(\mathcal{A} \setminus \{v\}), d_v + 1, d-d_v+1} \,,
\]
where $\mathcal{A} \setminus \{v\}$ is the ABP obtained from $\mathcal{A}$ by erasing $v$ and all of its adjacent edges.

\item $\mathcal{A}'$ computes a polynomial of the form $F - P \cdot Q - \delta$ where $P$ and $Q$ have no constant term, and $\delta \in \F$.
\end{enumerate}
\end{claim}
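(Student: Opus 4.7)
The three claims are essentially bookkeeping consequences of the explicit construction in \autoref{def:cut-ABP}, and I plan to handle each item in turn.

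For part~(1), I will simply count the effect of the construction. Step~1 replaces $v$ by two copies $v_1, v_2$, each of which inherits the incoming and outgoing edges of $v$. Step~2 erases the outgoing edges of $v_1$ (inherited copies of the outgoing edges of $v$) but adds one new edge $v_1 \to t$, and step~3 symmetrically removes the incoming edges of $v_2$ and adds one new edge $s \to v_2$. Summing these up, the net change from $\mathcal{A}$ to $\mathcal{A}'$ is exactly $+1$ vertex and $+2$ edges.

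For part~(2), I would classify each $s$--$t$ path in $\mathcal{A}'$ by whether it visits $v_1$, $v_2$, or neither. A path avoiding $\{v_1,v_2\}$ is entirely contained in $\mathcal{A} \setminus \{v\}$ and so has length at most $\depth(\mathcal{A} \setminus \{v\})$. A path visiting $v_1$ must terminate with its unique outgoing edge $v_1 \to t$, and the $s \to v_1$ prefix uses only incoming edges inherited from $v$, so its length is at most $d_v + 1$. Symmetrically, a path through $v_2$ contributes at most $1 + (d - d_v) = d - d_v + 1$, using that the longest $v \to t$ sub-path in $\mathcal{A}$ has length at most $d - d_v$ (else prepending the $s \to v$ path of length $d_v$ would exceed $d$). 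Crucially, no path can visit both $v_1$ and $v_2$: the only outgoing edge of $v_1$ leads to $t$, while the only incoming edge of $v_2$ comes from $s$. Taking the maximum over the three cases yields the claimed bound.

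For part~(3), I will again split by path type. Paths avoiding $\{v_1,v_2\}$ collectively compute $F_{-v}$, the polynomial computed by $\mathcal{A} \setminus \{v\}$. Paths through $v_1$ contribute $\beta \cdot [s,v]$ (the $s \to v_1$ prefix in $\mathcal{A}'$ matches the $s \to v$ segment in $\mathcal{A}$, and the terminating edge has label $\beta$), while paths through $v_2$ symmetrically contribute $\alpha \cdot [v,t]$. Hence
\[
[\mathcal{A}'] = F_{-v} + \beta \cdot [s,v] + \alpha \cdot [v,t].
\]
Writing $[s,v] = P + \alpha$ and $[v,t] = Q + \beta$ with $P, Q$ of zero constant term, and using the identity $F = [s,v] \cdot [v,t] + F_{-v}$ from $\mathcal{A}$, a direct expansion of $(P+\alpha)(Q+\beta)$ substituted into $F_{-v} = F - [s,v]\cdot[v,t]$ yields $[\mathcal{A}'] = F - P \cdot Q + \alpha\beta$, which has the desired form with $\delta = -\alpha\beta \in \F$. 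No step presents any real difficulty here; the only conceptual subtlety is the observation used in part~(2) that the modifications in steps~2--3 prevent any single $s$--$t$ path from passing through both copies of $v$, which is exactly what makes the three path types mutually exclusive and the overall analysis clean.
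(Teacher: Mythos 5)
Your proposal is correct and follows essentially the same route as the paper's proof: item (1) by direct counting, item (2) by the same three-way classification of $s$--$t$ paths according to whether they avoid $\{v_1,v_2\}$, pass through $v_1$, or pass through $v_2$, and item (3) by writing $[s,v]=P+\alpha$, $[u,t]=Q+\beta$ and expanding $F = (P+\alpha)(Q+\beta) + F_{-v}$ to get $F - P\cdot Q + \alpha\beta$. Your explicit remark that no path can use both $v_1$ and $v_2$ (which, for the order $v_2$ before $v_1$, ultimately rests on acyclicity of the underlying DAG) is a detail the paper leaves implicit, but it does not change the argument.
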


\begin{proof}
The first property is immediate from the construction. The second property follows from the following reasoning: each path in $\mathcal{A}'$ is of exactly one of the following types: (a) misses both $v_1$ and $v_2$, (b) passes through $v_1$, or (c) passes through $v_2$. In case (a), the path also appears in the graph of $\mathcal{A} \setminus \{v\}$. In case (b), the only edge going out of $v_1$ is to $t$, and all other edges in the path appear in $\mathcal{A}$, hence the length is at most $d_v+1$. In case (c), the only edge entering $v_2$ is from $s$, hence similarly the path is of length at most $d-d_v+1$.

It remains to show the last property. Let $P' = [s,v]$ and $Q' = [v,t]$ (as computed in $\mathcal{A}$). Denote $P' = P+\alpha$ where $P$ has no constant term and $\alpha \in \F$ and similarly $Q' = Q+\beta$. One may write $F = P' \cdot Q'+R = (P+\alpha)(Q+\beta) + R$ where $R$ is the sum over all paths in $\mathcal{A}$ which do not pass through $v$. In $\mathcal{A'}$, we have that $[s,v_1] =P'$ and $[v_2,t] = Q'$, and thus $\mathcal{A}'$ computes the polynomial
\[
R + \alpha \cdot Q' + P' \cdot \beta = F - P \cdot Q + \alpha \beta. \qedhere
\]
\end{proof}

Our goal is to perform cuts on a strategically chosen set of vertices. In order to select them, will use the following well known lemma of Valiant \cite{Valiant77}, simplifying and improving an earlier result of Erd\H{o}s, Graham and Szemer\'{e}di \cite{EGS75}. For completeness, we also sketch a short proof.

\begin{lemma}[\cite{Valiant77}]\label{lem:Valiant}
Let $G$ be a directed acyclic graph with $m$ edges and depth $d \ge \sqrt{n}$. Then, there exists a set $E'$ of at most $4m / \log n$ edges such that removing $E'$ from $G$ results in a graph of depth at most $d/2$.
\end{lemma}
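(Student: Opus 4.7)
The plan is to follow Valiant's original depth-reduction strategy from~\cite{Valiant77}, with a small modification to match the constants in the statement. I would first assign each vertex $v$ the label $\ell(v) := \depth(v) \in \{0, 1, \ldots, d\}$ and regard each label as a binary string of length $k := \lceil \log_2(d+1)\rceil$. The hypothesis $d \ge \sqrt{n}$ gives $k \ge \tfrac{1}{2}\log_2 n$, so any bound of the form $m/k$ translates to at most $2m/\log n$, and the factor of $4$ in the claim supplies an additional factor of $2$ of slack for the argument below.

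The key step is a pigeonhole argument over a dyadic partition of the edges. For each $i \in \{0, 1, \ldots, k-1\}$, let $E_i$ denote the set of edges $(u, v)$ whose labels $\ell(u), \ell(v)$ agree in all bit positions strictly above $i$ and disagree at bit position $i$. Every edge lies in exactly one such class, so $\sum_i |E_i| = m$, and by averaging some $i^\ast$ satisfies $|E_{i^\ast}| \le m/k$. I would take $E' := E_{i^\ast}$ (possibly augmented by the small set described below). In $G \setminus E'$, every remaining edge has its endpoints either agreeing in every bit at or above position $i^\ast$, so that $\lfloor \ell / 2^{i^\ast}\rfloor$ is unchanged across that edge, or disagreeing at some bit strictly above $i^\ast$, so that $\lfloor \ell / 2^{i^\ast + 1}\rfloor$ strictly increases. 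Tracing any path and counting edges of each type should then give the depth bound $d/2$.

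The main obstacle is this final path-counting step: a naive decomposition of any path into ``jumps'' of the upper part $\lfloor \ell / 2^{i^\ast + 1}\rfloor$ and ``runs'' of at most $2^{i^\ast} - 1$ edges within a block of constant upper part yields only the weaker bound of roughly $d$ on the surviving depth, falling short of the required $d/2$. Closing this factor-of-two gap is the content of the small variant of Valiant's lemma referred to in the paper. One natural way to achieve it is to augment $E_{i^\ast}$ by at most $m/k$ additional edges chosen to further subdivide each dyadic block (for example, a matching of edges within each block whose removal halves the within-block depth), paying the extra factor of $2$ reflected in the final bound $4m/\log n$. I expect this refinement, together with carefully tracking the constants, to constitute the bulk of the technical work.
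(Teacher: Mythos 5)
Your decomposition is exactly the one in the paper: label each vertex by its depth, write the labels in binary with $k \ge \tfrac{1}{2}\log n$ bits, partition the edges into classes $E_i$ according to the most significant bit position at which the endpoint labels differ, and average to find a class of size at most $m/k$. Your diagnosis of the remaining difficulty is also accurate: removing a single class only certifies a valid relabeling with $k-1$ bits, i.e.\ depth roughly $2^{k-1}\approx d$, not $d/2$.

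The gap is in how you propose to recover the factor of two. You augment $E_{i^\ast}$ by ``at most $m/k$ additional edges chosen to further subdivide each dyadic block (for example, a matching of edges within each block whose removal halves the within-block depth)'', but no such set is constructed, and its existence is precisely what has to be proved: the edges interior to the blocks are those in the classes $E_j$ with $j<i^\ast$, and nothing in your argument bounds by $m/k$ the number of edges needed to halve the depth of \emph{every} block simultaneously; a ``matching'' is not a construction that achieves this. The clean fix --- and the one the paper uses, following Valiant --- is to run the same averaging once more over the whole partition: take the \emph{two} smallest classes $E_{i_1},E_{i_2}$ among all $k$ classes, whose union has size at most $2m/k\le 4m/\log n$, and delete bit positions $i_1$ and $i_2$ from every label. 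On any surviving edge the most significant differing bit is some remaining position, where the tail has a $0$ and the head a $1$, so the truncated labels still strictly increase along edges; hence the surviving graph has a valid labeling with only $k-2$ bits and depth at most $2^{k-2}$. If, as in the paper, you first round $d$ up to a power of two $d'\le 2d$ and encode with $k=\log d'$ bits (so $k\ge\tfrac{1}{2}\log n$ still holds), this gives depth at most $d'/4\le d/2$, as required. With this replacement your argument coincides with the paper's; as written, the augmentation step is a genuine gap.
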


\begin{proof}
Let $d' \ge d \ge \sqrt{n}$ be a smallest power of 2 larger than $d$, so that $d' \le 2d$. Let $k = \log d'$. 
A \emph{valid labeling} of a directed graph $G = (V,E)$ is a function $f : V \to \set{0,\ldots, N-1}$ such that whenever $(u,v)$ is an edge, $f(u) < f(v)$. Clearly if $G$ had depth $d$ then there is a valid labeling with image $\set{0,\ldots, N-1} = \set{0,\ldots, d-1}$ by labeling each vertex by its depth. Conversely, if there is a valid labeling with image $\set{0,\ldots, N-1}$ then $\depth(G) \le N$.

Let $f$ be a valid labeling of $G$ with image $\set{0,\ldots, d'-1}$ and for $i \in [k]$ let $E_i$ be the set of edges such that the most significant bit in which the binary encoding of the labels of their endpoints differ is $i$. If $E_i$ is removed, we can obtain a valid relabeling of the graph with image $\set{0,\ldots, d'/2-1}$ by removing the $i$-th bit from all labels.

The two smallest sets among the $E_i$-s have size at most $2m/k \le 4m / \log n$ (since $k = \log d' \ge \log n /2$), and removing them gives a valid labeling with image $\set{0,\ldots, d'/4-1}$, and therefore a graph with depth at most $d'/4 \le d/2$.
\end{proof}

We need a slight variation of this lemma, in which we do not pick edges whose endpoints have too small or too large a depth in the graph.

\begin{lemma}\label{lem:graph-depth-reduction}
Let $G$ be a directed acyclic graph with $m$ edges and depth $d \ge \sqrt{n}$. Then, there exist a set $U$ of vertices, of size at most $4m / \log n$, such for every $v \in U$ we have that $d/9 \le \depth(v) \le 8d/9$, and removing $U$ (and the edges touching those vertices) results in a graph of depth at most $3d/4$.
\end{lemma}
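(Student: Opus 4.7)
The plan is to reduce to Valiant's edge-deletion bound (\autoref{lem:Valiant}) applied to a suitable middle slice of $G$. Let $G_{\mathrm{mid}}$ denote the subgraph of $G$ induced by the set $M$ of vertices whose depth in $G$ lies in $[d/9, 8d/9]$. Since $\depth_G$ strictly increases along every edge, directed paths within $G_{\mathrm{mid}}$ have length at most $7d/9$, and the restriction of a longest $s$-$t$ path of $G$ to $M$ is a directed path inside $G_{\mathrm{mid}}$ of length $\Theta(d)$, so $\depth(G_{\mathrm{mid}}) = \Theta(d) \ge \sqrt{n}$ (up to harmless constants). Apply \autoref{lem:Valiant} to $G_{\mathrm{mid}}$ to obtain an edge set $E'$ of size at most $4m/\log n$ whose removal reduces the depth of $G_{\mathrm{mid}}$ to at most $7d/18$, and let $U$ be the set of heads of the edges in $E'$. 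Then $|U| \le |E'| \le 4m/\log n$, every vertex of $U$ belongs to $M$ by construction, and removing $U$ from $G$ already deletes every edge of $E'$ (since each such edge has its head in $U$), so the induced subgraph of $G \setminus U$ on $M \setminus U$ has depth at most $7d/18$.

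To bound $\depth(G \setminus U)$, construct a valid labeling $g$ of $V(G \setminus U)$ in three consecutive blocks: set $g(v) = \depth_G(v)$ when $\depth_G(v) < d/9$; set $g(v) = \lceil d/9 \rceil + \depth_{G_{\mathrm{mid}} \setminus U}(v)$ when $v \in M \setminus U$; and set $g(v) = \lceil d/9 \rceil + \lceil 7d/18 \rceil + 1 + (\depth_G(v) - \lfloor 8d/9 \rfloor - 1)$ when $\depth_G(v) > 8d/9$. A short case analysis over the six possible edge types (top-top, top-middle, middle-middle, middle-bottom, bottom-bottom, and top-bottom ``skipping'' the middle) verifies that $g$ strictly increases along every edge of $G \setminus U$, and the maximum value of $g$ is $d/9 + 7d/18 + d/9 + O(1) = 11d/18 + O(1) \le 3d/4$ for $d$ sufficiently large. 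Hence $\depth(G \setminus U) \le 3d/4$, as desired.

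The main technical annoyance to watch is that \autoref{lem:Valiant} applied to $G_{\mathrm{mid}}$ produces a bound involving $\log \depth(G_{\mathrm{mid}})$, which is only $(\log n)/2 + O(1)$ rather than exactly $(\log n)/2$; the multiplicative $(1+o(1))$ slack this introduces can be absorbed by perturbing the window $[d/9, 8d/9]$ slightly (the three-piece labeling has ample room, since $11/18 < 3/4$). Conceptually, the content is entirely in pairing Valiant's lemma on the middle slice with the three-piece labeling, and the remaining work is bookkeeping the constants around this boundary case.
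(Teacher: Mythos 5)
Your route is genuinely different from the paper's: you apply Valiant's lemma to the induced middle slice $G_{\mathrm{mid}}$ and then certify the depth of $G \setminus U$ by an explicit three-block valid labeling, whereas the paper applies Valiant's lemma to all of $G$, discards the edges of $E'$ whose heads have depth below $d/9$ or above $8d/9$, takes $U$ to be the heads of the surviving edges, and bounds the depth of $G\setminus U$ by splitting an arbitrary path into a prefix ending at the last ``shallow-head'' edge, a middle portion that avoids $E'$, and a suffix starting at the first ``deep-head'' edge. Your labeling argument itself is sound (the six edge-type checks go through, and the $11d/18+O(1)\le 3d/4$ slack is fine for $d\ge\sqrt n$ with $n$ large), and it could even replace the paper's path-splitting step.

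The genuine gap is in the size bound on $U$. Valiant's lemma, as stated and as proved, needs the graph it is applied to to have depth at least $\sqrt n$ (this is exactly what makes $k=\log d' \ge \tfrac{1}{2}\log n$ and hence $2m/k \le 4m/\log n$). But $G_{\mathrm{mid}}$ is only guaranteed depth about $7d/9 - O(1)$, and since the hypothesis of the lemma you are proving is only $d \ge \sqrt n$, in the boundary regime $d = \Theta(\sqrt n)$ you may have $\depth(G_{\mathrm{mid}}) < \sqrt n$. Re-running Valiant's argument there gives only $|E'| \le 4m/(\log n - O(1)) = (1+O(1/\log n))\cdot 4m/\log n$, which overshoots the bound $4m/\log n$ claimed in the statement. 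Your proposed repair does not address this: the slack you point to ($11/18 < 3/4$) lives in the depth bound, not in the cardinality bound, and the window $[d/9,8d/9]$ cannot be widened to boost $\depth(G_{\mathrm{mid}})$ without placing vertices of $U$ outside the depth range the lemma promises, while narrowing it only shrinks the slice's depth further. The clean fix is the paper's ordering of the two steps: apply Valiant's lemma to $G$ itself (whose depth is $\ge \sqrt n$ by hypothesis, so the $4m/\log n$ bound is exact), and only afterwards restrict to the edges whose heads lie in $[d/9,8d/9]$, taking $U$ to be their heads; your three-block labeling can then be used essentially unchanged to finish.
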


\begin{proof}
Let $E$ denote the set of edges of $G$ and $E' \subseteq E$ be the set of edges guaranteed by \autoref{lem:Valiant}. Let $E_1 \subseteq E'$ be the edges in $E'$ whose heads have depth at most $d/9$, and $E_2$ be the edges in $E'$ whose heads have depth at least $8d/9$. Let $E'' = E' \setminus (E_1 \cup E_2)$. Clearly, $|E''| \le |E'| \le 4m/ \log n$. Let $U$ be the set of heads of vertices in $E''$. 

Consider now any path in the graph obtained from $G$ by removing $U$ (and hence in particular $E''$). Given such a path, let $e_1$ be the last edge from $E_1$ in the path which appears before all edges from $E_2$ (if there exists such an edge), and let $e_2$ the first edge from $E_2$ (if any) in the path. We partition the path into three (possibly empty) parts: the first part is all the edges which appear until $e_1$ (including $e_1$); the second part is all the edges after $e_1$ and before $e_2$; the last part consists of all the edges which appear after $e_2$ (including $e_2$). Because the head of $e_1$ is a vertex of depth at most $d/9$, the first part can contribute at most $d/9$ edges. The second part includes only edges from $E \setminus E'$, and thus its length is at most $d/2$. The last part again has depth at most $d/9+1$, as any path leaving a vertex of depth at least $8d/9$ can have at most that many edges (here we add 1 to account for the edge $e_2$ itself, since the assumption is on the depth of the head of $e_2$). Thus, the total length of the path is at most
\[
d/9 + d/2 + d/9 + 1 \le 3d/4\,. \qedhere
\]
\end{proof}

The set of vertices given by the lemma above will be the vertices according to which we will cut the ABP. We describe it in the following lemma, and prove some properties of this operation.

\begin{lemma}\label{lem:unlayered-ABP-depth-reduction}
Let $\mathcal{A}$ be an ABP over a field $\F$ of depth $d \ge \sqrt{n}$ computing a polynomial $F$. Let $\tau$ be the number of vertices and $m$ be the number of edges in $\mathcal{A}$. Then, there exist an unlayered ABP $\mathcal{A}'$, with at most $\tau+ 4m/ \log n$ vertices, at most $m + 8m/\log n$ edges, and depth at most $9d/10$, computing a polynomial of the form $F - \sum_{i=1}^r P_i Q_i - \delta$ where $\delta \in \F$ is a field constant, the $P_i,Q_i$'s have no constant term, and $r \le 4m/\log n$.
\end{lemma}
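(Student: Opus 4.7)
The plan is to combine the graph depth reduction of \autoref{lem:graph-depth-reduction} with the $\cut$ operation of \autoref{def:cut-ABP}. First, apply \autoref{lem:graph-depth-reduction} to the underlying DAG of $\mathcal{A}$ to obtain a set $U$ of at most $4m/\log n$ vertices, each with $\depth_\mathcal{A}(v) \in [d/9, 8d/9]$, such that the graph $\mathcal{A} \setminus U$ has depth at most $3d/4$. Then iteratively apply $\cut(\cdot, v)$ at each $v \in U$ to produce $\mathcal{A}'$.

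The bookkeeping on vertices, edges, and the form of the polynomial is immediate from \autoref{claim:cut-properties}: each cut adds one vertex, two edges, and changes the computed polynomial by subtracting a term $P_i Q_i + \delta_i$ in which $P_i, Q_i$ have zero constant term (these terms are computed in the intermediate ABP existing just before the $i$-th cut, but this is irrelevant since the claim is stated in terms of the current polynomial $[s,v]$ and $[v,t]$). Composing $r := |U|$ such modifications yields at most $\tau + 4m/\log n$ vertices, at most $m + 8m/\log n$ edges, and a polynomial of the form $F - \sum_{i=1}^r P_i Q_i - \delta$ with $\delta = \sum_i \delta_i$ and $r \le 4m/\log n$, as required.

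The step requiring the most care is the depth bound. The key observation is that in $\mathcal{A}'$, each cut-copy $v_1$ has only the edge to $t$ as outgoing edge, and $v_2$ has only the edge from $s$ as incoming edge. Hence any $s$--$t$ path in $\mathcal{A}'$ passes through at most one $v_2$ (necessarily the second vertex) and at most one $v'_1$ (necessarily the second-to-last), and splits into four cases accordingly. If the path avoids all cut copies, it corresponds to a path in $\mathcal{A} \setminus U$ of length at most $3d/4$. If it ends with $v_1 \to t$ only, its prefix is a path from $s$ to $v$ in $\mathcal{A}$ of length at most $\depth_\mathcal{A}(v) \le 8d/9$, giving total length at most $8d/9 + 1$; the symmetric case starting with $s \to v_2$ is analogous. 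Finally, if the path both starts $s \to v_2$ and ends $v'_1 \to t$, its middle is a directed path from $v$ to $v'$ in $\mathcal{A}$, whose length is at most $\depth_\mathcal{A}(v') - \depth_\mathcal{A}(v) \le 7d/9$ since depth strictly increases along edges, so the total is at most $7d/9 + 2$. Taking the maximum, the depth of $\mathcal{A}'$ is at most $\max(3d/4, 8d/9 + 1, 7d/9 + 2)$, which is at most $9d/10$ whenever $d \ge \sqrt{n}$ with $n$ sufficiently large.
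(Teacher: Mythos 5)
Your proposal is correct and follows essentially the same route as the paper: select $U$ via \autoref{lem:graph-depth-reduction}, cut at each vertex of $U$, and use \autoref{claim:cut-properties} for the vertex/edge/polynomial bookkeeping. The only difference is cosmetic — you bound the depth of $\mathcal{A}'$ by a direct case analysis of $s$--$t$ paths in the final ABP (which is, if anything, slightly more explicit), whereas the paper obtains the same bound by induction over the successive cuts.
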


\begin{proof}
Let $G$ be the underlying graph of the ABP $\mathcal{A}$. Let $U=\set{u_1, \ldots, u_r}$ be the set of vertices guaranteed by \autoref{lem:graph-depth-reduction}, such that $r \le 4m/\log n$. We perform the following sequence of cuts on $\mathcal{A}$. Set $\mathcal{A}_0 := \mathcal{A}$ and for $i \in [r]$, $\mathcal{A}_i = \cut(\mathcal{A}_{i-1}, u_i)$. Finally $\mathcal{A}' = \mathcal{A}_r$.

The statements of the lemma now follow from the properties of cuts as proved in \autoref{claim:cut-properties}. The bound on the number of vertices and edges in $\mathcal{A}'$ is immediate. The claim on the polynomial computed by $\mathcal{A}'$ follows by induction on $i$.

Finally, by induction on $i$, we have that the depth of $\mathcal{A}'$ is at most
\[
\max \{ \depth (\mathcal{A} \setminus U), \depth(u_1) +1 , \ldots, \depth(u_r) + 1,
d-\depth(u_1) + 1, \ldots, d-\depth(u_r)+1 \}\,,
\]
where $\mathcal{A} \setminus U$ is the ABP obtained by removing all vertices in $U$.

By the choice of $U$ as in \autoref{lem:graph-depth-reduction}, for every $i \in [r]$ we have that $d/9 \le  \depth(u_i) \le 8d/9$, and $\depth (\mathcal{A} \setminus U) \le 3d/4$, which implies the required upper bound on the depth of $\mathcal{A}'$ (assuming $n$, and hence $d$, are large enough).
\end{proof}

Repeated applications of \autoref{lem:unlayered-ABP-depth-reduction} give the following statement.

\begin{corollary}\label{cor:depth-reduction-all-the-way}
Let $\mathcal{A}$ be an ABP over a field $\F$, with edge labels of degree at most $\Delta = n^{o(1)}$, computing an $n$-variate polynomial $F$. Further suppose $\mathcal{A}$ has depth at least $\sqrt{n}$, and that the number of edges in $\mathcal{A}$ is at most $ n \log n / (1000 (\log \log n + \log \Delta))$. Let $\tau$ denote the number of vertices in $\mathcal{A}$.

Then, there exists an ABP $\mathcal{A}'$, whose depth is at most $n/\Delta$, which computes a polynomial of the form $F - \sum_{i=1}^r P_i Q_i - \delta$, such that $P_i, Q_i$ are all polynomials without a constant term, $\delta \in \F$ is a field constant, and $r \le n/10$. The number of vertices in $\mathcal{A}'$ is at most $\tau+n/10$.
\end{corollary}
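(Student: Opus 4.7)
The plan is to iteratively apply \autoref{lem:unlayered-ABP-depth-reduction}, constructing a sequence $\mathcal{A} = \mathcal{A}_0, \mathcal{A}_1, \ldots, \mathcal{A}_K$, where $K$ is the first index for which $\depth(\mathcal{A}_K) \le n/\Delta$, and then to set $\mathcal{A}' := \mathcal{A}_K$. Let $m_j$ and $d_j$ denote the number of edges and the depth of $\mathcal{A}_j$, respectively. Since $\Delta = n^{o(1)}$, for $n$ large enough we have $n/\Delta \ge \sqrt{n}$, so whenever $j < K$ we have $d_j > n/\Delta \ge \sqrt{n}$ and the hypothesis of \autoref{lem:unlayered-ABP-depth-reduction} is satisfied. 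Each application guarantees $d_{j} \le (9/10)\, d_{j-1}$, $m_j \le m_{j-1}(1 + 8/\log n)$, introduces at most $4 m_{j-1}/\log n$ new vertices, and modifies the computed polynomial by subtracting at most $4 m_{j-1}/\log n$ new product terms $P_i Q_i$ along with a field constant; all such constants accumulated across iterations can be absorbed into a single $\delta \in \F$.

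Since $d_0 \le m_0 \le n \log n$, the total number of iterations satisfies $K \le \lceil \log_{10/9}(m_0 \Delta / n) \rceil = O(\log \log n + \log \Delta)$. The main obstacle, and the point where the assumption $\Delta = n^{o(1)}$ is used, is to control the geometric growth of the edge count through $K$ iterations. Specifically, one needs the uniform bound
\[
m_j \le m_0 \left(1 + \frac{8}{\log n}\right)^j \le m_0 \exp\!\left(\frac{8K}{\log n}\right) = O(m_0)
\]
for all $j \le K$, which holds because $K/\log n = o(1)$ under the assumption on $\Delta$. Without this control, the per-step production of error terms could blow up and the final count would exceed $n/10$.

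Granting the uniform bound $m_j = O(m_0)$, the total number of error terms accumulated is
\[
r \le \sum_{j=1}^K \frac{4 m_{j-1}}{\log n} = O\!\left(\frac{K \cdot m_0}{\log n}\right) = O\!\left(\frac{m_0 (\log\log n + \log\Delta)}{\log n}\right),
\]
and substituting the hypothesis $m_0 \le n \log n / (1000 (\log\log n + \log\Delta))$ yields $r = O(n/1000) \le n/10$ for $n$ sufficiently large. An identical calculation bounds the total number of new vertices introduced across all iterations by $n/10$, so $\mathcal{A}'$ has at most $\tau + n/10$ vertices. By induction on $j$ using the third conclusion of \autoref{lem:unlayered-ABP-depth-reduction}, the polynomial computed by $\mathcal{A}'$ has the required form $F - \sum_{i=1}^r P_i Q_i - \delta$ with each $P_i, Q_i$ having zero constant term, completing the proof.
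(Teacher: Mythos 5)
Your proposal is correct and follows essentially the same route as the paper's proof: iterate the depth-reduction lemma $O(\log\log n + \log\Delta)$ times while the depth exceeds $\sqrt{n}$, use $(1+8/\log n)^{K} = O(1)$ (valid since $\log\Delta = o(\log n)$) to keep the edge count at $O(m_0)$ throughout, and then sum the per-step error terms and added vertices against the hypothesis on $m_0$ to get the $n/10$ bounds. The only cosmetic difference is your slightly informal constant bookkeeping (e.g.\ writing $r = O(n/1000)$ where the paper tracks explicit constants like $2m_0$ and $7(\log\log n+\log\Delta)$), but the underlying arithmetic checks out.
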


\begin{proof}
Observe that the depth of $\mathcal{A}$ is at most $d := n \log n$. As long as the depth is at least $\sqrt{n}$, apply \autoref{lem:unlayered-ABP-depth-reduction} repeatedly at most $k := 7 (\log \log n + \log \Delta)$ times, to obtain an ABP of depth at most $(0.9)^{k} \cdot d \le n/\Delta$.

The upper bound on the number of summands $P_i Q_i$ and the number of vertices after each application is given as a function of the number of edges, which increases in the process. Hence, we first provide a crude estimate on the number of edges at each step. For $i \in [k]$, let $\mathcal{A}_i$ denote the ABP obtained after the $i$-th application of \autoref{lem:unlayered-ABP-depth-reduction}, and let $m_i$ be the number of edges in that ABP.

We claim that by induction on $i$,  $m_i \le m_0 \cdot (1+8/\log n)^i$. This is true for $i=0$ by definition. For $i \ge 1$, since we maintain the invariant that the depth is at least $\sqrt{n}$, it follows from \autoref{lem:unlayered-ABP-depth-reduction} that
\[
m_i \le m_{i-1} + 8m_{i-1} / \log n = m_{i-1} (1+8/\log n) \le m_0 (1+8/\log n)^{i-1} \cdot (1+8/\log n)\,,
\]
where the last inequality uses the induction hypothesis. Thus, the final ABP has at most
\[
m_{k} \le m_0 (1+8/\log n)^{k} \le 2m_0 =  n \log n / (500 (\log \log n + \log \Delta)) =: M\,
\]
assuming $n$ is large enough (recall that by assumption we have that $\log \Delta = o(\log n)$, so that $\lim_{n \to \infty} (1+8/\log n)^{o(\log n)} = 1$). It is convenient to now use $M$ as a uniform upper bound on the number of edges in all stages of this process, so that each step adds at most $4M /  \log n$ summands and vertices. It now follows that $r$ is at most
\[
\frac{4kM}{\log n} \le \frac{7 (\log \log n + \log \Delta) \cdot 4n}{500( \log \log n + \log \Delta)} \le n/10,
\]
and similarly the total number of vertices added throughout the process is at most $n/10$.
\end{proof}

The lower bound given in \autoref{thm:unlayered ABP} now follows by a simple win-win argument. For convenience, we restate the theorem.

\begin{corollary}\label{cor:unlayered-lower-bound}
Let $\mathcal{A}$ be an ABP over a field $\F$, with edge labels of degree at most $\Delta = n^{o(1)}$, computing $\sum_{i=1}^n x_i ^n$. Then $\mathcal{A}$ has at least $\Omega(n \log n / (\log \log n + \log \Delta)$ edges.
\end{corollary}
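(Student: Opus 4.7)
The plan is a straightforward win-win using the depth reduction of Corollary \ref{cor:depth-reduction-all-the-way} together with the robust lower bound of Theorem \ref{thm:robustHomog}. Let $m$ denote the number of edges in $\mathcal{A}$ and set the threshold $m_0 := n \log n / (1000(\log \log n + \log \Delta))$. If $m > m_0$ we are done; otherwise, I will derive a contradiction by showing that under this edge budget, $\mathcal{A}$ (possibly after a depth reduction) is forced to have $\Omega(n^2/\Delta)$ vertices, a quantity far larger than $m_0$ can support.

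I would first dispose of the shallow case $\depth(\mathcal{A}) < \sqrt{n}$. Here the formal degree of $\mathcal{A}$ is at most $\sqrt{n} \cdot \Delta = n^{1/2 + o(1)} \le n$ (using $\Delta = n^{o(1)}$), and the edge-label degree condition $\Delta \le n/10$ holds for $n$ large. Theorem \ref{thm:robustHomog} therefore applies directly with $r=0$ and yields $|\mathcal{A}| \ge \Omega(n^2/\Delta)$ vertices. Since the number of edges is at least the number of vertices minus one in any reduced ABP, $m \ge \Omega(n^2/\Delta) = n^{2 - o(1)}$, contradicting $m \le m_0 = n^{1 + o(1)}$.

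The interesting case is $\depth(\mathcal{A}) \ge \sqrt{n}$. Here I would invoke Corollary \ref{cor:depth-reduction-all-the-way}, which is applicable precisely because $m \le m_0$, to obtain an ABP $\mathcal{A}'$ of depth at most $n/\Delta$ computing
\[
\sum_{i=1}^n x_i^n \;-\; \sum_{j=1}^r P_j Q_j \;-\; \delta,
\]
with $r \le n/10$, each $P_j, Q_j$ constant-term-free, $\delta \in \F$, and $|\mathcal{A}'| \le |\mathcal{A}| + n/10 \le (m+1) + n/10$. The cut operation only introduces edges with field-constant labels, so the edge-label degrees of $\mathcal{A}'$ remain bounded by $\Delta$, and the formal degree of $\mathcal{A}'$ is at most $(n/\Delta) \cdot \Delta = n$. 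The polynomial computed is exactly of the form required by Theorem \ref{thm:robustHomog} (with $R = -\delta$, $A_j = -P_j$, $B_j = Q_j$), so $\mathcal{A}'$ must have at least $(n/2 - r) n / (2\Delta) \ge \Omega(n^2/\Delta)$ vertices. Comparing this against the upper bound $(m+1) + n/10$ forces $m \ge \Omega(n^2/\Delta) - n/10 - 1$, again contradicting $m \le m_0$ for large $n$.

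The argument is essentially bookkeeping once Corollary \ref{cor:depth-reduction-all-the-way} and Theorem \ref{thm:robustHomog} are in hand; I do not foresee a technical obstacle beyond verifying the numerical inequalities. The key sanity check is that $\Delta = n^{o(1)}$ makes $n^2/\Delta$ genuinely much larger than $m_0$, which it is since $n^2/\Delta = n^{2-o(1)}$ while $m_0 = O(n \log n / \log \log n)$, so the contradiction is comfortable for all sufficiently large $n$.
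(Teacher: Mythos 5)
Your proposal is correct and follows essentially the same win-win argument as the paper: below the same edge threshold, apply Corollary \ref{cor:depth-reduction-all-the-way} and then the robust bound of Theorem \ref{thm:robustHomog} to force $\Omega(n^2/\Delta)$ vertices (hence edges), which is incompatible with the small edge count. Your explicit handling of the shallow case $\depth(\mathcal{A}) < \sqrt{n}$, where the hypothesis of Corollary \ref{cor:depth-reduction-all-the-way} fails but Theorem \ref{thm:robustHomog} applies directly, is a minor completeness point the paper leaves implicit and does not change the route.
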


\begin{proof}
Let $\tau$ denote the number of vertices in $\mathcal{A}$. If the number of edges is at least $n \log n / (1000( \log \log n + \log \Delta))$, then we already have our lower bound. Else, the number of edges is at most $n \log n / (1000( \log \log n + \log \Delta))$. Now, by \autoref{cor:depth-reduction-all-the-way}, there exists an ABP $\mathcal{A}'$, with $\tau+n/10$ vertices and depth at most $n/\Delta$, computing $\sum_{i=1}^n x_i^n - \sum_{j=1}^r P_i Q_i - \delta$, such that $P_j, Q_j$ have no constant term, $r \le n/10$, and $\delta \in \F$.

It thus follows that $\mathcal{A}'$ has formal degree at most $n$. By \autoref{thm:robustHomog}, it has $\Omega(n^2/\Delta)$ vertices, thus $\tau = \Omega(n^2/\Delta)$, so that the number of edges is also $\Omega(n^2/\Delta)$.
\end{proof}

\section{A Lower bound for formulas}

In this section we prove \autoref{thm:main formula}, which we restate below.

\begin{restatable}{theorem}{mainFormThm}\label{thm:formula lb}
Any algebraic formula computing the polynomial $\esym_{(n, 0.1n)}$ over a field of characteristic greater than $0.1n$ has size at least $\Omega \inparen{n^2}.$
\end{restatable}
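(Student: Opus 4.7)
My plan mirrors the two-step strategy from the proof of \autoref{thm:main ABP} in the formula setting. I would first establish a robust lower bound for formulas of small formal degree, and then reduce an arbitrary formula to this regime while controlling the accumulated error.

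For the robust lower bound, I would prove that any formula $\Phi$ over $\F$ of formal degree at most $0.1n$ computing a polynomial of the form $\esym(n, 0.1n) + \sum_{j=1}^{r} A_j B_j + R$, with $A_j(\mathbf{0}) = B_j(\mathbf{0}) = 0$, $\deg(R) \le 0.1n - 1$, and $r \le n/10$, has size $\Omega(n^2)$. Following the template of \autoref{thm:robustHomog}, this would rely on a formula analog of \autoref{lem:structure}: for any formal-degree level $\tau \in [0, 0.1n]$, one can write $[\Phi] = \sum_v [\Phi_v] \cdot Q_v + R_\tau$ where the sum ranges over gates $v$ in $\Phi$ whose subformula has formal degree in a narrow window around $\tau$, and $\deg R_\tau \le 0.1n - 1$. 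An averaging argument over $\tau$ produces a level with at most $O(|\Phi|/n)$ such gates, and then \autoref{lem:varietyDimLB} instantiated with the analog of \autoref{clm:singular locus power sym} for the singular locus of $\esym(n, 0.1n)$ (referenced in the paper as \autoref{lem:esym singular locus robust}) forces this number to be $\Omega(n)$, yielding $|\Phi| \ge \Omega(n^2)$.

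For the degree reduction, assume $|\Phi| \le c n^2$ for a small constant $c$ (otherwise we are done), and let $D > 0.1n$ be the formal degree of $\Phi$. The goal is to produce a formula (or a sum of formulas, in the spirit of the multilayered ABPs of \autoref{sec:layered}) of formal degree at most $2D/3$ and size at most $|\Phi|$, computing $[\Phi] - \sum_{j=1}^{r'} P_j Q_j - \delta$, with $P_j(\mathbf{0}) = Q_j(\mathbf{0}) = 0$ and $r' \le O(|\Phi|/D)$. By an averaging argument, some formal-degree level $\tau \in [D/3, 2D/3]$ is occupied by at most $O(|\Phi|/D)$ gates; cutting $\Phi$ at each such gate by replacing its subformula by its constant term truncates every root-to-leaf path through the cut, and a telescoping expansion that exploits the multilinear dependence of a formula on its disjoint subformulas collects exactly one error term $P_j Q_j$ per cut. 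Iterating this reduction roughly $O(\log n)$ times (since $D_0 \le c n^2$ and each step shrinks $D$ by a factor $2/3$) drives the formal degree below $0.1n$. Because $D$ decreases geometrically, the total number of accumulated error terms is dominated by the last iteration, giving $O(|\Phi|/(0.1n)) \le n/10$ terms for $c$ sufficiently small. Invoking the robust lower bound of Step~1 then forces $|\Phi| = \Omega(n^2)$, contradicting the assumption unless the bound already held.

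The main technical obstacle I foresee is in the degree-reduction step. Unlike the ABP setting, where removing a layer via \autoref{clm:last layer removal} does not inflate the multilayered ABP's size, the tree structure of a formula means that naive cuts can duplicate subtrees and inflate the size. Designing the cut operation so that the formal degree drops by a constant factor, only $O(|\Phi|/D)$ new error terms are introduced, and the total size does not blow up is the delicate point where I expect to spend most of the effort — plausibly by working within a sum-of-formulas model analogous to the multilayered ABPs of \autoref{sec:layered}, and by combining a formula-specific centroid-style balance with the formal-degree pigeonhole so that each cut happens at a gate whose subformula is simultaneously "middle" in formal degree and large enough in size to make the replacement by a constant genuinely non-expansive.
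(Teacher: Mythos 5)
Your overall architecture (a robust lower bound for formulas of formal degree at most $0.1n$, plus an iterated degree-reduction accumulating a geometric number of error terms) is exactly the paper's, and your instinct that the cut pieces should be kept as separate summands scaled by constants is indeed how the paper avoids size blow-up. But there is a genuine gap in the mechanism you rely on in both steps: the claimed formula analog of \autoref{lem:structure}, i.e.\ a decomposition $[\Phi]=\sum_v [\Phi_v]\cdot Q_v+R_\tau$ over gates whose formal degree lies in a \emph{narrow} window around $\tau$, together with an averaging over $\Theta(n)$ such windows. This is false for formulas. Unlike ABP layers, formal degree in a formula jumps at product gates: for a balanced product tree computing $x_1x_2\cdots x_n$ the only formal degrees present are powers of two, so for most $\tau$ a narrow band around $\tau$ contains no gates at all, and then neither does cutting at that band capture the high-degree part of the polynomial (so no such decomposition with $\deg R_\tau\le 0.1n-1$ exists), nor does it reduce the formal degree in your reduction step. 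The windows must be multiplicatively wide, of the form $[t,2t-1]$ (this is \autoref{lem: intermediate degree vertex}: walking down from the root through the higher-degree child, the degree shrinks by at most a factor of two per step), and then there are only $O(\log n)$ disjoint windows, so the averaging you propose yields only $O(|\Phi|/\log n)$ gates per window, far too weak for a quadratic bound.

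The paper's replacement for your averaging is a counting argument specific to trees: extract gates of formal degree in $[\lfloor d/3\rfloor, 2\lfloor d/3\rfloor-1]$ one at a time (\autoref{lem:formula decomposition}), note that the extracted subformulas are pairwise disjoint, and use \autoref{obs:deg to size} (size of a formula is at least its formal degree) to conclude that the number $k$ of extracted error terms satisfies $k\lfloor d/3\rfloor\le s$. In the low-degree regime ($d\le 0.1n$) no averaging is needed at all: the variety-dimension argument with \autoref{lem:esym singular locus robust} forces $k\ge 0.35n$, which together with $k\le s/\lfloor 0.1n/3\rfloor$ gives $s=\Omega(n^2)$ directly (\autoref{thm:robustHomogForm}); in the degree-reduction step the same disjointness bound gives the $O(\tau/d_k)$ error terms per iteration that you wanted, and keeping each extracted subformula as a separate constant-scaled summand keeps the total size at most $\tau$. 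So the statement you are trying to prove per step is the right one, but you need this disjointness-plus-formal-degree counting in place of the level decomposition and averaging, which simply do not exist in the formula setting.
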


Recall that the polynomial we prove the lower bound for is the elementary symmetric polynomial on $n$-variables of degree $0.1n$, which is defined as follows.
\[
\esym(n, 0.1n) = \sum_{S \subseteq [n], \abs{S} = 0.1n}\prod_{j \in S} x_j
\]
A few properties of the elementary symmetric polynomials are needed for our proof, and we first discuss them. These seem to be well known and are not original to this work. 

\subsection{Properties of elementary symmetric polynomials}
\begin{lemma}[\cite{MZ17, LMP19} ]\label{lem:esym homogen case}
	Let $n, d \in \N$ be natural numbers with $2\leq d \leq n$. Then, over any field of characteristic at least $d+1$, the dimension of the variety $V$ defined as 
	\[
		V = \mathbb{V}\inparen{\setdef{\partial_{x_i} \esym_{(n, d)}}{i \in [n]}}
	\]
	is at most $d-2$. Here,  $\partial_{x_i} \esym_{(n, d)}$ is the partial derivative of $\esym_{(n, d)}$ with respect to the variable $x_i$. 
\end{lemma}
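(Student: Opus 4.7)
The plan is to show that on the common zero locus $V$ the coordinates $(x_1, \ldots, x_n)$ can take at most $d-1$ distinct values, and then to stratify $V$ by the partition recording equal coordinates and bound each stratum's contribution.

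Write $e_k := \esym_{(n,k)}$ and let $e_k^{(i)}$ denote the degree-$k$ elementary symmetric polynomial in the variables $\{x_j\}_{j \ne i}$, so that $\partial_{x_i}\esym_{(n,d)} = e_{d-1}^{(i)}$. Iterating the standard recursion $e_k = e_k^{(i)} + x_i \, e_{k-1}^{(i)}$ yields the closed form
\[
e_{d-1}^{(i)}(\vx) \;=\; \sum_{j=0}^{d-1}(-x_i)^j \, e_{d-1-j}(\vx).
\]
Setting $h_{\vx}(z) := \sum_{j=0}^{d-1}(-z)^j e_{d-1-j}(\vx)$, the condition that $\vx \in V$ becomes $h_{\vx}(x_i) = 0$ for every $i$. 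Since $h_{\vx}$ is a univariate polynomial in $z$ of degree exactly $d-1$ (its leading coefficient is $(-1)^{d-1} \ne 0$), each coordinate $x_i$ lies in the set of at most $d-1$ roots of $h_{\vx}$, so the tuple $(x_1, \ldots, x_n)$ takes at most $d-1$ distinct values on $V$.

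Next I would stratify $V$ by partition type: for each partition of $[n]$ into $s \le d-1$ nonempty blocks $I_1, \ldots, I_s$ of sizes $k_m = |I_m|$, the linear subspace $L_I := \{\vx : x_i = x_j \text{ whenever } i,j \text{ lie in the same block}\}$ has dimension $s$ and is parameterized by the common values $(\alpha_1, \ldots, \alpha_s) \in \F^s$. When $s \le d - 2$, we get $\dim(V \cap L_I) \le s \le d-2$ for free. The remaining case is the top stratum $s = d-1$: there $h_{\vx}$ has degree $d-1$ and vanishes at the $d-1$ distinct values $\alpha_1, \ldots, \alpha_{d-1}$, so it factors as $h_{\vx}(z) = (-1)^{d-1}\prod_{m=1}^{d-1}(z - \alpha_m)$. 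Matching the coefficients of $h_{\vx}$ (which are the $e_k(\vx)$) against this factorization gives the system
\[
e_k\bigl(\alpha_1^{(k_1)}, \ldots, \alpha_{d-1}^{(k_{d-1})}\bigr) \;=\; \tilde{e}_k(\alpha_1, \ldots, \alpha_{d-1}) \qquad (k = 1, \ldots, d-1),
\]
where $\tilde{e}_k$ is the elementary symmetric polynomial in the $d-1$ distinct values and the left side uses multiplicities $k_m$.

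To conclude, I would argue that at least one of these $d-1$ equations is a nontrivial polynomial in $(\alpha_1, \ldots, \alpha_{d-1})$, which drops $\dim(V \cap L_I)$ by at least one, down to $d-2$. The $k=1$ equation simplifies to $\sum_{m=1}^{d-1}(k_m - 1)\alpha_m = 0$; since $\sum_m k_m = n \ge d > s = d-1$ forces some $k_m \ge 2$, and $\operatorname{char}(\F) \ge d+1$ keeps $k_m - 1$ nonzero in the typical regime, this linear equation is nontrivial. The main obstacle is precisely this final nondegeneracy step: in the corner configurations where every $k_m \equiv 1 \pmod{\operatorname{char}(\F)}$ (possible only when $n \equiv s \pmod{\operatorname{char}(\F)}$), the $k=1$ equation can become trivial, and one must ascend to higher-$k$ equations in the above system and carry out a more careful combinatorial analysis of the multiplicity pattern to extract a nontrivial constraint. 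Once this is secured for every partition type, taking the union over all strata gives the desired bound $\dim V \le d - 2$.
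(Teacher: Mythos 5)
Your route is genuinely different from the paper's: the paper argues by induction on $d$ (via Euler's formula) that every common zero of the partials has at most $d-2$ nonzero coordinates, so $V$ sits inside a union of coordinate subspaces, whereas you bound the number of distinct coordinate values by $d-1$ via the univariate polynomial $h_{\vx}$ and then stratify by partition type; that reduction and the coefficient-matching system on the top stratum are correct. The problem is the gap you flag at the end, and it is not a removable technicality under the stated hypothesis. If $\Char(\F)=p$ and every block size satisfies $k_m\equiv 1\pmod p$, then not only the $k=1$ equation but the \emph{entire} system degenerates: $\prod_m(1+\alpha_m t)^{k_m}=\prod_m(1+\alpha_m t)\cdot\prod_m\bigl(1+\alpha_m^p t^p\bigr)^{(k_m-1)/p}$, and the second factor is $1$ modulo $t^{p}$ with $p\ge d+1>d-1$, so all the equations $e_k=\tilde e_k$ for $k\le d-1$ hold identically in $(\alpha_1,\dots,\alpha_{d-1})$ and there is no higher-$k$ constraint to ascend to. Worse, such strata genuinely lie inside $V$, so no refinement of your analysis can rescue the bound there: for $n=7$, $d=3$, $\Char(\F)=5$ (which satisfies $\Char(\F)\ge d+1$), the point $(\alpha,\beta,\beta,\beta,\beta,\beta,\beta)$ gives $\partial_{x_1}e_3=15\beta^2=0$ and $\partial_{x_i}e_3=5\alpha\beta+10\beta^2=0$ for $i\ge 2$, so a two-dimensional plane lies in $V$ while $d-2=1$. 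In other words, with the literal hypothesis $\Char(\F)\ge d+1$ the dimension bound itself fails once $p\le n-d+1$ allows all block sizes to be $\equiv 1\pmod p$; your final nondegeneracy step cannot be secured because it is false. (For what it is worth, the paper's own base case $d=2$ has the same blind spot: it asserts that the forms $\sum_{j\ne i}x_j$ vanish only at $\mathbf{0}$, which already fails when $\Char(\F)$ divides $n-1$.)

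In the regime where the statement is actually safe, namely $\Char(\F)=0$ or $\Char(\F)>n-d+1$ (in particular $\Char(\F)>n$), your argument does close: since $n\ge d>d-1$, some block has $k_m\ge 2$, the corner case $k_m\equiv 1\pmod p$ is impossible, so the $k=1$ equation $\sum_m(k_m-1)\alpha_m=0$ is a nonzero linear form cutting each top stratum down to dimension $d-2$, and the strata with at most $d-2$ blocks are handled for free. Under that stronger characteristic assumption your proof is a correct and arguably more direct alternative to the paper's inductive one; as written, for characteristics in the range $[d+1,\,n-d+1]$ it has a genuine, unfixable gap.
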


We quickly sketch the outline of  a proof of this lemma from \cite{MZ17} for completeness. The following claim  immediately implies the lemma. 

\begin{claim}\label{clm:sing variety of esym}
Let $\va \in \F^n$ be a point in the variety $V$ as defined in \autoref{lem:esym homogen case}. Then, at least $n - (d-2)$ coordinates of $\va$ are equal to $0$. 
\end{claim}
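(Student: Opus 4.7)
The plan is to argue by contradiction: assume $\va \in V$ has $k$ nonzero coordinates with $k \geq d-1$, and derive a contradiction. Without loss of generality, $a_1, \ldots, a_k \neq 0$ and $a_{k+1} = \cdots = a_n = 0$. Since zero coordinates contribute nothing to elementary symmetric polynomials, the hypothesis $\partial_{x_i}\esym_{(n,d)}(\va) = 0$ rewrites as: for each $i \in [k]$, $E_{d-1}^{(i)} := \esym_{d-1}(a_1, \ldots, \widehat{a_i}, \ldots, a_k) = 0$, and (when $k < n$, by choosing any $i > k$) $E_{d-1} := \esym_{d-1}(a_1, \ldots, a_k) = 0$, where I abbreviate $E_j := \esym_j(a_1, \ldots, a_k)$.

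If $k = d-1$ then since $d \leq n$ we have $k < n$, so the second condition gives $E_{d-1} = a_1 \cdots a_{d-1} \neq 0$, a contradiction. So assume $k \geq d$. I would introduce
\[
h(t) := \sum_{m=0}^{d-1}(-t)^m E_{d-1-m},
\]
a polynomial of degree exactly $d-1$. The recursion $E_j^{(i)} = E_j - a_i E_{j-1}^{(i)}$ (obtained by splitting a monomial based on whether it contains the factor $a_i$) unfolds to $E_j^{(i)} = \sum_m (-a_i)^m E_{j-m}$, and in particular $h(a_i) = E_{d-1}^{(i)} = 0$ for every $i \in [k]$. Thus every distinct value among $a_1, \ldots, a_k$ is a root of $h$, yielding at most $d-1$ distinct values.

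The heart of the argument is upgrading ``at most $d-1$ distinct values'' to ``at most $d-1$ values counted with multiplicity''. One natural attack is iterative: the identity $E_{d-1} = E_{d-1}^{(i)} + a_i E_{d-2}^{(i)}$ together with $a_i \neq 0$ forces $E_{d-2}^{(i)} = 0$; summing then gives $(k-d+2)\,E_{d-2} = 0$, so $E_{d-2} = 0$ under a suitable characteristic assumption, and the descent continues until one reaches the contradiction $E_0 = 1 \neq 0$. A complementary approach is to show directly that $h$ has a zero of multiplicity $\geq r$ at every value $b$ appearing $r$ times in the multiset $\{a_1, \ldots, a_k\}$. Writing $h(t)$ as the coefficient of $u^{d-1}$ in $\prod_i(1+a_i u)/(1+tu)$, the $j$-th derivative $h^{(j)}(b)$ (for $j < r$) equals, up to scalars, the coefficient of $u^{d-1-j}$ in $(1+bu)^{r-j-1}\prod_{l: a_l \neq b}(1+a_l u)$; these should vanish via the cascade of identities on $E_j^{(i)}$ combined with multi-index analogs $E_j^{(i_1, \ldots, i_s)} = 0$ derived by iterating the same recursion on the shorter tuples. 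Summing multiplicities over distinct values then forces $k \leq \deg h = d-1$, and combined with the already-ruled-out case $k = d-1$ delivers $k \leq d-2$.

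The main obstacle I anticipate is executing this multiplicity step carefully and ensuring that the characteristic hypothesis $\Char(\F) \geq d+1$ is enough throughout the cascade (the naive ``sum and divide'' strategy needs $\Char(\F)$ coprime to scalars like $k-d+2, \ldots, k-1$, which requires extra care when $k$ is large relative to $d$). The case $k = n$ is an additional wrinkle: there is no $i > k$ to give $E_{d-1} = 0$ directly, so one first invokes Euler's identity $d \cdot \esym_{(n,d)}(\va) = \sum_i a_i\, \partial_{x_i}\esym_{(n,d)}(\va) = 0$ together with $\Char(\F) \nmid d$ to obtain $E_d = 0$, from which $E_d = E_d^{(i)} + a_i E_{d-1}^{(i)}$ lets the same machinery proceed to the desired contradiction.
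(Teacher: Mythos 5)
Your setup is sound: the reduction to the nonzero coordinates, the polynomial $h(t)=\sum_{m=0}^{d-1}(-t)^m E_{d-1-m}$ with $h(a_i)=E_{d-1}^{(i)}=0$, and the treatment of the cases $k=d-1$ and $k=n$ (via \autoref{fact:euler}) are all correct. But the step you yourself call the heart of the argument is missing, and it is not a gap that ``extra care'' can close under the hypothesis actually in force, namely $\Char(\F)\ge d+1$. Take $\Char(\F)=5$, $n=7$, $d=3$, $\va=(1,1,1,1,1,1,0)$. Then $\partial_{x_i}\esym_{(7,3)}(\va)$ equals $\esym_2$ of the remaining six coordinates, i.e.\ $15$ (for $i=7$) or $10$ (for $i\le 6$), both $\equiv 0 \pmod 5$, so $\va\in V$ although it has six nonzero coordinates; thus the statement itself fails at this level of generality. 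In this example $h(t)=t^2-6t+15\equiv t(t-1)\pmod 5$, so the value $b=1$, which occurs $r=6$ times among the $a_i$, is a root of multiplicity exactly one: the multiplicity upgrade you hope to prove via derivatives of the generating function is simply false here, and your iterative descent stalls at its very first division, since $\sum_i E_{d-2}^{(i)}=(k-d+2)E_{d-2}=5E_1\equiv 0$ carries no information while $E_1=6\not\equiv 0$. So the obstacle you flagged concerning the scalars $k-d+2,\dots,k-1$ is not a technicality: when $\Char(\F)$ divides one of them the conclusion can genuinely fail, and no argument can avoid strengthening the characteristic assumption (to $\Char(\F)=0$ or $\Char(\F)>n$, say).

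For comparison, the paper argues by induction on $d$: summing $\partial_{x_i} \esym_{(n, d)} = \esym_{(n, d-1)} - x_i \cdot \partial_{x_i} \esym_{(n, d-1)}$ and using Euler's formula gives $(n-d+1)\esym_{(n,d-1)}(\va)=0$, hence $\esym_{(n,d-1)}(\va)=0$ and $\partial_{x_i}\esym_{(n,d-1)}(\va)=0$ on the support of $\va$, and one restricts to the $n-k$ nonzero coordinates and invokes the inductive hypothesis for $\esym_{(n-k,d-1)}$. That proof has exactly the sensitivity you identified: it divides by $n-d+1$ (and by its analogues deeper in the recursion), and its base case $d=2$ fails whenever $\Char(\F)\mid n-1$ (the all-ones point), so it too really needs $\Char(\F)=0$ or $\Char(\F)>n$. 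Under such a hypothesis your first, iterative attack closes completely and you never need the multiplicity statement: $E_{d-1}=E_{d-1}^{(i)}=0$ forces $E_{d-2}^{(i)}=0$, summing gives $E_{d-2}=0$ because $0<k-d+2\le n<\Char(\F)$, and the descent terminates in the contradiction $E_0^{(i)}=1=0$, with the case $k=n$ handled by Euler exactly as you describe; this would be a correct and genuinely different (non-inductive, single-variable) route. As written, however, with the central multiplicity step unproved (and false as stated) and the characteristic issue unresolved, the proposal is not yet a proof of the claim.
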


To see that the lemma follows from the claim, observe that \autoref{clm:sing variety of esym} implies that the variety $V$ is a subset of set of $\cup_{S \subseteq n, \abs{S} = d-2} V_S$, where $V_S (\subseteq \F^n)$ is the set of points in $\F^n$ where the coordinates in $S$ are all set to zero, and the other coordinates take all possible values in $\F$. Thus, each $V_S$ is a variety of dimension $d-2$, and $V$ is contained in a union of such varieties. Therefore, the dimension of $V$ cannot exceed $d-2$. We now sketch the proof of the claim. For that we need the following folklore fact, often attributed to Euler. 

\begin{fact}[Euler's formula for differentiation of homogeneous polynomials]
  \label{fact:euler}
Suppose $A(x_1,\ldots, x_k)$ is a homogeneous polynomial of degree $t$. Then $\sum_{i=1}^k x_i \cdot \frac{\partial A}{\partial x_i} = t \cdot A(x_1,\ldots, x_k)$. 
\end{fact}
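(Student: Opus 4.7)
The plan is to verify Euler's identity for the case of a single monomial and then extend to arbitrary homogeneous polynomials of degree $t$ by linearity, since both sides of the claimed identity are $\F$-linear in $A$. This works over an arbitrary field because the partial derivatives are defined formally (as $\F$-linear operators on the polynomial ring), and so is the multiplication by $x_i$, so the operator $D := \sum_{i=1}^k x_i \cdot \partial/\partial x_i$ is $\F$-linear on $\F[x_1,\ldots,x_k]$. It therefore suffices to check that $D$ acts as multiplication by $t$ on every monomial of total degree $t$.

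Concretely, I would fix an exponent vector $(a_1,\ldots,a_k) \in \mathbb{Z}_{\ge 0}^k$ with $a_1 + \cdots + a_k = t$ and compute, using the formal rule $\partial x_i^{a_i} / \partial x_i = a_i x_i^{a_i - 1}$ (interpreting $a_i$ as the image of the integer $a_i$ in $\F$), that
\[
x_i \cdot \frac{\partial}{\partial x_i}\bigl(x_1^{a_1}\cdots x_k^{a_k}\bigr) \;=\; a_i \cdot x_1^{a_1}\cdots x_k^{a_k}.
\]
Summing over $i \in [k]$ yields
\[
D\bigl(x_1^{a_1}\cdots x_k^{a_k}\bigr) \;=\; (a_1+\cdots+a_k)\cdot x_1^{a_1}\cdots x_k^{a_k} \;=\; t \cdot x_1^{a_1}\cdots x_k^{a_k},
\]
which is exactly the desired identity on monomials of degree $t$.

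To finish, I would write a general homogeneous polynomial of degree $t$ as $A = \sum_{\va} c_{\va} \vx^{\va}$, where the sum is over exponent vectors $\va$ with $|\va| = t$ and $c_{\va} \in \F$. By $\F$-linearity of $D$ and the monomial computation above,
\[
D(A) \;=\; \sum_{\va} c_{\va}\, D(\vx^{\va}) \;=\; \sum_{\va} c_{\va}\, t\, \vx^{\va} \;=\; t\cdot A,
\]
which is the claim. There is essentially no obstacle here: the only subtlety worth flagging is that the statement requires us to interpret the integer $t$ as an element of $\F$ (so the identity is a polynomial identity over $\F$), which is fine in any characteristic, and that partial derivatives are being used in their formal/algebraic sense rather than analytic, so no appeal to calculus or to the identity $A(\lambda\vx) = \lambda^t A(\vx)$ differentiated in $\lambda$ is needed.
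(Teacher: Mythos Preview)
Your proof is correct: verifying the identity on monomials and extending by $\F$-linearity of the operator $D = \sum_i x_i \partial_{x_i}$ is the standard argument, and your remarks about formal derivatives and arbitrary characteristic are apt. The paper itself does not supply a proof of this fact---it is stated as folklore attributed to Euler---so there is no alternative approach to compare against.
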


\begin{proofof}{\autoref{clm:sing variety of esym}}
	The proof is via an induction on the degree $d$ of the elementary symmetric polynomials being considered. Note that we only consider the cases when $2 \leq d \leq n$.
	
	For the base case, we observe that the claim is true whenever $d=2$ and $n \geq 2$. In this case, the partial derivative $\partial_{x_i} \esym_{(n, 1)} = \sum_{j \neq i} x_j$. Now for any $\va \in \F^n$, all these linear forms $\sum_{j \neq i} a_j$ for $i \in [n]$ can vanish if and only if each $a_j$ is zero, thus implying the claim. 
	
	For the induction step, let us fix $d > 2$ and assume that the claim holds for $\esym_{n, r}$ for all $r \in \N$ with $2 \leq r \leq d-1$ and $n \geq r$. We then prove the claim for $\esym_{(n, d)}$ for every $n \geq d$.
	
	For any $n \geq d$, from the definition of elementary symmetric polynomials, observe that 
	\begin{equation}\label{eqn:step1}
	\partial_{x_i} \esym_{(n, d)} = \esym_{(n, d-1)} - x_i \cdot \partial_{x_i} \esym_{(n, d-1)} \, .
	\end{equation}
	If we add up the equations above for each $i \in [n]$, we get 
	\[
		\sum_{i\in [n]} \partial_{x_i} \esym_{(n, d)} = n \cdot \esym_{(n, d-1)} - \sum_{i \in [n]} x_i \cdot \partial_{x_i} \esym_{(n, d-1)} \, .
	\]
	
	\noindent From \autoref{fact:euler}, this can be simplified to 
	\[
		\sum_{i\in [n]} \partial_{x_i} \esym_{(n, d)} = n \cdot \esym_{(n, d-1)} - (d-1) \cdot \esym_{(n, d-1)} \, .
	\]
	Thus, it follows that for every $\va \in \F^n$ where all the first order partial derivatives of $\esym_{(n, d)}$ vanish, it must be the case that $\esym_{(n, d-1)}$ is also zero at $\va$. Moreover, from \autoref{eqn:step1}, this implies that each of the polynomials in the set 
	\[
		\setdef{x_i \cdot \partial_{x_i} \esym_{(n, d-1)}}{i \in [n]} \, 
	\]
	must also vanish at $\va$. Now if $k$ of the coordinates of $\va$ are equal to $0$, and $S$ is the subset of non-zero coordinates of $\va$, it must be the case that the polynomials 
	\[
		\setdef{\partial_{x_i} \esym_{(n-k, d-1)}}{i \in S} \, 
	\]
	vanish at the point $\va_{S} \in \F^{n-k}$, where $\va_S$ is the $n-k$ dimensional vector obtained by projecting $\va$ to the set $S$ of its coordinates.
	
	By the induction hypothesis, we know that for every $r \leq d-1$ and $m \geq r$, any common zero of all the first order partial derivatives of $\esym_{(m, r)}$ must be zero in at least $m-(r-2)$ coordinates. But $\va_S$ is a common zero with support $n-k$ of all the first order partial derivatives of $\esym_{(n-k, d-1)}$.
	
	Thus, if $n-k \geq d-1 (\implies k \leq (n-d+1))$, then by the induction hypothesis, 
	\[
		\text{ no. of zero co-ordinates in } \va_S = 0 \geq n-k-(d-3) \implies k \geq n-d+3 
	\] 
	which would lead to a contradiction. The only other case then, is that 
	\[
		d-1 > n-k \implies k \geq n-(d-2).
	\]
	Thus, $\va$ is zero on at least $n-(d-2)$ of its coordinates. 
\end{proofof}

\noindent We will need the following strengthening of this lemma by Limaye, Mittal and Pareek \cite{LMP19}. 
\begin{lemma}[\cite{MZ17, LMP19}]\label{lem:esym singular locus robust}
	Let $n, d \in \N$ be natural numbers with $2\leq d \leq n$ and let $\F$ be a field of characteristic greater than $d$. Let $R_1, R_2, \ldots, R_n \in \F[\vx]$ be polynomials of degree at most $d-2$. Then, the dimension of the variety $V$ defined as 
	\[
	V = \mathbb{V} \inparen{\setdef{\partial_{x_i} \esym_{(n, d)} - R_i}{i \in [n] }}
	\]
	is at most $d-2$. 
\end{lemma}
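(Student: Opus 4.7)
The plan is to reduce \autoref{lem:esym singular locus robust} to \autoref{lem:esym homogen case} via a homogenization argument, exploiting the fact that the perturbation polynomials $R_i$ have strictly smaller degree than the leading terms $\partial_{x_i}\esym_{(n,d)}$, so the perturbation vanishes ``at infinity''.

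First I would introduce a new variable $y$ and work in projective space $\mathbb{P}^n$ with homogeneous coordinates $[x_1:\cdots:x_n:y]$. For each $i \in [n]$, the plan is to homogenize $\partial_{x_i}\esym_{(n,d)} - R_i$ (which has total degree at most $d-1$) to a homogeneous polynomial of degree $d-1$, namely $\tilde{G}_i(\vx, y) := \partial_{x_i}\esym_{(n,d)}(\vx) - \tilde{R}_i(\vx, y)$, where $\tilde{R}_i$ is the degree $d-1$ homogenization of $R_i$. Since $\deg R_i \le d-2$, every monomial of $\tilde{R}_i$ will be divisible by $y$. Let $\bar{V} \subseteq \mathbb{P}^n$ denote the projective variety cut out by $\tilde{G}_1, \ldots, \tilde{G}_n$.

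Setting $y = 1$ recovers the original system of equations, so $V = \bar{V} \cap \{y = 1\}$ and consequently $\dim V \le \dim \bar{V}$. On the other hand, setting $y = 0$ annihilates every $\tilde{R}_i$, and hence $\bar{V} \cap \{y = 0\}$ is the projectivization (inside $\{y=0\} \cong \mathbb{P}^{n-1}$) of the unperturbed critical locus $\mathbb{V}(\{\partial_{x_i}\esym_{(n,d)}\}_{i \in [n]})$. By \autoref{lem:esym homogen case} this affine critical locus has dimension at most $d-2$, and because each $\partial_{x_i}\esym_{(n,d)}$ is homogeneous the locus is a cone, so its projectivization has dimension at most $d-3$.

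To conclude, I would apply Krull's principal ideal theorem to each irreducible component $C$ of $\bar{V}$ together with the hyperplane $\{y=0\}$. If $C \not\subseteq \{y=0\}$ then $\dim(C \cap \{y=0\}) = \dim C - 1$, which combined with the above bound forces $\dim C \le d-2$; if instead $C \subseteq \{y=0\}$ then $\dim C \le d-3 \le d-2$. Either way every component has dimension at most $d-2$, so $\dim V \le \dim \bar{V} \le d-2$. I expect the main subtlety to be the projective versus affine dimension bookkeeping: verifying carefully that the affine bound of $d-2$ from \autoref{lem:esym homogen case} translates, via the cone structure, to the projective bound of $d-3$ on the hyperplane section at infinity, and that the identification $V = \bar{V} \cap \{y=1\}$ really holds set-theoretically so that $\dim V \le \dim \bar{V}$ is valid.
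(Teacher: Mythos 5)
Your argument is correct, but it takes a genuinely different route from the paper. The paper proves a general algebraic statement (\autoref{lem:dimVarWithError}): if homogeneous generators $f_i$ are perturbed by lower-degree terms $R_i$, then $\dim \mathbb{V}(\langle f_i + R_i\rangle) \le \dim \mathbb{V}(\langle f_i\rangle)$; this is shown by fixing a degree-graded monomial order, observing $\LM(I) \subseteq \LM(I')$ (any leading monomial of $\sum g_i f_i$ survives in $\sum \Hom_{d-d_i}(g_i)(f_i+R_i)$), and invoking the standard fact $\dim \mathbb{V}(\LM(J)) = \dim \mathbb{V}(J)$; the lemma then follows by combining with \autoref{lem:esym homogen case}. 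Your proof is the geometric counterpart of the same phenomenon: you pass to the projective closure in $\mathbb{P}^n$, note that the lower-degree perturbations vanish on the hyperplane at infinity so that $\bar V \cap \{y=0\}$ is the projectivized cone of the unperturbed critical locus (dimension at most $d-3$ by \autoref{lem:esym homogen case}), and then bound each component of $\bar V$ via the projective dimension theorem/Krull. This works equally well and yields the same generality (homogeneous leading forms plus lower-order error), and it avoids monomial orders entirely; the trade-off is that it needs slightly more careful bookkeeping than the Gr\"obner route — the equality $\dim(C \cap \{y=0\}) = \dim C - 1$ requires $C \cap \{y=0\} \neq \emptyset$, which holds when $\dim C \ge 1$ over an algebraically closed field, while zero-dimensional components are trivially fine since $d \ge 2$, and the empty-intersection and empty-projectivization conventions must be handled as you indicate. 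Note also that both proofs implicitly require $\F$ algebraically closed (or passing to the closure), which the paper arranges at the point where the lemma is applied.
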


\noindent For the sake of completeness, we provide a proof for a slightly more general statement. In this regard, we first introduce a couple of notations.

\begin{notation}
	For a polynomial $f$, let $\Hom_{d}(f)$ denote the $d$-th homogeneous component of $f$. Also, suppose we fix an ordering $<$ on the monomials in $\F[\vecx]$ that is graded according to degree. Then for any ideal $I \subseteq \F[\vecx]$, we denote by $\LM(I)$ the leading monomial ideal of $I$. That is, 
	\[
		\LM(I) = \setdef{m}{m \text{ is a leading monomial (w.r.t. $<$) for some polynomial in } I}. \qedhere
	\]
\end{notation}

\begin{lemma}\label{lem:dimVarWithError}
	Let $\set{f_1, \ldots, f_k}$ be a set of homogeneous polynomials such that $\deg(f_i) = d_i$. Let $R_1, \ldots, R_k$ be polynomials such that $\deg(R_i) < d_i$, and let $I = \inangle{f_1, \ldots, f_k}$ and $I' = \inangle{f_1 + R_1, \ldots. f_k+R_k}$. Then, 
	\[
	\dim(\mathbb{V}(I')) \leq \dim(\mathbb{V}(I))
	\]
\end{lemma}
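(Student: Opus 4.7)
My plan is to introduce a new variable $z$ and use a homogenization trick to interpolate between $\mathbb{V}(I)$ and $\mathbb{V}(I')$ inside a single variety in $\F^{n+1}$. For each $i$, let $\tilde{f}_i(\vx, z) \in \F[\vx, z]$ denote the homogenization to degree $d_i$ of $f_i + R_i$, i.e.\ every monomial of degree $e$ in $f_i + R_i$ is multiplied by $z^{d_i - e}$. Since $f_i$ is already homogeneous of degree $d_i$ it is unchanged, while every term coming from $R_i$ (which has degree strictly less than $d_i$) picks up a positive power of $z$. Thus each $\tilde{f}_i$ is homogeneous of degree $d_i$ in $(\vx, z)$, and satisfies $\tilde{f}_i(\vx, 0) = f_i(\vx)$ as well as $\tilde{f}_i(\vx, 1) = f_i(\vx) + R_i(\vx)$. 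Set $\tilde{I} = \inangle{\tilde{f}_1, \ldots, \tilde{f}_k}$ and $\tilde{V} = \mathbb{V}(\tilde{I}) \subseteq \F^{n+1}$.

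Next I would identify $\mathbb{V}(I)$ and $\mathbb{V}(I')$ as slices of $\tilde V$. From $\tilde f_i(\vx, 0) = f_i(\vx)$, the slice $\tilde V \cap \{z = 0\}$ coincides with $\mathbb{V}(I) \times \{0\}$ and hence has the same dimension as $\mathbb{V}(I)$. For the other slice, since $\tilde I$ is a homogeneous ideal, $\tilde V$ is closed under the scaling action $(\vx, z) \mapsto (\lambda \vx, \lambda z)$. The map $(\vy, \lambda) \mapsto (\lambda \vy, \lambda)$ is then a bijection from $\mathbb{V}(I') \times \F^*$ onto $\tilde V \cap \{z \ne 0\}$, because for $\lambda \neq 0$ the point $(\vx, \lambda)$ lies in $\tilde V$ iff $(\vx/\lambda, 1)$ does, iff $\vx/\lambda \in \mathbb{V}(I')$. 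Consequently, $\dim \tilde V \geq \dim \mathbb{V}(I') + 1$.

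To conclude, assume $\mathbb{V}(I') \neq \emptyset$ (otherwise the inequality is trivial). Then the bijection above produces points of $\tilde V$ with $z \neq 0$, so $\tilde V$ is not contained in the hyperplane $\{z = 0\}$. By Krull's Hauptidealsatz, cutting $\tilde V$ by the single equation $z = 0$ reduces its dimension by at most one, which gives
\[
\dim \mathbb{V}(I) \;=\; \dim(\tilde V \cap \{z = 0\}) \;\geq\; \dim \tilde V - 1 \;\geq\; \dim \mathbb{V}(I'),
\]
as desired. The one step requiring care is ensuring $\tilde V \not\subseteq \{z = 0\}$ before invoking Krull's theorem, which is handled by the case split on whether $\mathbb{V}(I')$ is empty; no other step looks delicate. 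An essentially equivalent reformulation is to view $\tilde{I}$ as defining a flat family over $\F[z]$ and invoke upper-semicontinuity of fiber dimension for the projection onto the $z$-coordinate, but the elementary hyperplane-section argument above seems cleanest and avoids any heavier machinery.
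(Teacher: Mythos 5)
Your degeneration argument is a genuinely different route from the paper's proof. The paper argues algebraically: it fixes a graded monomial order, shows $\LM(I) \subseteq \LM(I')$ (the leading monomial of $\sum_i g_i f_i$ survives when each $g_i$ is replaced by its appropriate homogeneous component and each $f_i$ by $f_i + R_i$, since the $R_i$ only contribute lower-degree terms), and then invokes the standard facts $\mathbb{V}(\LM(I')) \subseteq \mathbb{V}(\LM(I))$ and $\dim(\mathbb{V}(\LM(J))) = \dim(\mathbb{V}(J))$. You instead realize $\mathbb{V}(I)$ as the $z=0$ slice and $\mathbb{V}(I') \times \F^*$ as the $z \neq 0$ part of the cone $\tilde V$ cut out by the homogenizations $\tilde f_i$, and compare dimensions via a hyperplane section. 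Both are short; the paper leans on Gr\"obner-basis dimension theory, yours on Krull's Hauptidealsatz. (Minor phrasing point: the map $(\vy,\lambda)\mapsto(\lambda\vy,\lambda)$ should be used as an isomorphism of quasi-affine varieties, with inverse $(\vx,z)\mapsto(\vx/z,z)$, not merely a bijection, since set-theoretic bijections do not control dimension.)

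There is, however, a gap in the Krull step as you justify it. The inequality $\dim(\tilde V \cap \{z=0\}) \geq \dim \tilde V - 1$ does not follow from $\tilde V \not\subseteq \{z=0\}$ alone: the Hauptidealsatz bounds the codimension of the components of $X \cap \{z=0\}$ inside an irreducible $X$ only when that intersection is nonempty, and if the top-dimensional component of $\tilde V$ were disjoint from $\{z=0\}$ the inequality could fail. For instance, $\mathbb{V}(xz-1) \subseteq \F^2$ has dimension $1$ but empty intersection with $\{z=0\}$, and adding a lower-dimensional component that does meet the hyperplane gives reducible examples where the intersection is nonempty yet drops dimension by more than one. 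So the condition you flagged (handled by the case split on $\mathbb{V}(I') = \emptyset$) is not the relevant one. Fortunately the fix is already implicit in your setup: each $\tilde f_i$ is homogeneous in $(\vx,z)$, so $\tilde V$ is a cone, and every irreducible component of a cone is itself a cone (for $\lambda \in \F^*$, the closure of the image of $\F^* \times X_i$ under scaling is irreducible, contains $X_i$, and lies in $\tilde V$, hence equals $X_i$), so every component contains the origin, which lies in $\{z=0\}$. Hence a top-dimensional component always meets the hyperplane and Krull gives $\dim(\tilde V \cap \{z=0\}) \geq \dim \tilde V - 1$, after which your chain $\dim \mathbb{V}(I) = \dim(\tilde V \cap \{z=0\}) \geq \dim \tilde V - 1 \geq \dim \mathbb{V}(I')$ goes through. (The degenerate case where some $d_i = 0$, so that the origin need not lie in $\tilde V$, forces $R_i = 0$ and $f_i$ a nonzero constant, making $\mathbb{V}(I') = \emptyset$ and the claim trivial; in the paper's application all $d_i \geq 1$.)
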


\begin{proof}
	The proof follows from the following claim.
	
	
	\begin{claim}\label{clm:leadingMon}
		If $\set{f_1, \ldots, f_k}$ are homogeneous polynomials and $\deg(f_i) = d_i$, and let $I,I'$ as in \autoref{lem:dimVarWithError}. Then $\LM(I) \subseteq \LM(I')$.
	\end{claim}
	\noindent Indeed, by the statement of the claim,
	\begin{align*}
	\LM(I) \subseteq \LM(I') &\implies \mathbb{V}(\LM(I')) \subseteq \mathbb{V}(\LM(I)) & (\text{see, e.g., Section 4.2 of } \cite{CLO07})\\
	& \implies \dim(\mathbb{V}(\LM(I'))) \leq \dim(\mathbb{V}(\LM(I)))\\
	& \implies \dim(\mathbb{V}(I')) \leq \dim(\mathbb{V}(I)). 
	\end{align*}
	The last implication follows from the fact that $\dim(\mathbb{V}(\LM(J))) = \dim(\mathbb{V}(J))$ for any ideal $J$ (see, e.g., Section 9.3 of \cite{CLO07}).
\end{proof}

Thus, to finish the proof of \autoref{lem:dimVarWithError}, we need to prove \autoref{clm:leadingMon}.

\begin{proof}[Proof of \autoref{clm:leadingMon}]
	Let $m \in \LM(I)$ be the leading monomial in $\sum_{i=1}^{k} g_i \cdot f_i$, and let $d = \deg(m)$. Since $f_i$ is homogeneous of degree $d_i$, $m$ continues to be the leading monomial in $\sum_{i=1}^{k} \Hom_{d-d_i}(g_i) \cdot f_i$, and hence in $\sum_{i=1}^{k} \Hom_{d-d_i}(g_i) \cdot (f_i + R_i)$. This shows that $m \in \LM(I')$.
\end{proof}

Combining \autoref{lem:esym homogen case} and \autoref{lem:dimVarWithError}, it is easy to see that \autoref{lem:esym singular locus robust} holds.


\subsection{A degree reduction procedure}
We will now prove a statement similar to \autoref{lem:shrinkage general} for the case of formulas. We start with a few preliminary remarks. It is common to define the size of the formula as the number of leaves (which is, up to a factor of 2, the total number of vertices in the formula). For us it is a bit more convenient to define the size of the formulas as the number of leaves labeled by \emph{variables} (rather than field constants), and prove a lower bound on this measure.

The formal degree of a vertex in the formula is defined by induction in the natural way. If $v$ is a leaf labeled by variable, its formal degree is $1$. If $v$ is a leaf labeled by a field constant, its formal degree is $0$. The formal degree of a sum gate is the maximum between the formal degree of its children, and the formal degree of a product gate is the sum of the formal degrees of its children.

The formal degree of the formula is the formal degree of its output gate.

We start by a simple observation which can be proved by induction on the structure of the formula.

\begin{observation}\label{obs:deg to size}
	The size of a formula is at least as large as its formal degree. 
\end{observation}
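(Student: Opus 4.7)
The plan is to proceed by structural induction on the formula $F$, using the inductive definition of formal degree given just before the observation. Let $|F|$ denote the size (number of variable-labeled leaves) and $\fdeg(F)$ the formal degree of the root.

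For the base case, consider a formula consisting of a single leaf. If the leaf is labeled by a variable, then $|F| = 1 = \fdeg(F)$; if it is labeled by a field constant, then $|F| = 0 = \fdeg(F)$. In both cases $|F| \ge \fdeg(F)$.

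For the inductive step, let the root of $F$ have children $F_1, F_2$, so that $|F| = |F_1| + |F_2|$ (since every variable-leaf of $F$ lies in exactly one of $F_1, F_2$), and assume $|F_i| \ge \fdeg(F_i)$ for $i=1,2$. If the root is a sum gate, then
\[
\fdeg(F) = \max\{\fdeg(F_1),\fdeg(F_2)\} \le \fdeg(F_1)+\fdeg(F_2) \le |F_1|+|F_2| = |F|,
\]
using that formal degrees are nonnegative. If the root is a product gate, then
\[
\fdeg(F) = \fdeg(F_1)+\fdeg(F_2) \le |F_1|+|F_2| = |F|,
\]
again by the induction hypothesis. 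This closes the induction and proves the observation.

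There is no real obstacle here; the only minor subtlety is that the size counts only variable leaves (not constant leaves), which is precisely why the base case for a constant leaf goes through with $\fdeg = 0$, and why in the sum case we use nonnegativity of formal degree to bound the max by the sum.
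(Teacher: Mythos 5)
Your proof is correct and is exactly the structural induction the paper alludes to (it states the observation "can be proved by induction on the structure of the formula" without spelling it out); the base cases and the sum/product gate cases match, including the use of nonnegativity of formal degree to handle the sum gate and the convention that only variable-labeled leaves count toward size.
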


To avoid cumbersome notation, we identify a formula $\Phi$ with the polynomial it computes. For a vertex $v$ in the formula $\Phi$, we denote by $\Phi_v$ subformula rooted at $v$ (and similarly identify $\Phi_v$ with the polynomial it computes).

\begin{lemma}\label{lem: intermediate degree vertex}
	Let $\Phi \in \F[\vx]$ be a formula of formal degree equal to $d$. Then, for every $t \in \N$ such that $2t \leq d$, there is a vertex $v$ in $\Phi$ of formal degree at least  $t$ and at most $2t-1$. 
	
	Moreover, there are polynomials $h, f \in \F[\vx]$ such that  $\Phi \equiv h\cdot \Phi_v + f$ and for every $\gamma \in \F$, the polynomial $\gamma h + f \in \F[\vx]$ can be computed by a formula of size at most $\abs{\Phi} - \abs{\Phi_v}$.  
\end{lemma}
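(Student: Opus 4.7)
The plan is to prove the two claims separately but with a common construction: identify the vertex $v$ by walking down the formula from the root, and then exploit the tree structure of formulas to write $\Phi$ as a polynomial that is linear in $\Phi_v$.

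First, to find $v$, I would do a top-down walk starting at the root. At every visited vertex $u$, I maintain the invariant that $\fdeg(u) \ge t$. If $u$ is a $+$ gate, then $\fdeg(u) = \max(\fdeg(u_1),\fdeg(u_2))$, so I walk to a child of maximal formal degree, preserving the formal degree. If $u$ is a $\times$ gate, then $\fdeg(u) = \fdeg(u_1) + \fdeg(u_2)$, and I walk to a child whose formal degree is at least $\fdeg(u)/2$. So in either step, the formal degree decreases by at most a factor of $2$. Since leaves have formal degree at most $1 < 2t$, the walk must eventually encounter a vertex of formal degree strictly less than $2t$; let $v$ be the first such vertex. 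Its parent $v'$ satisfies $\fdeg(v') \ge 2t$. The parent cannot be a $+$ gate, because then we would have walked to a child of the same formal degree, contradicting $\fdeg(v) < 2t$. Hence $v'$ is a $\times$ gate, and by our choice, $\fdeg(v) \ge \fdeg(v')/2 \ge t$. This gives $t \le \fdeg(v) \le 2t-1$.

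Next, for the decomposition, I would replace the subformula rooted at $v$ by a fresh indeterminate $y$, obtaining a formula $\Phi'(y,\vx)$ with $\Phi'(\Phi_v,\vx) = \Phi(\vx)$. Since $\Phi'$ is a tree and $y$ appears at exactly one leaf, a straightforward induction on the structure of $\Phi'$ (going up from that unique leaf) shows that $\Phi'$ computes a polynomial that is \emph{linear} in $y$, i.e.\ there exist $h,f \in \F[\vx]$ with $\Phi'(y,\vx) = y \cdot h(\vx) + f(\vx)$. Substituting $y = \Phi_v$ yields $\Phi = h \cdot \Phi_v + f$ as required.

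Finally, for any $\gamma \in \F$, the polynomial $\gamma h + f = \Phi'(\gamma,\vx)$ is computed by the formula obtained from $\Phi'$ by relabeling its unique $y$-leaf with the field constant $\gamma$. Since the size of a formula counts only leaves labeled by variables, and this formula is obtained from $\Phi$ by removing the variable leaves of $\Phi_v$ (they are replaced by a single constant leaf), its size is exactly $|\Phi| - |\Phi_v|$, completing the proof. No step here looks like a genuine obstacle; the only care needed is to correctly track that after substituting $y = \gamma$ we genuinely lose the variable leaves of $\Phi_v$ and gain no new ones, which is immediate from the construction.
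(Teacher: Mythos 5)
Your proposal is correct and follows essentially the same route as the paper: the same root-to-leaf walk (always descending to the child of larger formal degree, stopping at the first vertex of formal degree below $2t$ and arguing its parent forces $\fdeg(v)\ge t$), and the same decomposition obtained by replacing $\Phi_v$ with a fresh variable $y$, using linearity in $y$ and then substituting $y=\gamma$ to get the size bound $\abs{\Phi}-\abs{\Phi_v}$. Your explicit case analysis at $+$ versus $\times$ gates is just a slightly more detailed phrasing of the paper's ``child of larger degree has at least half the degree'' step.
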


\begin{proof}
	From the definition of formal degree, recall that for every $+$ gate of formal degree $k$, at least one of its children also has formal degree $k$, and for every $\times$ gate of formal degree $k$, at least one of its children has formal degree at least $\lceil k/2 \rceil$. To get our hands on the vertex $v$ of formal degree in the interval $[t, 2t-1]$ in the formula $\Phi$, we start from the root of $\Phi$ and traverse down, such that in every step if the degree of the current vertex $w$ is at least $2t$, then we traverse down to its child with the larger degree (breaking ties arbitrarily). We stop the first time we reach a vertex $v$ of formal degree at most $2t-1$. The claim is that such a vertex must have formal degree at least $t$.
	
	To see the lower bound of $t$ on the degree of $v$, recall that the formal degree of $\Phi$ is at least $2t$ and by definition, $v$ is the first vertex on the unique path from $v$ to the root of $\Phi$ with formal degree at most $2t-1$. Thus, the parent of $v$ must have formal degree at least $2t$. We also know that at any stage of this walk down from the root, we go from a vertex to its child with the higher formal degree. Therefore, the formal degree of $v$ is at least half of the degree of its parent, i.e., at least $\frac{1}{2} \cdot 2t = t$. 
	
	Consider now that formula obtained from $\Phi$ by replacing all the subtree rooted at $v$ with a leaf labeled by a new variable $y$. This new formula $\Phi'$ computes a polynomial of the form $h \cdot y + f$, and by replacing $y$ with $\Phi_v$ we recover the formula $\Phi$, thus $\Phi \equiv h \cdot \Phi_v + f$.

	The size of $\Phi'$ is $|\Phi| - |\Phi_v| + 1$. For every fixed $\gamma \in \F$ we can actually compute $\gamma h + f$ by replacing the leaf labeled $y$ in $\Phi'$ by the constant $\gamma$ (recall that we do not count leaves labeled by constant in or definition of size).
\end{proof}

We now use \autoref{lem: intermediate degree vertex} inductively to get the following decomposition of a formula. 
\begin{lemma}\label{lem:formula decomposition}
	Let $\Phi$ be a formula of size $s$ and formal degree at most $d$. Then, there exist $k \in \N$,  polynomials $g_1, g_2, \ldots, g_k, h_1, h_2, \ldots, h_k$ and a formula $\Phi'$ such that the following are true. 
	\begin{itemize}
		\item The degree of each $g_i$ is at least $\lfloor d/3 \rfloor$ and at most $2\lfloor d/3 \rfloor -1$ and each $h_i$ has degree at least $1$. 
		\item$\Phi'$ has formal degree at most $2\lfloor d/3 \rfloor$ and size at most $s$.  
		\item The constant term of each $g_i$ and $h_i$ is zero.
		\item There exists a constant $c \in \F$ such that
		\[
		\Phi \equiv \Phi' + \sum_{i = 1}^k g_i h_i  + c \, .
		\] 
		\item $k\lfloor d/3 \rfloor \leq s$. 
	\end{itemize}
\end{lemma}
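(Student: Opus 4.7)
The plan is to iteratively apply \autoref{lem: intermediate degree vertex} with parameter $t := \lfloor d/3 \rfloor$; we may assume $t \ge 1$, for otherwise $d < 3$ and we may take $k = 0$, $\Phi' = \Phi$, and $c = 0$. Initialize $\Psi_0 := \Phi$. While $\fdeg(\Psi_i) \ge 2t$, invoke \autoref{lem: intermediate degree vertex} on $\Psi_i$ to obtain a vertex $v_i$ of formal degree in $[t, 2t-1]$ and a decomposition $\Psi_i \equiv h_i \cdot (\Psi_i)_{v_i} + f_i$. Let $\alpha_i$ be the constant term of $(\Psi_i)_{v_i}$ and set $g_i := (\Psi_i)_{v_i} - \alpha_i$. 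Instantiating the ``moreover'' part of the lemma with $\gamma = \alpha_i$, we obtain a formula $\Psi_{i+1}$ of size at most $\abs{\Psi_i} - \abs{(\Psi_i)_{v_i}}$ that computes $\alpha_i h_i + f_i = \Psi_i - h_i g_i$. Iterate until $\fdeg(\Psi_k) < 2t$.

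For termination and the size bound, \autoref{obs:deg to size} gives $\abs{(\Psi_i)_{v_i}} \ge \fdeg(v_i) \ge t$, so each iteration strictly decreases the formula size by at least $t$. Starting from size at most $s$, the process halts after $k \le s/t$ steps, yielding the claimed bound $k \lfloor d/3 \rfloor \le s$. At termination we have
\[
\Phi \equiv \Psi_k + \sum_{i=0}^{k-1} h_i g_i,
\]
where $\Psi_k$ is a formula of formal degree at most $2 \lfloor d/3\rfloor - 1$ and size at most $s$, and each $g_i$ has zero constant term and is computed by a subformula of formal degree in $[\lfloor d/3 \rfloor, 2\lfloor d/3 \rfloor - 1]$.

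It remains to ensure the $h_i$ also have zero constant terms. Write $h_i = h_i' + \beta_i$ with $\beta_i := h_i(\mathbf{0})$, so that
\[
\sum_{i} h_i g_i = \sum_{i} h_i' g_i + \sum_{i} \beta_i g_i = \sum_{i} h_i' g_i + \sum_{i} \beta_i (\Psi_i)_{v_i} - \sum_{i} \alpha_i \beta_i .
\]
Define $\Phi'$ to be the formula formed by summing $\Psi_k$, scalar-multiplied copies of the subformulas $(\Psi_i)_{v_i}$, and the constant $-\sum_i \alpha_i \beta_i$; then $\Phi'$ computes $\Psi_k + \sum_i \beta_i g_i$, has formal degree at most $2\lfloor d/3 \rfloor$ (since every summand has formal degree at most $2\lfloor d/3\rfloor - 1$), and has size at most $\abs{\Psi_k} + \sum_i \abs{(\Psi_i)_{v_i}} \le s$. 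Therefore $\Phi \equiv \Phi' + \sum_i h_i' g_i + c$ with $c = 0$; indices for which $h_i' = 0$ or $g_i = 0$ are simply omitted from the sum, and the surviving $h_i'$ have degree at least $1$.

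The main obstacle is balancing two competing bookkeeping tasks: reducing formal degree via iteration while keeping the total size bounded by $s$, and simultaneously enforcing zero constant terms on both factors $g_i$ and $h_i$ in each extracted product. The first task is handled cleanly by \autoref{lem: intermediate degree vertex} combined with \autoref{obs:deg to size}; the second is resolved by choosing $\gamma = \alpha_i$ in the lemma (which immediately zeroes the constant term of $g_i$) and then, post-iteration, reinvesting the remainders $\beta_i g_i$ into $\Phi'$, which is legal because each subformula $(\Psi_i)_{v_i}$ already has sufficiently low formal degree.
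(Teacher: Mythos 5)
Your proposal is correct and follows essentially the same route as the paper: iterate \autoref{lem: intermediate degree vertex} with $t=\lfloor d/3\rfloor$ and $\gamma=\alpha_i$, absorb the $\beta_i$-scaled copies of the extracted subformulas into $\Phi'$, and use \autoref{obs:deg to size} to get $k\lfloor d/3\rfloor\le s$. The only cosmetic differences are that you defer the constant-term correction of the $h_i$'s to a single post-processing step and justify the size bound by telescoping rather than the paper's ``disjoint subformulas'' accounting, neither of which changes the argument.
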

\begin{proof}
	For brevity, let us assume that $d$ is divisible by $3$, and let $t = d/3$. The proof of the lemma essentially follows from a repeated application of \autoref{lem: intermediate degree vertex}, where we keep extracting vertices of degree at least $t$ till the total degree becomes smaller than $2t$.  We now give the details. 
	
	Since, the formal degree of $\Phi$ is at least $2t$, from \autoref{lem: intermediate degree vertex}, we know that there is a vertex $v_1$ in $\Phi$ with formal degree in the interval  $[t, 2t-1]$ and polynomials $h'_1$ and $f_1$ satisfying the following properties. 
	\begin{itemize}
		\item $\Phi \equiv h'_1\cdot \Phi_{v_1} + f_1 $. 
		\item For every $\gamma \in \F$, $\gamma h'_1 + f_1$ can be computed by a formula of size at most $\abs{\Phi} - \abs{\Phi_{v_1}}$. 
	\end{itemize}
	Let $g'_1 = \Phi_{v_1}$ and $s_1 = |\Phi_{v_1}|$. Let $\alpha, \beta \in \F$ be the constant terms of $g'_1$ and $h'_1$ respectively, and let $g_1$ and $h_1$ be polynomials with constant term zero such that $g'_1 = g_1 + \alpha$ and $h'_1 = h_1 + \beta$. We know that  $\Phi \equiv h'_1\cdot g'_1 + f_1 $, or in other words, $\Phi \equiv (h_1 + \beta)\cdot ({g}_1 + \alpha) + f_1 $. Rearranging the terms, we get 
	\[
	\Phi \equiv g_1h_1 + \beta (g_1 + \alpha) + \alpha(h_1 + \beta) + f_1 - \alpha\beta \, . 
	\]
	This can be re-written as 
	\[
	\Phi \equiv g_1h_1 + \beta g'_1 + \left(\alpha h'_1 + f_1 \right)+ \alpha\beta \, . 
	\]
	Observe that $\alpha h'_1 + f_1$ can be computed by a formula $\Phi_1$ of size at most $s-s_1$. Also, the constant terms of $g_1, h_1$ are both zero and $\beta g'_1$ is computable by a formula $\Phi'_{v_1} \equiv \beta\cdot \Phi_{v_1}$ of formal degree at most $2t-1$ and size $s_1$ (which is also a sub-formula of $\Phi$). In a nutshell, in one application of \autoref{lem: intermediate degree vertex}, we have decomposed $\Phi$ into a sum of a formula $\Phi_{v_1}'$ of size $s_1$ and formal degree at most $2t-1$, a formula $\Phi_1$ of size at most $s-s_1$, a constant term and a product of two constant free polynomials $g_1h_1$, where $g_1$ has degree at least $t$ and $h_1$ has degree at least $1$. To see the lower bound on the degree of $h_1$, note that if $h_1$ is of  degree zero, then it must be identically zero (since it is constant free), and this term just vanishes. 
	
	Now, consider the formula $\Phi_1$. If the formal degree of $\Phi_1$ is at most $2t-1$, then we are already done by letting $\Phi'$ in the statement of the theorem be the sum of $\Phi_1$ and $\Phi'_{v_1}$ (indeed, the size of $\Phi'$ is at most $(s-s_1) + s_1 = s$). Else, observe that the size of $\Phi_1$ is strictly smaller than the size of $\Phi$. All the items of the lemma (except the last item) now follow from a simple induction on the formula size. 
	
	To see the upper bound on $k$ given by the last item, note that the formulas for the polynomials  $g_1', g_2', \ldots, g_k'$ obtained in each of the steps, whose sizes are $s_1, \ldots, s_k$, respectively, are all disjoint subformulas of $\Phi$ and have formal degree at least $t$ each. Thus, by \autoref{obs:deg to size}, $s_i \ge t$ for all $i \in [k]$, and hence, the size of $\Phi$ is at least $kt = k \lfloor d/3\rfloor$. 
\end{proof}

\subsection{A quadratic lower bound for formulas}
We are now ready to prove the lower bound on the formula size computing $\esym_{(n, 0.1n)}$. As was the case with ABPs, before we prove \autoref{thm:formula lb}, we need to prove the lower bound for formulas whose formal degree is at most $0.1n$.

\begin{theorem}\label{thm:robustHomogForm}
	Let $n \in \N$, and let $\F$ be a field of characteristic greater than $0.1n$. Let $A_1(\vx), \ldots, A_r(\vx), B_1(\vx),  \ldots, B_r(\vx)$ and $R(\vx)$ be polynomials such that for every $i$, $A_i(\mathbf{0}) = B_i(\mathbf{0}) = 0$ and $R$ is a polynomial of degree at most $0.1 n-1$. Then, for every $r \leq 0.1n$, any formula over $\F$, of formal degree at most $0.1n$ which computes the polynomial 
	\[\esym_{(n, 0.1n)} + \sum_{j = 1}^r A_j\cdot B_j + R\]
	has at least $0.001 n^2$ vertices.
\end{theorem}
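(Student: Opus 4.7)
The plan is to adapt the proof of \autoref{thm:robustHomog} to the formula setting, replacing \autoref{clm:singular locus power sym} with the elementary-symmetric analog \autoref{lem:esym singular locus robust}. The argument has three parts: first certify that $\fdeg(\Phi)$ must equal $0.1n$ exactly, then convert $\Phi$ into an unlayered ABP preserving formal degree and size up to constants, and finally apply \autoref{lem:structure} together with a variety-dimension argument at each formal-degree level.

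\textbf{Forcing $\fdeg(\Phi) = 0.1n$.} Suppose $\fdeg(\Phi) < 0.1n$. Setting $F := \esym_{(n, 0.1n)} + \sum_j A_j B_j + R$, the polynomial $F$ has degree at most $0.1n - 1$, so $\esym + \sum_j A_j B_j = F - R$ with the right-hand side of degree at most $0.1n - 1$. The variety $V = \mathbb{V}(A_1, \ldots, A_r, B_1, \ldots, B_r)$ contains $\mathbf{0}$ and hence satisfies $\dim V \ge n - 2r$. At any $\va \in V$ all $A_j, B_j$ vanish, so differentiating the above identity in $x_k$ yields $\partial_{x_k} \esym(\va) = \partial_{x_k}(F - R)(\va)$, with $\partial_{x_k}(F - R)$ a polynomial of degree at most $0.1n - 2$. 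Invoking \autoref{lem:esym singular locus robust} with $d = 0.1n$ then gives $\dim V \le 0.1n - 2$, forcing $r \ge 0.45n + 1$, which contradicts $r \le 0.1n$.

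\textbf{Reducing to an ABP and invoking \autoref{lem:structure}.} The standard formula-to-ABP construction (leaves become single edges labeled by the leaf's symbol, sum gates merge sources and sinks of sub-ABPs, product gates identify sink of one with source of the next) converts $\Phi$ into an unlayered ABP $\mathcal{B}$ with edge labels of degree at most $1$, formal degree $0.1n$, and vertex count within an additive $O(1)$ of $|\Phi|$. Apply \autoref{lem:structure} with $\Delta = 1$: for each $i \in \{1, \ldots, 0.1n - 1\}$, letting $S_i$ denote the set of formal-degree-$i$ vertices of $\mathcal{B}$, the lemma furnishes polynomials $Q_u, R_i$ of degree at most $0.1n - 1$ with $F = \sum_{u \in S_i} [s,u] \cdot Q_u + R_i$. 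Extracting the constant terms from each $[s, u]$ and $Q_u$ exactly as in the proof of \autoref{thm:robustHomog} rewrites this as $F = \sum_{u \in S_i} P_u \cdot Q'_u + \tilde{R}_i$ with $P_u, Q'_u$ constant-free and $\deg \tilde{R}_i \le 0.1n - 1$.

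\textbf{Bounding $|S_i|$ and concluding.} Set $V_i = \mathbb{V}(\{P_u, Q'_u : u \in S_i\} \cup \{A_l, B_l : l \in [r]\})$. Since $\mathbf{0} \in V_i$, $\dim V_i \ge n - 2(|S_i| + r)$. Differentiating $\esym + \sum_l A_l B_l + R = \sum_u P_u Q'_u + \tilde{R}_i$ and evaluating at any $\va \in V_i$ kills every product term, leaving $\partial_{x_k} \esym(\va) = \partial_{x_k}(\tilde{R}_i - R)(\va)$ with the right-hand side of degree at most $0.1n - 2$. By \autoref{lem:esym singular locus robust}, $\dim V_i \le 0.1n - 2$, so $|S_i| \ge 0.45n + 1 - r \ge 0.35n + 1$. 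The $S_i$'s are disjoint across $i$, so $|\mathcal{B}| \ge \sum_{i=1}^{0.1n - 1} |S_i| \ge (0.1n - 1)(0.35n + 1) \ge 0.03 n^2$ for $n$ large, and hence $|\Phi| \ge 0.001 n^2$. The main obstacle is the first step: without the cancellation argument ruling out $\fdeg(\Phi) < 0.1n$, the sum in the last step runs over only $\fdeg(\Phi) - 1$ values of $i$, which can be too few to reach $\Omega(n^2)$. Once $\fdeg(\Phi) = 0.1n$ is in hand, the remainder is a near-verbatim port of the ABP proof with \autoref{lem:esym singular locus robust} in place of \autoref{clm:singular locus power sym}.
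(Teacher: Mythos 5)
Your proposal is correct, but it takes a genuinely different route from the paper's own proof. The paper never leaves the formula world: it proves a formula-specific decomposition (\autoref{lem:formula decomposition}, built by repeatedly applying \autoref{lem: intermediate degree vertex}) that extracts $k$ disjoint subformulas of formal degree roughly $\lfloor d/3\rfloor$ each, writes the computed polynomial as $R' + \sum_{i=1}^{k} g_i h_i + c$ with constant-free $g_i,h_i$, and then applies the variety-dimension argument (\autoref{lem:esym singular locus robust}) exactly once; the quadratic bound comes from $s \ge k\lfloor d/3\rfloor$ together with $k \ge 0.35n$, and no control on the exact formal degree is needed. You instead convert the formula into an unlayered series-parallel ABP and rerun the argument of \autoref{thm:robustHomog}: \autoref{lem:structure} does apply to unlayered ABPs (the paper's caveat about unlayeredness concerns the depth-reduction machinery behind \autoref{thm:main ABP}, not this lemma), you get $|S_i| \ge 0.45n+1-r \ge 0.35n$ at each formal-degree level, and you sum over the roughly $0.1n$ disjoint levels; this per-level counting is what forces your extra preliminary step ruling out $\fdeg(\Phi) < 0.1n$, which you handle correctly and which the paper's route avoids entirely. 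Your approach buys a wholesale reuse of the ABP machinery and in fact gives a slightly more general intermediate statement (a robust lower bound for unlayered ABPs of formal degree at most $0.1n$ computing $\esym_{(n,0.1n)}$ plus error), while the paper's approach yields a self-contained formula lemma that is then reused verbatim in the degree-reduction iteration for \autoref{thm:main formula} (see \autoref{clm:oneStepForm}). Two small wrinkles to smooth in a write-up: (i) \autoref{lem:structure} is stated for an ABP whose computed polynomial has degree equal to its formal degree $d$, so either note that your first step equally shows $\deg F = 0.1n$ (the same variety argument gives a contradiction whenever $\deg F \le 0.1n-1$), or remark that the proof of \autoref{lem:structure} only uses the upper bound on the formal degree; (ii) the paper's formula-size convention counts variable-labeled leaves, so when transferring the bound back you should bound the ABP's vertex count by the total number of leaves plus one, which is at most the number of formula vertices plus one---that suffices for the theorem's conclusion about vertices, whereas ``within an additive $O(1)$ of $|\Phi|$'' is not literally accurate under the paper's size measure if there are many constant-labeled leaves.
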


\begin{proof}
	Let $\mathcal{F}$ be a size $s$ formula that computes a polynomial of the form 
	\[\esym_{(n, 0.1n)} + \sum_{j = 1}^r A_j\cdot B_j + R\]
	where $A_1(\vx), \ldots, A_r(\vx), B_1(\vx),  \ldots, B_r(\vx)$ and $R(\vx)$ are as in the statement of the lemma.
	
	By \autoref{lem:formula decomposition}, we know that the polynomial computed by $\mathcal{F}$ has the form
	\[
		R' + \sum_{i=1}^{k} g_i h_i + c
	\]
	where $R'$ has degree at most $2 \floor{0.1n/3} \leq 0.1n -1$, the constant terms of $g_i$ and $h_i$ are zero, and $s \geq k \cdot \floor{0.1n/3}$.
	
	\noindent Thus,
	\[
		\esym_{(n, 0.1n)} + \sum_{j = 1}^r A_j\cdot B_j + R = R' + \sum_{i=1}^{k} g_i h_i + c \implies \esym_{(n, 0.1n)} + R'' = \sum_{i=1}^{k} g_i h_i - \sum_{j = 1}^r A_j\cdot B_j 
	\]
	where $R'' = R - R' -c$ has degree at most $0.1n -1$.
	This shows that if we consider 
	\[
		V = \mathbb{V}\inparen{\setdef{\partial_{x_i} \esym_{(n, 0.1n)} - \partial_{x_i} R''}{i \in [n]}} \qquad V' = \mathbb{V} \inparen{\setdef{g_i, h_i, A_j, B_j}{i \in [k], j \in [r]}},
	\]
	then $V' \subseteq V$ and $V' \neq \emptyset$. Now by \autoref{lem:esym singular locus robust}, we know that $\dim(V) \leq 0.1n-2$ and since $V' \neq \emptyset$, $\dim(V') \geq n - 2k - 2r$ (see, e.g., Section 2.8 of \cite{Smi14}). Hence,
	\[
		n-2k-2r \leq 0.1n - 2 \implies k \geq \frac{n-0.1n-2r+2}{2} = 0.45n - r +1 \geq 0.35n.
	\]
	This implies that $s \geq 0.35n \cdot \floor{\frac{0.1n}{3}} \geq \frac{0.035 n^2}{6} \geq 0.001n^2$. 
\end{proof}

Finally, let us recall the main theorem of this section and complete its proof. 

\mainFormThm*

\begin{proof}
	Let $\mathcal{F}$ be a formula with formal degree $d_0$ which computes the polynomial $\esym_{(n, 0.1n)}$. We assume without loss of generality that the underlying field $\F$ is algebraically closed. If $d_0$ is at most $0.1n$, then by \autoref{thm:robustHomogForm}, we know that $\abs{\mathcal{F}}$ is at least $\Omega(n^2)$ and we are done. Also, if $d_0 > n^2$, then we again have the required lower bound by \autoref{obs:deg to size}. Thus, we may assume that $0.1n \leq d_0 \leq n^2$. 
	
	From this point, the proof is exactly along the same lines as that of \autoref{thm:main ABP}. We iteratively make changes to $\mathcal{F}$, reducing its formal degree geometrically in each step, via \autoref{lem:formula decomposition}, till we get a formula $\mathcal{F}'$ of formal degree at most $0.1n$ that computes a polynomial of the form 
	\[\esym_{(n, 0.1n)} + \sum_{j = 1}^r A_j\cdot B_j + R\]
	where $r \leq 0.1n$ and $A_1(\vx), \ldots, A_r(\vx), B_1(\vx), \ldots, B_r(\vx), R(\vx)$ are polynomials such that for every $i$, $A_i(\mathbf{0}) = B_i(\mathbf{0}) = 0$ and $R$ has degree at most $0.1n - 1$. 
	Once we have this, \autoref{thm:robustHomogForm} gives the required lower bound. As before, in one step we ensure the following.
	
	\begin{claim}\label{clm:oneStepForm}
		Let $\mathcal{F}_k$ be a formula with formal degree $d_k > 0.1 n$ and size at most $\tau$, that computes a polynomial of the form $\esym_{(n,0.1n)} + \sum_{j = 1}^{r_k} A_j\cdot B_j + R$ where $A_1(\vx), \ldots, A_r(\vx), B_1(\vx), \ldots, B_r(\vx), R(\vx)$ are polynomials such that for every $j$, $A_j(\mathbf{0}) = B_j(\mathbf{0}) = 0$ and $R$ has degree at most $0.1n-1$.
		
		If $\tau \leq 0.001 n^2$, then there exists a formula $\mathcal{F}_{k+1}$ with formal degree at most $2 d_k/3$ and size at most $\tau$ which computes a polynomial of the form 
		\[\esym_{(n, 0.1n)} + \sum_{j = 1}^{r_{k+1}} A'_j\cdot B'_j + R'\, ,\]
		such that $r_{k+1} \leq  r_k + \frac{0.0033 n^2}{d_k}$ and $A'_1(\vx), \ldots, A'_{r_{k+1}}(\vx), B'_1(\vx), \ldots, B'_{r_{k+1}}(\vx), R'(\vx)$ are polynomials such that for every $i$, $A'_i(\mathbf{0}) = B'_i(\mathbf{0}) = 0$ and $R'$ has degree at most $0.1n-1$.
	\end{claim}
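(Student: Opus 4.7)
The plan is to apply \autoref{lem:formula decomposition} directly to $\mathcal{F}_k$ and check that the resulting identity, after absorbing a constant into $R$, has the shape required by the claim. The bulk of the work has already been packaged into \autoref{lem:formula decomposition}, and what remains is essentially careful bookkeeping on the number and structure of the error terms produced in one step.

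First I would invoke \autoref{lem:formula decomposition} on $\mathcal{F}_k$, which has formal degree $d_k > 0.1n$ and size at most $\tau$. This produces an integer $k'$, polynomials $g_1, h_1, \ldots, g_{k'}, h_{k'}$ each with zero constant term, a field constant $c \in \F$, and a formula $\Phi'$ of formal degree at most $2\lfloor d_k/3 \rfloor \le 2 d_k/3$ and size at most $\abs{\mathcal{F}_k} \le \tau$, satisfying $\mathcal{F}_k \equiv \Phi' + \sum_{i=1}^{k'} g_i h_i + c$ together with the size bound $k' \cdot \lfloor d_k/3 \rfloor \le \abs{\mathcal{F}_k} \le \tau$.

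Next I would set $\mathcal{F}_{k+1} := \Phi'$, so that $\mathcal{F}_{k+1} \equiv \mathcal{F}_k - \sum_{i=1}^{k'} g_i h_i - c$. Substituting the hypothesis on $\mathcal{F}_k$, the formula $\mathcal{F}_{k+1}$ computes $\esym_{(n,0.1n)} + \sum_{j=1}^{r_k+k'} A'_j B'_j + R'$ where I take $(A'_j, B'_j) = (A_j, B_j)$ for $j \le r_k$, $(A'_{r_k+i}, B'_{r_k+i}) = (-g_i, h_i)$ for $i \in [k']$, and $R' := R - c$. All the new summands inherit the zero-constant-term property from \autoref{lem:formula decomposition}, and $\deg(R') \le \deg(R) \le 0.1n - 1$ since I only subtracted a field constant.

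Finally, to bound $r_{k+1} = r_k + k'$ I use $k' \le \tau / \lfloor d_k/3 \rfloor$. Since $d_k > 0.1n$, for $n$ large enough $\lfloor d_k/3 \rfloor \ge d_k/3 - 1 \ge (1 - O(1/n)) \cdot d_k/3$, so $k' \le 3\tau/d_k \cdot (1 + o(1)) \le 0.0033\, n^2 / d_k$ using $\tau \le 0.001\, n^2$. I do not expect a substantive obstacle: the only mild subtlety is that the floor in $\lfloor d_k/3 \rfloor$ costs a small constant factor, which we absorb by stating the bound as $0.0033\, n^2/d_k$ rather than the naive $0.003\, n^2/d_k$; everything else is direct substitution and relabeling.
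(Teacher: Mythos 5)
Your proposal is correct and follows essentially the same route as the paper: invoke \autoref{lem:formula decomposition} on $\mathcal{F}_k$, take $\mathcal{F}_{k+1} = \Phi'$, fold the constant into $R'$, relabel the $g_ih_i$ terms as new error products (you even handle the sign via $-g_i$, which the paper elides), and bound their number by $\tau/\lfloor d_k/3\rfloor \le 0.0033 n^2/d_k$ for large $n$. No gaps.
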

	
	Before proving \autoref{clm:oneStepForm}, we first complete the proof of \autoref{thm:formula lb} assuming the claim. Set $\mathcal{F}_0 = \mathcal{F}$. Then, $\mathcal{F}_0$ is a formula with formal degree $d_0$ and size at most $\tau$ that computes the polynomial $\esym_{(n, 0.1n)}$. If $\tau \geq 0.001 n^2$, the statement of the theorem follows. 
	
	Otherwise, applying \autoref{clm:oneStepForm} iteratively as long as the formal degree remains $> 0.1n$, we eventually get a formula $\mathcal{F}'$ with formal degree $d' \leq 0.1n$. Let the number of steps taken be $K$, and so let $\mathcal{F}_K = \mathcal{F}'$. Further, let $d_0, \ldots, d_{K-1},d_K$ denote the formal degree of each formula in this sequence. Thus $d_{K-1} > 0.1n$, and $d_k \le 2 d_{k-1}/3$ for every $k \in [K]$. Now $\mathcal{F}'$ has formal degree at most $0.1n$, size at most $\tau$, and computes a polynomial of the form
	\[\esym_{(n, 0.1n)} + \sum_{j = 1}^{r_K} A_j\cdot B_j + R \, ,\]
	where $A_1(\vx), \ldots, A_r(\vx), B_1(\vx), \ldots, B_r(\vx), R(\vx)$ are polynomials such that for every $i$, $A_i(\mathbf{0}) = B_i(\mathbf{0}) = 0$ and $R$ has degree at most $0.1n - 1$. Further, the number of error terms, $r_K$, is at most 
	\[
	0.0033 n^2 \left( \frac{1}{d_{K-1}} + \frac{1}{d_{K-2}} + \cdots + \frac{1}{d_0} \right).
	\]
	Since $d_k \le \frac{2}{3} \cdot d_{k-1}$, we have that $\frac{1}{d_{k-1}} \le \frac{2}{3} \cdot \frac{1}{d_k}$ for all $k \in [K]$. This implies that
	\[
	r_K \le 0.0033 n^2 \cdot \frac{1}{1-2/3} \cdot \frac{1}{d_{K-1}} = \frac{0.0099n^2}{0.1n} = 0.099n \le 0.1n
	\]
	as $d_{K-1} \ge 0.1n$.	
	
	\noindent Finally, since the formal degree is at most $0.1n$, using \autoref{thm:robustHomogForm} we get 
	\[
	\tau \geq \abs{\mathcal{F}'} \geq 0.001 n^2 = \Omega(n^2)\, . \qedhere 
	\]
	We now prove \autoref{clm:oneStepForm}. Again the proof is exactly along the lines of \autoref{clm:oneStep}.
	\begin{proof}[Proof of \autoref{clm:oneStepForm}]
		Let $\mathcal{F}_k$ be a formula with formal degree $d_k > 0.1 n$ and size at most $\tau$, that computes a polynomial of the form $\esym_{(n,0.1n)} + \sum_{j = 1}^{r_k} A_j\cdot B_j + R$. Here $R$ has degree at most $0.1n-1$ and $A_1(\vx), \ldots, A_r(\vx), B_1(\vx), \ldots, B_r(\vx)$ are polynomials such that for every $j$, $A_j(\mathbf{0}) = B_j(\mathbf{0}) = 0$.
		Then, for $\Phi = \mathcal{F}_k$, \autoref{lem:formula decomposition} says that there exists a formula $\mathcal{F}_{k+1} = \Phi'$ of formal degree at most $2d_k/3$ and size at most $\tau$ that computes a polynomial of the form
		\[
		\esym_{(n,0.1n)} + \sum_{j = 1}^{r_k} A_j\cdot B_j + R - \sum_{i=1}^{\ell_k} g_i \cdot h_i + c \, , 
		\]
		where $\ell_k \leq \frac{\tau}{\floor{d_k/3}} \leq \frac{3.3 \times 0.001 n^2}{d_k} = \frac{0.0033 n^2}{d_k}$ for large enough $n$. Further, $g_1, \ldots, g_{\ell_k}, h_1, \ldots, h_{\ell_k}$ are a set of non-constant polynomials with constant term zero and $c \in \F$. Thus, if we set $R' = R+c$, $A'_i = A_i, B'_i = B_i$ for every $i \in [r_k]$ and $A'_{r_k+j} = g_j, B'_{r_k+j} = h_j$ for every $j \in [\ell_k]$, the claim follows.	
	\end{proof}
\end{proof}

\section{Open problems}
We conclude with some open problems. 
\begin{itemize}
\item A natural open question here is to prove an improved lower bound for unlayered algebraic branching programs. In particular, in the absence of an obvious  non-trivial upper bound,  it seems reasonable to conjecture that any unlayered ABP computing the polynomial $\sum_{i = 1}^n x_i^n$  has size at least $\Omega(n^{2-o(1)})$. 
\item Yet another question which is natural in the context of this work and remains open is to prove stronger lower bounds for ABPs. As a first step towards this, the question of proving super-quadratic lower bound for homogeneous algebraic formulas might be more approachable.
 
\end{itemize}

\section*{Acknowledgements}
We are thankful to Ramprasad Saptharishi for helpful discussions at various stages of this work and to Vishwas Bhargava for asking us whether the ABP lower bound in \autoref{thm:main ABP} also gives an $\Omega(n^2)$ lower bound for algebraic formula for any multilinear polynomial, and for many subsequent discussions.

We are also thankful to Nutan Limaye for telling us about the results in \cite{LMP19} and letting us use \autoref{lem:esym singular locus robust} in our proof of \autoref{thm:main formula}.  

\bibliographystyle{customurlbst/alphaurlpp}
\bibliography{masterbib/references,masterbib/crossref}
\end{document}